\documentclass[10pt, onefignum, onetabnum]{siamonline190516}
\usepackage{amsmath}
\usepackage{amsfonts}
\usepackage{amssymb}
\usepackage{mathtools}
\usepackage{enumitem}
\usepackage{bm}
\usepackage{bbm}
\usepackage{cleveref}
\usepackage{graphicx}
\usepackage{subcaption}
\usepackage{caption}
\usepackage{tikz}
\usepackage{hyperref}
\usepackage{cleveref}
\usepackage{wrapfig}
\usepackage{enumitem}
\usepackage{algorithm}
\usepackage{algorithmicx}
\usepackage[noend]{algpseudocode}
\usepackage{setspace}
\usepackage{microtype}
\usepackage{amscd}
\usepackage{mathrsfs}
\usepackage{stmaryrd}
\usepackage{tensor}
\usepackage{braket}
\usepackage{xfrac}
\usepackage{array}
\usepackage{booktabs}
\usepackage{siunitx}
\usepackage{cancel}
\usepackage{thmtools}
\usepackage{mdframed}
\usepackage{pgfplots}
\usepackage{tikz-cd}
\usepackage[version=4]{mhchem}
\usepackage{physics}
\usepackage{tikz}
\usepackage{tikz-cd}
\newcommand{\tthanks}[1]{%
  \begingroup
  \renewcommand{\thefootnote}{\dagger}%
  \footnotemark
  \footnotetext{#1}%
  \endgroup
}

\newsiamremark{remark}{Remark}
\newsiamremark{example}{Example}
\newsiamremark{result}{Result}
\newsiamremark{setting}{Setting}
\newsiamremark{conjecture}{Conjecture}
\newcommand{\cso}{\operatorname{CSO}}
\newcommand{\co}{\operatorname{CO}}
\newcommand{\flip}{\mathfrak{R}} 
\newcommand{\falign}[2]{\sigma(#1\,;#2)}
\newcommand{\falignlim}[2]{\overline{\sigma}(#1\,;#2,\tau)}
\newcommand{\falignlimgen}[3]{\overline{\sigma}(#2\,;#3,#1)}
\newcommand{\faligndata}[2]{\overline{\sigma}(#1\,;\mathcal{X}_N)}
\newcommand{\losslim}[2]{\mathcal{L}(#1\,; #2, \tau)}
\newcommand{\lossdata}[2]{\mathcal{L}(#1 \, ;#2)}
\newcommand{\losslimgen}[3]{\mathcal{L}(#2\,;#3,#1)}
\DeclareMathOperator*{\supp}{supp}
\renewcommand{\bullet}{\begin{tikzpicture}\draw[fill=black](0,0)++(0,0.2em) circle (1.2 pt);\end{tikzpicture}}
\DeclareMathOperator*{\argmin}{arg\,min}
\DeclareMathOperator*{\argmax}{arg\,max}

\headers{A New Approach for Multireference Alignment}{Vahid Shahverdi, Emanuel Ström, and Joakim And\'en}

\title{Moment Constraints and Phase Recovery \\
        for Multireference Alignment}

\author{Vahid Shahverdi\tthanks{Department of Mathematics, KTH Royal Institute of Technology, Stockholm, Sweden 
(\email{vahidsha@kth.se}, \email{emastr@kth.se}, \email{janden@kth.se}).}
\and Emanuel Ström$^\dagger$
\and Joakim Andén$^\dagger$}

\date{}     
\begin{document}
\maketitle
\begin{abstract}
\bfseries
{Multireference alignment (MRA) refers to the problem of recovering a signal from noisy samples subject to random circular shifts. Expectation--maximization (EM) and variational approaches use statistical modeling to achieve high accuracy at the cost of solving computationally expensive optimization problems. The method of moments, instead, achieves fast reconstructions by utilizing the power spectrum and bispectrum to determine the signal up to shift. Our approach combines the two philosophies by viewing the power spectrum as a manifold on which to constrain the signal. We then maximize the data likelihood function on this manifold with a gradient-based approach to estimate the true signal. Algorithmically, our method involves iterating between template alignment and projections onto the manifold. The method offers increased speed compared to EM and demonstrates improved accuracy over bispectrum-based methods.}
\end{abstract}

\begin{keywords}multireference alignment, signal reconstruction, nonlinear optimization, critical point theory, cryo-EM, maximum-likelihood estimation
\end{keywords}
\begin{AMS}
94A12, 92C55, 62F12, 68U10, 90C30, 58C25, 58E05
\end{AMS}

\section{Introduction}
\label{sec:intro}
\emph{Multireference alignment} (MRA) is a class of inverse problems, in which the task is to estimate a signal from observations (references) that have undergone a random transformation and possibly under measurement noise. The key distinction between MRA as opposed to other inverse problems is that the transformations are invertible, and in particular, they can be represented as random elements of some group acting on the data. 

MRA is often viewed as a simplified version of single-particle cryo-electron microscopy (cryo-EM). In cryo-EM, the objective is to estimate a three-dimensional object e.g., a molecule from two-dimensional noisy projections taken from various (unknown) angles, which correspond to the unknown transformation in MRA. While retaining some of the original complexities of the full cryo-EM problem, MRA omits the lossy projection operator, which allows for a more straightforward analysis of the estimation problem. The fundamental mathematical principles of cryo-EM are explored in \cite{singer2018mathematics,bendory2020single,relion}. Direct applications of MRA also appear in standard cryo-EM image processing tasks, such as class averaging~\cite{sorzano2010clustering, ma2020heterogeneous}.

In this work, we consider specifically the classical one-dimensional MRA problem. Formally, we denote the clean signal by $x \in \mathbb{R}^L$ and let $N$ be the number of observations. The observations are then given by applying the shifts $r_1, r_2, \ldots, r_N \in \{0, 1, \ldots, L-1\}$ to $x$ and adding the noise terms $\epsilon_1, \epsilon_2, \ldots, \epsilon_N$ to give 
\begin{equation}
\label{eq:observ-sig}
\xi_i = \sigma_{r_i}(x) + \epsilon_i,
\end{equation}
where $\sigma_{r_i}$ denotes a circular translation applied to $x$ by $r_i$ units, defined as $\sigma_{r_i}(x)[n] = x[n - r_i]$ (indices of $x$ are here considered modulo $L$). Furthermore, we suppose that the $\epsilon_i$s are iid normal random vectors with variance $\tau^2$. Multireference alignment is a somewhat outdated name, since it implies that a key component of the model is to align the references -- a task that is virtually impossible in high-noise scenarios -- many approaches to MRA therefore try to circumvent alignment altogether~\cite{bendory2017bispectrum}. Broadly speaking, approaches for addressing MRA can be categorized into two types: those that explicitly align signals as a part of the reconstruction, and those that do not. We refer to these as \emph{aligning} and \emph{non-aligning}.

\paragraph{Aligning Methods}
The first type of algorithms focus on approximating the shifts ${r_i}$ for each of the observed signals $\xi_i$, aligning them, and subsequently averaging to obtain an approximation of the true signal up to shift.

One example of an aligning-type method is \emph{template alignment}. This method aligns the observed signals $\xi_i$ against some reference signal $z$, the \emph{template}. The alignment is achieved by averaging the signals $\sigma_{r^*_i}(\xi_i)$, where $r^*_i$ is given by
\begin{equation}
\label{eq:find_shift}
r^*_i = \underset{k}{\operatorname{argmax}} \left(\sum \limits_{m=0}^{L-1} z[m]\xi_i[m+k]\right), \quad i=1,\ldots,N.
\end{equation}
The shift estimates $r^*_i$ can be computed efficiently in $\mathcal{O}(L\log L)$ time using a fast Fourier transform (FFT) and the aligned observations can then be averaged in $\mathcal{O}(NL \log L)$ time. 

Template alignment is one of the fastest methods for multireference alignment, and is trivial to parallelize. However, it does not consider the alignment quality between pairs of aligned data, only between data and template. As a result, it is sensitive to the choice of template, and produces biased reconstructions even under moderate signal-to-noise ratios (SNR). Some standard approaches to mitigate the template selection bias include iterated template alignment, which requires multiple passes over the data, and more advanced methods such as \cite{Kosir}, which updates the template throughout a single alignment procedure.

Angular synchronization is an aligning method that improves on the shortcomings of template alignment by modeling the optimal relative shifts between all pairs of signals as noisy measurements of the true shifts \cite{singer2011angular}. The maximum likelihood estimate (MLE) of the shifts then corresponds to a least-squares problem on the circle group. Its solution can be efficiently approximated using spectral methods or semidefinite programming. Angular synchronization is more robust to noise compared to template alignment. However, it is more computationally demanding and scales poorly in the number of observations $N$ due to the need to process all observation pairs.

Variational methods have also been used to jointly estimate $x$ and $\{r_i\}_{i=1}^N$ by minimizing the negative log-likelihood of the data $\mathcal{X}_N=\{\xi_i\}_{i=1}^N$. This produces the MLE of $x$:
\begin{equation}
    \min_{x,\{r_i\}_{i=1}^N}-\log p(\mathcal{X}_N\mid x, \{r_i\}_{i=1}^N) = \min_{x,\{r_i\}_{i=1}^N}\sum_{i=1}^N \|\xi_i - \sigma_{r_i}(x)\|^2 + \mathrm{const}\label{eq: mle}
\end{equation}
One can easily determine $x$ given $\{r_i\}_{i=1}^N$, but recovering the shifts is NP-hard. Relaxation methods combined with semidefinite programming can achieve good results \cite{singer2014relaxed}.

\paragraph{Non-aligning Methods}
There are several methods that attempt to circumvent the challenge of shift estimation by recovering the true signal $x$ directly. One of the most successful examples of non-aligning type methods is the expectation--maximization (EM) algorithm. This method aims to calculate the marginalized maximum likelihood estimator~(MMLE) for the signal, wherein marginalization occurs over shifts in~\cref{eq: mle}. This approach mitigates the challenge of alignment by operating not on estimates of shifts directly but on estimates of their probability distributions. EM demonstrates outstanding numerical performance~\cite{bendory2017bispectrum} but is very demanding from a computational perspective, and a good theory describing its convergence behavior is lacking.

Another non-aligning method for tackling this problem is the method of moments, which estimates the moments of the true signal from the observed signals $\xi_i$ and uses these to reconstruct $x$.
Typically, second- and third-order moments are used, which uniquely describe the clean signal up to a circular shift. 
In \cite{bendory2017bispectrum}, it is shown that approximation of these moments from $\xi_i$ yields an estimation of the true signal.
Unfortunately, the computation of the bispectrum is considerably sensitive to noise, so the method is less reliable than EM.

\begin{figure}
    \centering
    \includegraphics[width=0.9\textwidth]{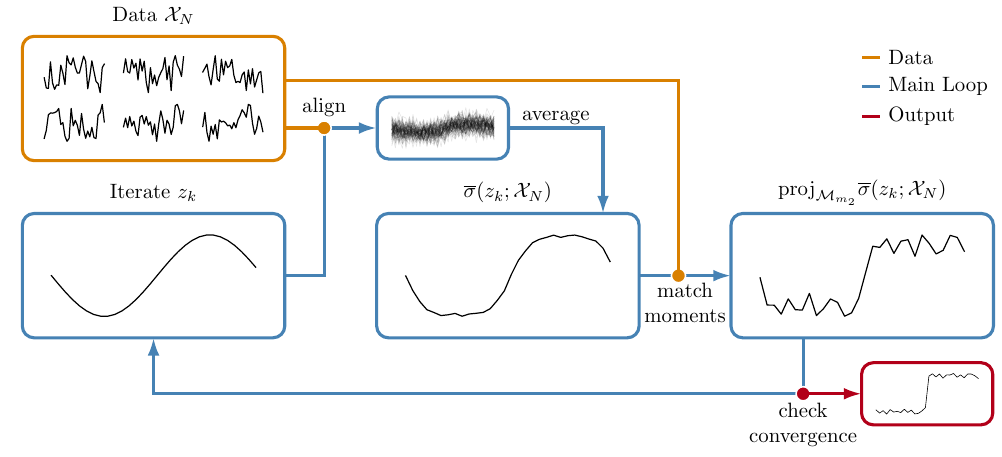}
    \caption{Illustration of the MCA algorithm presented in this paper, applied to a step function signal with $N = 1000$ at $\text{SNR} = 1$. 
    {The arrows indicate data flow between stages of the algorithm -- merging of two edges means that the data is combined. The ``align'' step corresponds to computing $a_i = \falign{z_k}{\xi_i}$ for all $i=1,\dots,N$, the ``average'' step consists of computing $\faligndata{z_k}{\mathcal{X}_N}=\tfrac{1}{N}\sum_{i=1}^N a_i$ and the ``match moments'' step consists of projecting $\faligndata{z_k}{\mathcal{X}_N}$ onto the phase manifold $\mathcal{M}_{m_2}$. The converged output is indicated in red.}}
    \label{fig: algorithm}
    \vspace{-2em}
\end{figure}

In this work, we improve template alignment by combining it with a moment constraint on the reconstruction. This constraint, referred to as the \emph{phase manifold}, requires all signals to have the same power spectrum as $x$, that is, the same second-order moments. 
{We propose an iterative algorithm which we call moment constrained alignment (MCA), that alternates between template alignment and projection onto the phase manifold. The intuition behind this approach is that the alignment takes care of the phases of the Fourier coefficients in the signal, and the projection adjusts the magnitudes. We show that MCA corresponds to a projected block-coordinate descent algorithm with respect to the joint log-likelihood function constrained to the phase manifold, and is therefore provably convergent.}
We show on a set of test cases that MCA performs comparably to both the method of moments and EM. In particular, we observe that MCA is more robust to noise than the method of moments and also significantly faster than EM.  

\begin{figure}
    \centering
\includegraphics[width=0.95\linewidth]{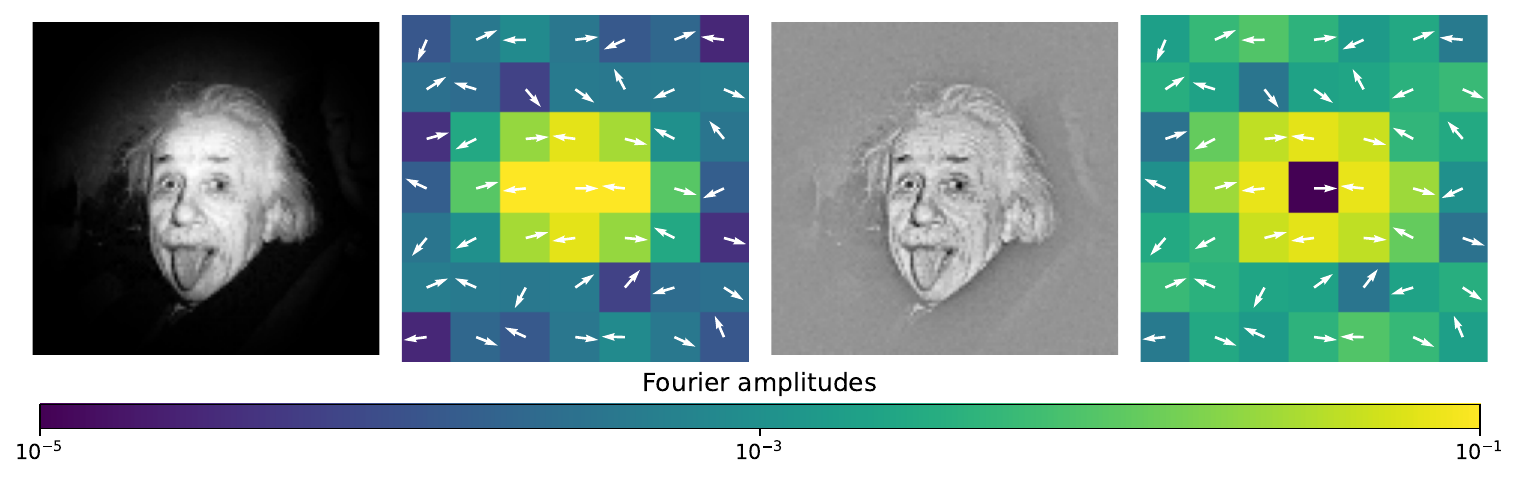}
    \caption{Black and white images: Einstein's iconic photograph (left) and alignment of 80,000 images of pure white noise with that photograph using cross-correlation. Color images: Downsampled Fourier amplitudes of their left-hand side images, with phases indicated as white vectors. The aligned picture was first observed in \cite{shatsky2009method}, 
    {and is related to the phenomenon of confirmation bias from noise, as discussed in \cite{balanov2024confirmation}.}
    }
    \label{fig:albert} 
    \vspace{-2em}
\end{figure}

It is worth mentioning that in an extreme scenario, such as when the SNR is zero -- meaning the observed signals are essentially white noise -- the output Fourier phases of template alignment align with those of the template itself, as illustrated in \cref{fig:albert}. 

The structure of our paper is as follows. In \Cref{sec:prim}, we introduce the basic notions of our approach including the empirical loss and aligning functions, which converge uniformly to the expected loss and aligning functions on the phase manifold. \Cref{sec:phase-recovery} is dedicated to examining the behavior of the expected loss function, including its smoothness and critical points across all SNR regimes. 
{The key results in this section are \cref{prop:loss-smooth}, which proves that the expected loss function is smooth, \cref{prop:gradi-formula}, which derives an explicit expression for the gradient of the loss in terms of the aligning function, and \cref{prop:crit_paral}, which shows that the true signal $x$ is a critical point of the loss function. In \Cref{sec:method}, we state the MCA algorithm, and prove that it results in a sequence of iterates that converge to a stationary loss value}. Finally, in the \Cref{sec:experiments}, we compare the accuracy and performance of our method with those of the EM and higher-order method of moments algorithms.

\section{Preliminary Results}
\label{sec:prim}
Most MRA methods that assume uniform shifts -- whether aligning or non-aligning -- are invariant to circular shifts in the data, and our method shares this property. Therefore, without loss of generality, we may assume that all shifts are zero; that is, $r_i = 0$ for all $i = 1, 2, \ldots, N$, so the observations reduce to
\begin{equation}
\label{eq:observed_signals_noshift}
    \xi_i = x+\epsilon_i.
\end{equation}


{Two statistics widely used in MRA that are invariant to circular shifts are the signal mean and autocorrelation, corresponding to the first- and second-order moments of the signal, respectively. For a real signal $x \in \mathbb{R}^L$, these moments are defined as}

\begin{equation}
    m_1(x) := \frac{1}{L} \sum_{n=0}^{L-1} x[n]  \qquad
    m_2(x)[n_1] := \frac{1}{L} \sum_{n_2=0}^{L-1} x[n_2]x[n_2 - n_1].
\label{eq:m2}
\end{equation}

For a given sample set $\mathcal{X}_N$, the mean of the signal $x$ can be estimated by  
\[
m_1^*(\mathcal{X}_N)= \frac{1}{N} \sum \limits_{j=1}^N \sum \limits_{n=0}^{L-1} \xi_j[n] \sim \mathcal{N}\left(m_1(x),\frac{\tau^2}{LN}\right).
\]

To estimate the second-order moment $m_2(x)$, let us instead consider its Fourier transform.
Given a signal $x \in \mathbb{R}^L$, its discrete Fourier transform (DFT) $\widehat{x}\in \mathbb{C}^L$ is defined by
\[
    \hat x[m] = \sum_{n=0}^{L-1} x[n]\exp({\frac{2\pi i}{L}nm}).
\]
The second-order moment $m_2(x)$ can be written as a convolution $x\ast \flip x$ between $x$ and a flipped version of itself, $(\flip x)[n] = x[-n]$. Therefore, by the convolution theorem, eq. \cref{eq:m2} translates to $\widehat{m}_2(x)[k] = |\widehat{x}[k]|^2$ in the Fourier domain.
Consequently, the Fourier transform of the second-order moment, known as the \emph{power spectrum} of the signal, provides the information about its Fourier amplitudes. Based on this Fourier formulation, an unbiased estimator $\widehat{m}^*_2$ of ${\widehat{m}}_2(x)$ can be computed as 
\begin{equation}
\widehat{m}^*_2(\mathcal{X}_N)[k] = \frac{1}{N} \sum_{j=1}^N \left(|\widehat{\xi}_j[k]|^2 - L\tau^2 \right ),\label{eq:m2 estimate}
\end{equation}
where the term $L\tau^2$ in \cref{eq:m2 estimate} accounts for the expected contribution of noise, 
$\mathbb{E}[|\widehat{\epsilon}[k]|^2]$, in the power spectrum, ensuring the estimator is unbiased.

For large $\tau$, it is known that the variance of $\widehat{m}^*_2$ is of order $\tau ^4 / N$. In~\cite{bendory2017bispectrum}, it is shown that by considering the third-order moment $m_3(x)$, with Fourier transform $\widehat{m}_3(x)[n_1,n_2]=\widehat{x}[n_1]\overline{\widehat{x}[n_2]}\widehat{x}[n_2-n_1]$, one can reconstruct the true signal up to circular translations. The third-order moments $m_3$ are cubic polynomials and therefore have variance of the order $\tau^6/N$.   

In our method, we do not incorporate $m_3$. Instead, we restrict the search space to the \emph{phase manifold}, defined by the non-negative vector $m_2 \in \mathbb{R}^L$. To be precise, the phase manifold $\mathcal{M}_{m_2} \subset \mathbb{R}^L$ is defined as follows:
\begin{equation}
\mathcal{M}_{m_2} = \left\{z \in \mathbb{R}^L \ \Bigg| \ m_1(z)=0, \quad 
m_2(z)[i] = m_2[i] \ \text{for} \ i \in \{0, \ldots, L-1\} \right\}.
\end{equation}
In the Fourier domain, this phase manifold is represented by $\widehat{\mathcal{M}}_{{m}_2} \subset \mathbb{C}^L$. 
\begin{remark}
\label{rmk:alpha-torus}
     For odd $L$, a non-empty phase manifold $\mathcal{M}_{m_2}$ is a $d$-dimensional torus centered at the origin, where $d$ is precisely $\#\supp(\widehat{m}_2)$, and $\supp(\widehat{m}_2) := \{i\mid \widehat{m}_2[i]\neq 0,\quad 1\leq i\leq \lfloor\tfrac{L-1}{2}\rfloor\}$. When $L$ is even, the phase manifold is the disjoint union of two antipodal $d$-dimensional tori.
\end{remark}

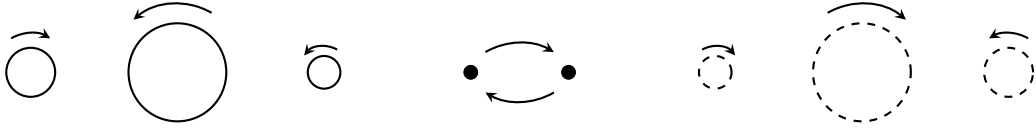
\begin{figure}[t]
    \centering
    \begin{tikzpicture}[scale=1.3] 
        \def\rone{0.25} 
        \def\rtwo{0.5}   
        \def\rthree{0.1667} 
        \def\rfour{0.1667} 
        \def\rfive{0.5}  
        \def\rsix{0.25}

        \draw[thick] (0, 0) circle (\rone);
        \draw[-stealth, thick] (0, 0) ++ (120:\rone+0.15) arc[start angle=120, end angle=60, radius=\rone+0.15];

        \draw[thick] (1.5, 0) circle (\rtwo);
        \draw[-stealth, thick] (1.5,0) ++ (60:\rtwo+0.2) arc[start angle=60, end angle=130, radius=\rtwo+0.2];  

        \draw[thick] (3, 0) circle (\rthree);
        \draw[-stealth, thick] (3,0) ++(60:\rthree+0.1) arc[start angle=60, end angle=140, radius=\rthree+0.1];  

        \filldraw (4.5, 0) circle (2pt);
        \filldraw (5.5, 0) circle (2pt);
        \draw[-stealth, thick] (5, -0.4) ++ (120:0.7) arc[start angle=120, end angle=60, radius=0.7];
        \draw[-stealth, thick] (5, 0.4) ++ (-60:0.7) arc[start angle=-60, end angle=-120, radius=0.7];

        \draw[dashed, thick] (7, 0) circle (\rfour);
        \draw[ -stealth, thick] (7, 0) ++ (120:\rfour+0.1) arc[start angle=120, end angle=40, radius=\rfour+0.1];

        \draw[dashed, thick] (8.5, 0) circle (\rfive);
        \draw[-stealth, thick] (8.5,0) ++(120:\rfive+0.2) arc[start angle=120, end angle=50, radius=\rfive+0.2];  

        \draw[dashed, thick] (10, 0) circle (\rsix);
        \draw[-stealth, thick] (10,0) ++ (60:\rsix+0.15) arc[start angle=60, end angle=120, radius=\rsix+0.15];  
    \end{tikzpicture}
    \caption{A group element in \( \operatorname{CO}(8) \) acting on a 3-dimensional phase manifold. The solid circles on the left rotate under this group element (the dashed circles represent the negative frequencies with conjugate phase). The permutation of the pair of points in the middle is due to the eigenvalue \( -1 \) of the group element.}
    \label{fig:circ_CO}
    \vspace{-2em}
\end{figure}

We now introduce two important group actions on phase manifolds that act as isometries.
\begin{definition}
    Let $\operatorname{Circ}(L)$, $\operatorname{O}(L)$, and $\operatorname{SO}(L)$ be the space of circulant, orthogonal, and special orthogonal matrices of size $L\times L$, respectively. We define the groups $\co(L)$ and $\cso(L)$ to be $\operatorname{Circ}(L) \cap \operatorname{O}(L)$ and $\operatorname{Circ}(L) \cap \operatorname{SO}(L)$, respectively.
\end{definition}
\begin{remark}
\label{rmk:transit}
When $L$ is odd, the group action $\cso(L)$ acts transitively on $\mathcal{M}_{m_2}$. (When $L$ is even, $\co(L)$ acts transitively on $\mathcal{M}_{m_2}$.) In the Fourier domain, these group actions act on the $k$th Fourier mode for $1\le k\le(L-1)/2$ via the unitary group $\text{U}(1)$ (see \cref{fig:circ_CO}).
Also, if $x$ and $z$ are two signals in a phase manifold $\mathcal{M}_{m_2}$, with $x$ having zero Fourier phase, then there exists a matrix $C_z \in \co(L)$ such that $z=C_z x$. Furthermore, $e^{i\theta_k}$ is an eigenvalue of $C_z$ whenever the $k$th Fourier mode $\widehat{z}[k] = |\widehat{z}[k]| e^{i\theta_k}$ is non-zero. 
\end{remark}
\subsection{Discriminant and Alignment}
\label{subsec:Align}
\begin{figure}[ht]
    \definecolor{left}{rgb}{0.85, 0.50, 0.0}
    \definecolor{fade}{rgb}{1.0, 0.9, 0.0}
    \centering
    \begin{tikzpicture}
        \draw (0,0) node {\includegraphics[width=0.4\textwidth]{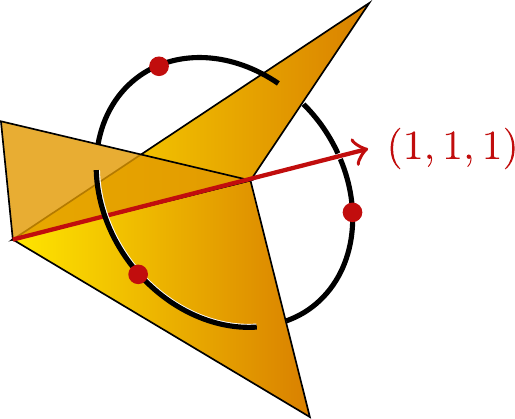}};
        \draw (1.3,-0.2) node[right] {$\pmb{\sigma z}$};
        \draw (-1.4,2.1) node[left] {$\pmb{\sigma_2 z}$};
        \draw (-1.7,-0.9) node[left] {$\pmb{z}$};
    \end{tikzpicture}
    \caption{Three intersecting half-planes, depicted in gold with $120$° dihedral angles, symbolize the total discriminant of any signal $z$ in $\mathbb{R}^3 \setminus \langle \mathbbm{1}_3 \rangle_{\mathbb{R}}$. The phase manifold is a circle, intersecting the total discriminant at three distinct points.}
    \label{fig:discrim1}
\end{figure}

We define the \emph{aligning function} \(\sigma: (\mathbb{R}^L \setminus \mathcal{P}_L) \times \mathbb{R}^L\to \mathbb{R}^L\) such that \(\falign{z}{x} = \sigma_i(x)\), where \(i = \argmin_{k} ||z - \sigma_k(x)||\) and $\mathcal{P}_L$ denotes the space of all periodic signals in $\mathbb{R}^L$. When $L$ is a prime integer, then $\mathcal{P}_L$ is the vector space spanned by the vector $\mathbbm{1}_L=(1,\ldots,1)^\top \in \mathbb{R}^L$.
Moreover, for a signal $x$ in a general position, there exists an open set \(U_x\) containing \(x\) where the minimizing index \(i\) with respect to $z$ remains constant for every \(z \in U_x\). In the following definition, we characterize the set of all vectors $t\in \mathbb{R}^L$, where the mapping \(t \mapsto \argmin_{k} ||z - \sigma_k(t)||\) fails to be locally constant for a fixed $z$. In other words, $t$ is simultaneously closest to at least two signals in the set $\{z,\sigma_1(z),\ldots, \sigma_{L-1}(z)\}$.

\begin{definition}
\label{def:disc}
    Let $z$ be a vector in $\mathbb{R}^L \setminus \mathcal{P}_L$. We define the discriminant $\Delta^{i,j}_z$ to be a polyhedral set comprising vectors $t \in \mathbb{R}^L$ such that $\|t - \sigma_i(z)\| = \|t - \sigma_j(z)\| \le \|t - \sigma_k(z)\|$ for every $k$. Additionally, we define the \emph{total discriminant} $\Delta_z$ as the union $\bigcup_{i,j} \Delta_z^{i,j}$; refer to \cref{fig:discrim1,fig:discrim2}.
\end{definition}

We use this discriminant to characterize the behavior of the aligning function $\falign{z}{x}$.

\begin{remark}
\label{rmk:partitions_delta}
    For any vector $z \in \mathbb{R}^L \setminus \mathcal{P}_L$, the set $\mathbb{R}^L \setminus \Delta_z$ is the union of $L$ open polyhedral cones $C_i(z)$ such that $\sigma_j(C_i(z))=C_{i+j}(z)$, and $C_i(z)$ contains $\sigma_i(z)$. Moreover, the function $x\mapsto \falign{z}{x}$ is equal to $\sigma_{-i}(x)$ when $x \in C_i(z)$ and $z\mapsto \falign{z}{x}$ is a locally constant function when $x \in C_i(z)$.
\end{remark}
One simple observation reveals that the discriminant $\Delta_z^{i,j}$ that satisfies the conditions in \cref{def:disc} has codimension one and
 the vectors $\lambda \cdot \mathbbm{1}_L$ is always contained in each $\Delta_z^{i,j}$ and thus the total discriminant $\Delta_z$. Also, $\Delta_z^{i,j}$ has a normal vector $\mathbf{n}_{i,j} = \sigma_i(z)-\sigma_j(z)$, hence we can compute the dihedral angle $\varphi$ between discriminants $\Delta_z^{i,j}$ and $\Delta_z^{i',j'}$ as follows:
 \begin{equation}
     \label{eq:dih}
     \cos \varphi = \frac{|\mathbf{n}_{i,j} \cdot \mathbf{n}_{i',j'}|}{||\mathbf{n}_{i,j}||||\mathbf{n}_{i',j'}||}.
 \end{equation}
To compute \(\mathbf{n}_{i,j} \cdot \mathbf{n}_{i',j'}\) in \cref{eq:dih}, it suffices to know \(\langle\sigma_{k}(z), \sigma_{\ell}(z)\rangle\) for all pairs $k, \ell \in \{0, 1, \ldots, L-1\}$. Consequently, the computation of dihedral angles \(\varphi\) is encoded in the second-order moment \(m_2(z)\). Thus, for all \(z\) on the phase manifold, the relative positions of discriminants remain unchanged. More precisely, suppose that \(C \in \co(L)\), then for any \(i, j\), one can see that \(\Delta^{i,j}_{Cz} = C \Delta^{i,j}_z\). Consequently, we have that \(\Delta_{Cz} = C \Delta_z\).

Although the aligning function is technically undefined on the discriminant, the probability of any signal being on the discriminant is zero under the MRA model~\cref{eq:observ-sig}. Therefore, as long as the template is generic, we can in practice compute the aligning function on all data without ambiguity:

\begin{definition}
\label{def:align-func}
Let $z \in \mathbb{R}^L \setminus \mathcal{P}_L$ and let $\mathcal{X}_N=\{\xi_1,\ldots,\xi_N\}$ be a set of observed signals as described in \cref{eq:observed_signals_noshift}. We define the \emph{averaged aligning function} $\overline{\sigma}(z, \bullet):\mathbb{R}^{L \times N} \to \mathbb{R}^L$ as:
\begin{equation}
\label{eq:align-func}
\faligndata{z}{\mathcal{X}_N} =\frac{1}{N}\sum_{i=1}^N  \falign{z}{ \xi_i}.
\end{equation}
\end{definition}

We also need a way to measure to what extent our estimate $z$ aligns with the data.
\begin{definition}
\label{def:loss_func}
    Let $z$ be a non-periodic signal on the phase manifold $\mathcal{M}_{m_2}$. For a given set of observed signals $\mathcal{X}_N = \{\xi_1,\ldots,\xi_N\}$ as described in \cref{eq:observed_signals_noshift}, we define the \emph{expected loss function} $\lossdata{\bullet}{\mathcal{X}_N}: \mathcal{M}_{m_2} \to \mathbb{R}$ to be (see \cref{fig:loss-L=5})
    \begin{equation}
    \lossdata{z}{\mathcal{X}_N} =\frac{1}{2N} \sum \limits_{i=1}^N ||z-\falign{z}{\xi_i}||^2.
        \end{equation} 
\end{definition}
{We remark that the averaged aligning function $\overline\sigma$ is closely related to the gradient of the loss function. Specifically, $\nabla_z\lossdata{z}{\mathcal{X}_N}=z-\faligndata{z}{\mathcal{X}_N}$ when $z$ is not on the discriminants $\cup_{i=1}^N\Delta_{\xi_i}$. We will soon see that the same identity also holds in the limit of infinite data.} 

An important regime for studying the behavior of the averaged aligning and average loss function is that of the infinite-data regime. In this case, we let $N \to \infty$ and note that the terms in each sum are bounded in expectation since $\mathbb{E}[\falign{z}{\xi_i}[n]]\leq \mathbb{E}[\max_{\ell}\xi_i[\ell]]$ and $\mathbb{E}[\|z-\falign{z}{\xi_i}\|]\leq \|z\|+\mathbb{E}[\|\xi_i\|]$. By the law of large numbers, both sums therefore converge in probability to their expected values, which only depend on $z$, $x$ (the clean signal) and $\tau$ (the standard deviation of the noise).

\begin{definition}
Let $z \in \mathbb{R}^L \setminus \mathcal{P}_L$, $x \in \mathbb{R}^L$ and $\tau \ge 0$, then we define the infinite-data versions of the averaged aligning function
\begin{equation}
    \falignlim{z}{x} := \lim \limits_{N\to \infty} \faligndata{z}{\mathcal{X}_N}
\end{equation}
and the expected loss function
\begin{equation}
   \losslim{z}{x} = \lim \limits_{N\to \infty} \lossdata{z}{\mathcal{X}_N}
\end{equation}
\end{definition}

We proceed to study the properties of these functions. First, we note that each can be written out explicitly as an integral over the possible noise realizations.
\begin{remark}
    Let $z \in \mathbb{R}^L\setminus \mathcal{P}_L$, $x \in \mathbb{R}^L$ and $\tau > 0$, and define the Gaussian density function $\phi_\tau(\epsilon)=(2\pi\tau^2)^{-L/2}\exp(-\|\epsilon\|^2/(2\tau^2))$. Then
    the averaged aligning function $\falignlim{z}{x}$ and the expected loss function $\losslim{z}{x}$ can be written as follows:
    \begin{align}
    \label{eq:averaged-aligned-integral}
    \falignlim{z}{x} &= \mathbb{E}[\falign{z}{x+\epsilon}]= \int_{\mathbb{R}^L} \falign{z}{x+y}\phi_\tau(y)\; \mathrm{d}\mu(y),\\
    \label{eq:loss-integral}
        \losslim{z}{x} &=\mathbb{E}[\tfrac{1}{2}\|z-\falign{z}{x+\epsilon}\|^2] = \int_{\mathbb{R}^L} \frac{1}{2}||z-\falign{z}{x+y}||^2\phi_\tau(y)\; \mathrm{d}\mu(y).
    \end{align}
\end{remark}
Second, we see that both functions exhibit some useful scaling properties.
\begin{lemma}
\label{lem:properties_L_sigma}
Let $z$ be a vector in $\mathbb{R}^L \setminus \mathcal{P}_L$ and $x \in \mathcal{M}_{m_2}$. For all constants $\lambda,\tau>0$,
    \begin{enumerate}[label=(\roman*), leftmargin=*, align=left]
    \item $\losslim{z}{x} = \tau^2\cdot \losslimgen{1}{\tfrac{z}{\tau}}{\tfrac{x}{\tau}}$,
    \item $\falignlim{z}{x} = \tau \cdot \falignlimgen{1}{\frac{z}{\tau}}{\frac{x}{\tau}}$,
    \item $m_1(x) = m_1(\falign{z}{x})= m_1(\falignlimgen{0}{z}{x})=m_1(\falignlim{z}{x})$,
    \item $\falignlim{\lambda \cdot z}{x}=\falignlim{z}{x} $.
\end{enumerate}
\end{lemma}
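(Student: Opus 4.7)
The plan is to push all four statements through the integral representations \eqref{eq:averaged-aligned-integral}--\eqref{eq:loss-integral}, using two elementary facts: every circular shift $\sigma_k$ is an isometry of $\mathbb{R}^L$ that also preserves the mean; and for fixed $w$, the argmin that determines $\falign{z}{w}$ depends on $z$ only up to a positive scalar, because all $\sigma_k(w)$ share a common norm.

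For the scaling identities (1) and (2), I would substitute $y = \tau y'$ in the two Gaussian integrals. The Jacobian $\tau^L$ exactly cancels the $\tau^{-L}$ in the normalization constant, converting the density into a standard Gaussian in $y'$. Next, the pointwise identity
\[
\|z - \sigma_k(x+\tau y')\|^2 \;=\; \tau^2\,\bigl\|\tfrac{z}{\tau} - \sigma_k\bigl(\tfrac{x}{\tau}+y'\bigr)\bigr\|^2
\]
shows simultaneously that the minimizing index $k$ is unchanged under $(z,x+\tau y') \mapsto (z/\tau,\, x/\tau+y')$ and that $\falign{z}{x+\tau y'} = \tau \cdot \falign{z/\tau}{x/\tau+y'}$. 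Plugging these in produces the claimed prefactors $\tau$ and $\tau^2$, respectively.

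For (3), I would note that $m_1$ is a linear functional invariant under each $\sigma_k$, so $m_1(\falign{z}{w}) = m_1(w)$ for every $w \in \mathbb{R}^L$; taking $w=x$ gives the first equality, and specializing \eqref{eq:averaged-aligned-integral} at $\tau=0$ (where the Gaussian collapses to a point mass) yields $\falignlimgen{0}{z}{x} = \falign{z}{x}$, which handles the second. For the last equality I apply $m_1$ under the integral in \eqref{eq:averaged-aligned-integral} to get $\int m_1(x+y)\phi(y)\,dy = m_1(x) + \int m_1(y)\phi(y)\,dy = m_1(x)$, since $y$ is mean zero. Finally, for (4), I rewrite
\[
\argmin_k \|\lambda z - \sigma_k(w)\|^2 \;=\; \argmax_k \langle \lambda z,\sigma_k(w)\rangle \;=\; \argmax_k \langle z,\sigma_k(w)\rangle,
\]
using $\lambda>0$ together with the fact that $\|\sigma_k(w)\|$ does not depend on $k$; this gives $\falign{\lambda z}{w} = \falign{z}{w}$ pointwise, so the integrals in \eqref{eq:averaged-aligned-integral} at $\lambda z$ and at $z$ coincide. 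The only real subtlety in the argument is this last step, where dropping the $\|w\|^2$ term from the expansion of $\|\lambda z - w\|^2$ inside the argmin requires precisely the shift-isometry property together with positivity of $\lambda$.
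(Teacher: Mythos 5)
Your proof is correct and follows essentially the same route as the paper: the substitution $y=\tau y'$ for the two scaling identities, shift-invariance of $m_1$ together with the zero mean of the noise for the third, and invariance of the cross-correlation argmax under positive rescaling of $z$ for the fourth. Your write-up is somewhat more explicit (in particular, spelling out why the minimizing index is preserved under the rescaling $(z, x+\tau y')\mapsto(z/\tau, x/\tau+y')$), but there is no substantive difference in approach.
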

\begin{proof}
    The first two identities are obtained by change of variable $\frac{y}{\tau} \to y$ in \cref{eq:loss-integral} and \cref{eq:averaged-aligned-integral}. The third identity follows from the fact that alignment does not change the mean and that in our setting, the i.i.d. random normal vector $\epsilon$ has zero mean.
    For the last identity, note that for every positive real number $\lambda$, we have \[\underset{k}{\argmax}\left(\sum \limits_{m=0}^{L-1} z[m]x[m+k]\right)=\underset{k}{\argmax}\left(\sum \limits_{m=0}^{L-1} \lambda\cdot z[m]x[m+k]\right),\] thus $\falignlim{\lambda \cdot z}{x}=\falignlim{z}{x}$.
\end{proof}

\begin{figure}[t]
    \centering
    \includegraphics[width=\linewidth]{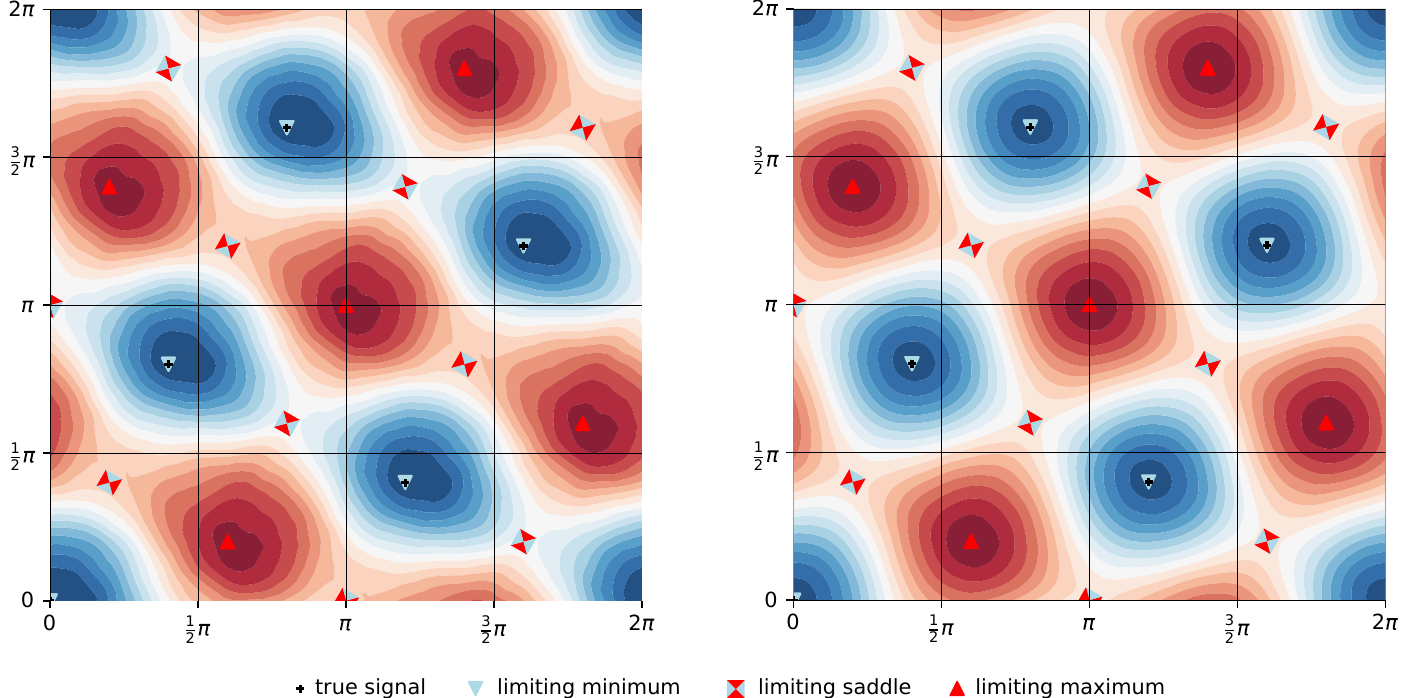}
    \caption{Loss function $\lossdata{z}{\mathcal{X}_N}$ with data generated from the true signal $x=(4/5, -1/5, -1/5, -1/5, -1/5)^\top$ and number of samples $N=10^5$, as a function of the phases $\hat z[1]$ and $\hat z[2]$. The left figure uses $\tau = 1.7$, and the right figure uses $\tau = 1$. Due to \cref{rmk:alpha-torus}, the phase manifold $\mathcal{M}_{m_2}$ is isometric to a $2$-torus, represented here as a rectangle. The critical points of the infinite-data loss function $\losslim{z}{x}$ are marked.}
    \label{fig:loss-L=5}
    \vspace{-2em}
\end{figure}
Despite the straightforward construction of the averaged aligning function $\falignlim{z}{x}$, there is no known closed formula describing $\falignlim{z}{x}$ as a function of  $z$, $x$, and $\tau$. In the following example, we see that even in a special case, obtaining a concise description of this function remains challenging.
\begin{example}
\label{ex:align_x=0}
    Let \( z = (1, 0, \ldots, 0)^\top \) and \( \tau > 0 \). Using \cref{lem:properties_L_sigma}, we have
    \begin{equation}
        \label{eq:property_delta}
        \falignlim{z}{\mathbf{0}_L} = \tau \cdot \falignlimgen{1}{\tfrac{z}{\tau}}{\mathbf{0}_L} \quad \text{and} \quad
    m_1(\falignlim{z}{\mathbf{0}_L}) = 0.
    \end{equation}
    In this case, it is possible to show that \(\falignlimgen{1}{z}{\mathbf{0}_L}[i] = \falignlimgen{1}{z}{\mathbf{0}_L}[j] \) for every \( i, j \neq 0 \).
     Hence,
    \[
    \falignlim{z}{\mathbf{0}_L} = \tau \cdot \left( a_L, \frac{-a_L}{L-1}, \ldots, \frac{-a_L}{L-1} \right)^\top,
    \]
    thus similar to $z$, the vector \( \falignlim{z}{\mathbf{0}_L} \) is circularly symmetric and therefore has zero Fourier phases.
    
    Now, to compute \( a_L \), it is enough to assume \( \tau = 1 \) and find the value of \( \falignlimgen{1}{z}{\mathbf{0}_L}[0] \). This value can be interpreted as finding \( \mathbb{E}[Z_L] \), where \( Z_L = \max(X_1, \ldots, X_L) \) and \( X_i \) are i.i.d. standard normal variables. Let \( \Phi_1 \) denote the cumulative distribution function of the standard normal distribution with density function $\phi_1$. Then,
    \[
    a_L = \mathbb{E}[Z_L] = \int_{-\infty}^{\infty} \mathrm{x} \left( \Phi_1(\mathrm{x}) \right)^{L-1} \phi_1(\mathrm{x}) \, d\mathrm{x}.
    \]
    This has no closed-form solution, but for \( L \in \{2, \ldots, 9\} \), we can compute \( a_L \) numerically:
    \begin{align*}
        a_2 & \approx 0.56418, & a_3 & \approx 0.84628, & a_4 & \approx 1.02937, & a_5 & \approx 1.16296, \\
        a_6 & \approx 1.26720, & a_7 & \approx 1.35217, & a_8 & \approx 1.42360, & a_9 & \approx 1.48501.
    \end{align*}
    \end{example}

\subsection{Convergence to Infinite-Data Limit}
We turn now to the relation between the finite data loss $\lossdata{\bullet}{\mathcal{X}_N}$, infinite-data loss function $\losslim{\bullet}{x}$ and aligning function $\falignlim{\bullet}{x}$. We refer to~\cref{apx: lipschitz} for the proofs in this section. We first establish that $\lossdata{\bullet}{\mathcal{X}_N}$ is Lipschitz. In fact, the distance function $z\mapsto \|z-\falign{x}{z}\|^2$ itself is Lipschitz on any compact set, as the following lemma shows:
\begin{lemma}\label{lem: lipschitz}
    For any elements $u,v$ and $w$ in $\mathbb{R}^L$, it holds that
    \[
        \tfrac{1}{2}\|u - \falign{w}{u}\|^2 - \tfrac{1}{2}\|v - \falign{w}{v}\|^2 \leq \|u-v\|(\tfrac{1}{2}\|u+v\|+\|w\|).
    \]
\end{lemma}

We can now use this to prove the Lipschitz continuity of the loss function.

\begin{proposition}\label{prop: lipschitz}
    For any $u,v\in \mathcal{M}_{m_2}$, the following holds:
    \[
        |\lossdata{u}{\mathcal{X}_N}-\lossdata{v}{\mathcal{X}_N}|\leq C_N \|u-v\|\quad\text{with}\quad C_N = 2\|m_2\| + \frac{1}{N}\sum_{n=1}^N\|\epsilon_n\|.
    \]
    Furthermore, $\mathbb{P}(|C_N-2\|m_2\|-\sqrt{2}\tau\frac{\Gamma((L+1)/2)}{\Gamma(L/2)}|\geq \epsilon)\leq \frac{L\tau^2}{\epsilon^2N}$.
\end{proposition}

\Cref{prop: lipschitz} lets us control the variability of the Lipschitz constant of the empirical loss function. This is enough to guarantee uniform convergence to the expected loss when restricted to the phase manifold at a rate of $\mathcal{O}(1/N)$, as the following proposition shows:

\begin{proposition}\label{prop: c0 convergence}
    The loss function converges uniformly in probability to its infinite data limit on the phase manifold $\mathcal{M}_{m_2}$ with rate $N^{-1}$. That is, for an $\epsilon>0$ sufficiently small:
    \[
        \mathbb{P}\left\{\sup_{z\in\mathcal{M}_{m_2}}|\lossdata{z}{\mathcal{X}_N}-\losslim{z}{x}|\geq \epsilon\right\}\leq \frac{4(\tau^2L+\|\hat m_2\|_\infty^2)\|\hat m_2\|_\infty^L}{N\epsilon^{L+2}},
    \]
    where $\|\hat m_2\|_\infty=\max_k |\hat m_2[k]|$ is the maximal Fourier magnitude of the signal $x$.
\end{proposition}

We lastly comment on the meaning of \cref{prop: c0 convergence}. Due to convergence in the supremum norm, the set of minimizers in the discrete data case must converge to the minimizers of the infinite data function. 
\begin{corollary}\label{cor: convergence of minimizers}
    Let $z_N$ minimize the discrete loss $\lossdata{\bullet}{\mathcal{X}_N}$. Then, with probability one, $\lim_{N\to\infty}\losslim{z_N}{x}=\min_{z\in\mathcal{M}}\losslim{z}{x}$, that is $z_N$ attains optimal loss as $N\to\infty$. Furthermore, with probability one, $z_N$ converges to the set $Z^*:=\arg\min_{z\in\mathcal{M}}\losslim{\bullet}{x}$ of minimizers of the infinite data loss, with respect to the minimum distance metric.
\end{corollary}

\begin{proof}
    Let $(z_{N_k})_{k=1}^\infty$ denote an arbitrary convergent subsequence of $(z_N)_{N=1}^\infty$, and denote its limit point by $z_\infty$. Take any $z^*\in Z^*$. By \cref{prop: c0 convergence}, we have with probability one, that
    \begin{align*}
        \losslim{z_\infty}{x} - \min_{z\in\mathcal{M}}\losslim{z}{x} &=\lim_{k\to\infty}[\losslim{z_\infty}{x}-\lossdata{z_\infty}{\mathcal{X}_{N_k}}]+[\lossdata{z_\infty}{\mathcal{X}_{N_k}}-\lossdata{z_{N_k}}{\mathcal{X}_{N_k}}]\\
        &+[\lossdata{z_{N_k}}{\mathcal{X}_{N_k}}-\lossdata{z^*}{\mathcal{X}_{N_k}}]+[\lossdata{z^*}{\mathcal{X}_{N_k}}-\losslim{z^*}{x}]\\
        &=\lim_{k\to\infty}\lossdata{z_{N_k}}{\mathcal{X}_{N_k}}-\lossdata{z^*}{\mathcal{X}_{N_k}} \leq 0,
    \end{align*}
    where we used optimality of $z_{N_k}$ for the data $\mathcal{X}_{N_k}$ in the last step. Hence, any convergent subsequence of $(z_N)_{N=1}^\infty$ converges to a minimizer of the infinite data loss function $\losslim{\bullet}{x}$. By continuity of $\losslim{\bullet}{x}$, the sequence $(\losslim{z_{N}}{x})_{N=1}^\infty$ therefore converges to $\min_{z\in\mathcal{M}}\losslim{z}{x}$. Lastly, suppose the full sequence $(z_N)_N$ does not converge to $Z^*$ in the minimum distance. Then, we can form an infinite subsequence $(z_{N_\ell})_{\ell=1}^\infty$ so that $|z_{N_\ell}-z|>\epsilon$ for all $z\in Z^*$. But then, there is a convergent subsequence $(z_{N_{\ell_k}})_{k=1}^\infty$ of that sequence, that does not converge to $Z^*$. this is a contradiction, since all convergent subsequences of $(z_N)_{N=1}^\infty$ must converge to a minimizer.
\end{proof}

From \cref{cor: convergence of minimizers} it becomes clear that a key component to understanding the behavior of the minimizers of the finite-data loss function is to study the infinite data limit. In the following sections, we will prove some important properties of the expected loss and how its gradient relates to the aligning function.
\section{Phase Recovery}
\label{sec:phase-recovery}


{A key property of any good statistical estimator such as EM, phase synchronization or MCA (ours), is consistency, meaning that the estimates $x_N$ of the true signal $x$ converge to $x$ in the limit $N\to \infty$ of infinite data. In our case, \cref{cor: convergence of minimizers} implies that MCA is consistent if $x$ is the unique global minimizer (up to shift) of $\losslim{\bullet}{x}$. Proving this is a hard problem in the general case. Even showing that $x$ is a local minimum requires good estimates of the second-order derivatives of $\mathcal{L}$ around $x$. The purpose of this section is to show that $x$ is a critical point of $\mathcal{L}$, and in some cases also a local minimum. } 

The section is structured as follows. First, we explore the properties of the expected loss function \(\losslim{z}{x}\) and its relation to the averaged aligning function \(\falignlim{z}{x}\). We prove that \( z \mapsto \losslim{z}{x}\) is a smooth function for \(\tau > 0\) and propose a finite set of critical points for this function -- one out of which is the true signal. Furthermore, we discuss the nature of these critical points, i.e., whether they are minima, saddles, or maxima, and examine the behavior of the expected loss function as \(\tau \to \infty\). 

\subsection{Properties of the Expected Loss Function}
We begin by proving a useful lemma concerning symmetries of the expected loss and the averaged aligning function. Specifically, we consider the action of the group $ \cso(L) \rtimes \langle \flip \rangle$, where $\langle \flip \rangle \cong \mathbb{Z}_2$ is the group generated by the reflection matrix $\flip$, acting on $\cso(L)$ by conjugation; that is, $\flip C \flip = C^{-1}$ for all $C \in \cso(L)$.

\begin{figure}
    \centering
    \includegraphics[width=1.0\linewidth]{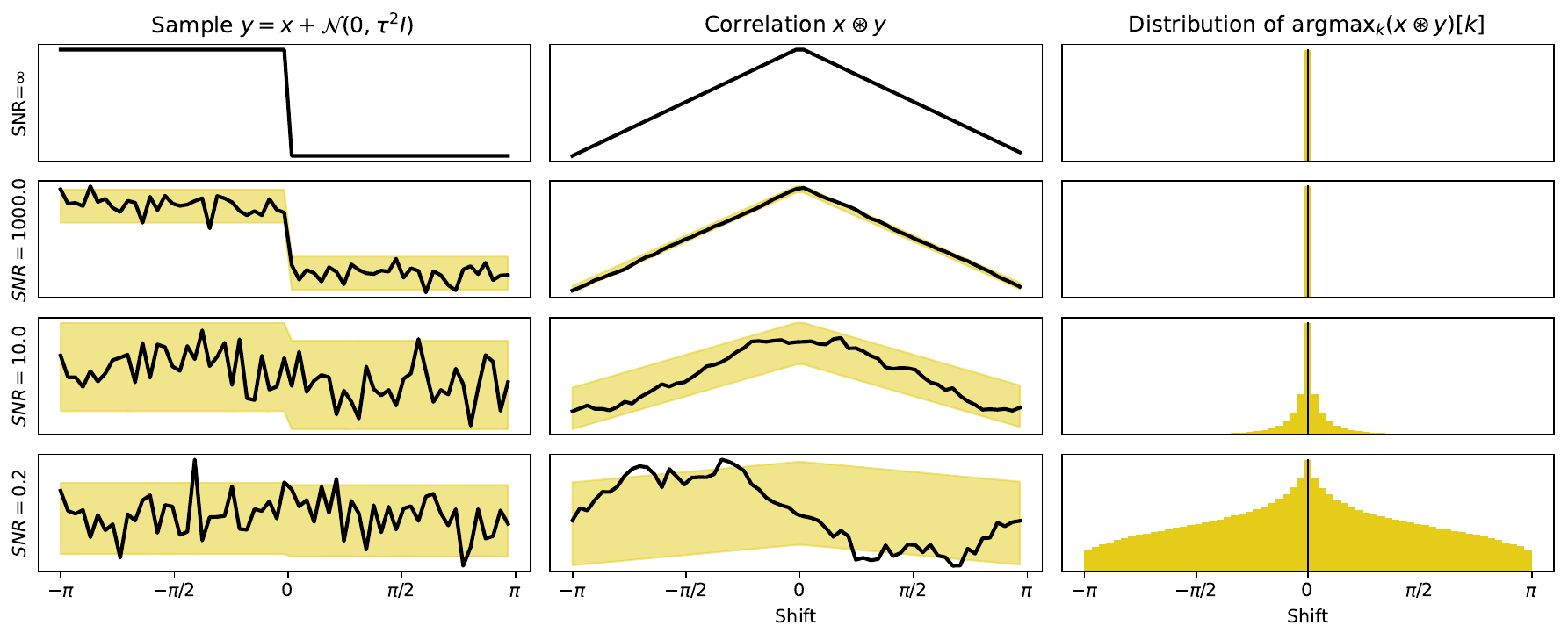}
    \caption{Samples of the unshifted data (left) correlation between the true underlying signal and samples (middle) and distribution of the estimated shifts over $10^7$ samples (right), at different SNR. The $96\%$ confidence intervals are marked in yellow in the left and middle plots.}
    \label{fig: data example}
\end{figure}

\begin{lemma}
    \label{lem:invariant-loss}
    Let $C$ be an element of the group $\cso(L)\rtimes \langle \flip\rangle$. Then, for any two signals $z,x$ of length $L$ on a phase manyfold $\mathcal{M}_{m_2}$, we have
    \[
        \losslim{Cz}{Cx} = \losslim{z}{x},\quad \text{and}\quad \falignlim{Cz}{Cx} = C\falignlim{z}{x}.
    \]
\end{lemma}
\begin{proof}
    Take $z$ and $x$ in $\mathcal{M}_{m_2}$. Let $S$ be the subgroup of $\mathrm{O}(L)$ containing all discrete cyclic shifts on $\mathbb{R}^L$. Then if $C\in\cso(L)\rtimes \langle \flip\rangle$, it holds that $CSC^{-1} = S$. Furthermore, $\eta = C\epsilon$ is identical in distribution to $\epsilon$, and
    \begin{align*}
        C\falign{z}{x+\epsilon} &= C\left(\argmin_{\sigma\in S}\|z-\sigma(x+\epsilon)\|^2\right)(x+\epsilon)= \\&=C\left(\argmin_{\sigma\in S}\|Cz-(C\sigma C^{-1})(Cx+C\epsilon)\|^2\right)(x+\epsilon)= \\&=\quad \left(\argmin_{\sigma\in CSC^{-1}}\|Cz-\sigma(Cx+C\epsilon)\|^2\right)(Cx+\eta) = \falign{Cz}{Cx+\eta}.
    \end{align*}
    From the above relation, we obtain
    \[
        \losslim{z}{x} = \mathbb{E}\left[\tfrac{1}{2}\|z-\falign{z}{x+\epsilon}\|^2\right] =\mathbb{E}\left[\tfrac{1}{2}\| C z-\falign{Cz}{C x+\eta}\|^2\right] = \losslim{Cz}{Cx}.
    \]
    Similarly:
    \[
        C\falignlim{z}{x} = C\mathbb{E}\left[\falign{z}{x+\epsilon}\right]=C\mathbb{E}\left[C^{-1}\falign{Cz}{Cx+\eta}\right]=\falignlim{Cz}{Cx}.
    \]
\end{proof}

A second useful property is that the expected loss function $z\mapsto\losslim{z}{x}$ can be expressed as a convolution between the square distance to the true signal $x$ and the Gaussian distribution:

\begin{theorem}\label{prop:loss-smooth}
    Let $x\in\mathbb{R}^L$ be a signal on a phase manifold $\mathcal{M}_{m_2}$. Let $\phi_\tau$ denote the $L$-dimensional iid Gaussian density, $\losslimgen{0}{z}{x}$ denote the  square distance $\|x-\falign{z}{x}\|^2$ from the orbit of $x$ to $z$ and let $\circledast$ be the $L$-dimensional convolution (integral). Then, for any $z\in \mathcal{M}_{m_2}$,
    \[
        \losslim{z}{x}    = \left(\phi_\tau\circledast \losslimgen{0}{\bullet}{x}\right)(z).
    \]
    As a consequence, $\losslim{\bullet}{x}$ is smooth over $\mathcal{M}_{m_2}$.
\end{theorem}

\begin{proof}
    It is immediately clear from \cref{eq:loss-integral} that the expected loss is a convolution in the $x$-variable. We now prove using \cref{lem:invariant-loss}, that the variables $x$ and $z$ are interchangeable in the loss function. To see this, consider the following transform $C(x,z)$, defined by its action $\widehat{C}(x,z)$ on the Fourier space:
    \[
        (\widehat{C}(x,z)\widehat{y})[k] = \begin{cases}
            (\widehat{z}[k]\widehat{x}[k]m_2[k]^{-2})\overline{\widehat{y}[k]},&\text{if } m_2[k] \neq 0\\
            \overline{\widehat{y}[k]},&\text{otherwise}.
        \end{cases}
    \]
    Then, it holds that $C(x,z)x = z$ and $C(x,z)z = x$. Furthermore, $C(x,z)$ is a composition of a reflection $\widehat{y}[k] \mapsto \overline{\hat y[k]}$ and a circulant matrix (convolution) $\widehat{y}[k]\mapsto \widehat{x}[k]\widehat{y}[k]m_2[k]^{-2}$ so, by \cref{lem:invariant-loss}, 
    \[
        \losslim{z}{x} = \losslim{C(x,z)z}{C(x,z)x} = \losslim{x}{z}.
    \]
    the result follows simply by switching $x$ and $z$ into~\cref{eq:loss-integral}. Smoothness then follows since differentiation and convolution commute, i.e. $(f\circledast g)'=f'\circledast g$ for functions $f$ and $g$.
\end{proof}

To further understand the behavior of the expected loss function $\losslim{z}{x}$, we first compute its (Euclidean) gradient with respect to $z$, denoted by $\nabla_z \losslim{z}{x}$.
For a finite sample set $\mathcal{X}_N$ and a generic $z \in \mathcal{M}_{m_2}$, we have $\nabla_z \lossdata{z}{\mathcal{X}_N} = z - \faligndata{z}{\mathcal{X}_N}$, since $\faligndata{z}{\mathcal{X}_N}$ is locally constant in $z$ (see \cref{rmk:partitions_delta}). In contrast, with an infinite number of samples, \(\falignlim{z}{x}\) is no longer locally constant in \(z\) for $\tau>0$. Therefore, we cannot directly conclude that \(\nabla_{z} \losslim{z}{x} = z - \falignlim{z}{x}\). In the following proposition, we show that this equality still holds, but a different approach is required.

\begin{proposition}
\label{prop:gradi-formula}
    Let $x$ be a signal on a phase manifold $\mathcal{M}_{m_2}$, and $\tau>0$. Then for all $z\in\mathbb{R}^L$, it holds that $\nabla_{z} \losslim{z}{x} = z - \falignlim{z}{x}$.
\end{proposition}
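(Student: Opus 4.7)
My plan is to differentiate under the integral in the representation \eqref{eq:loss-integral}. After the change of variable $w = x + y$, the loss takes the form
\begin{equation*}
\losslim{z}{x} = \int_{\mathbb{R}^L} p_\tau(w - x) \, \phi(z, w) \, dw,
\end{equation*}
where $p_\tau(u) = (2\pi\tau^2)^{-L/2} \exp(-\|u\|^2/(2\tau^2))$ and $\phi(z, w) := \tfrac{1}{2}\|z - \sigma(z; w)\|^2 = \tfrac{1}{2}\min_k \|z - \sigma_k(w)\|^2$. The goal is to move $\nabla_z$ inside the integral and identify the pointwise derivative, after which the identity
\begin{equation*}
\nabla_z \losslim{z}{x} = z - \int_{\mathbb{R}^L} p_\tau(w - x) \, \sigma(z; w) \, dw = z - \falignlim{z}{x}
\end{equation*}
follows immediately from \eqref{eq:averaged-aligned-integral}.

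For pointwise differentiability, I would observe that $\phi(\cdot, w)$ is the minimum of $L$ smooth convex quadratics, so it is differentiable at $z$ whenever the minimizing index $\argmin_\ell \|z - \sigma_\ell(w)\|$ is unique, with derivative $z - \sigma(z; w)$ by the envelope theorem. For fixed $z$, the exceptional set of $w$ is precisely the locus where $z$ is equidistant from two of the $\sigma_k(w)$, which is a finite union of affine hyperplanes in $w$ and hence Lebesgue-null; this is the dual picture to the cone decomposition of Remark~\ref{rmk:partitions_delta}.

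The step I expect to be the main obstacle is justifying the interchange of $\nabla_z$ and $\int$. The concern is that $\sigma(z; \cdot)$ jumps across each discriminant $\Delta_z^{i,j}$, so a naive calculation might be expected to produce Reynolds-type boundary contributions. My plan is to preempt this through a dominated-convergence argument applied to the difference quotients. Using $\|\nabla_z \phi(z', w)\| \le \|z' - \sigma(z'; w)\| \le \|z'\| + \|w\|$ at every point of differentiability on a segment from $z$ to $z + h$ with $\|h\| \le 1$, I obtain
\begin{equation*}
\left| \frac{\phi(z + h, w) - \phi(z, w)}{\|h\|} \right| \le \|z\| + 1 + \|w\|,
\end{equation*}
and this bound, multiplied by $p_\tau(w - x)$, is integrable over $\mathbb{R}^L$.

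Morally, the boundary terms cancel because $\phi$ is \emph{continuous} across every face $\Delta_z^{i,j}$ (where the two candidate distances agree), even though $\sigma$ itself is not; the Lipschitz dominator above packages this cancellation automatically, so no explicit boundary computation is needed. Putting the three ingredients together---pointwise derivative on a full-measure set, uniform Gaussian-integrable dominator, dominated convergence---yields the claimed formula for $\nabla_z \losslim{z}{x}$.
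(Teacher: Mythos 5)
Your argument is correct, and it takes a genuinely different route from the paper's. The paper perturbs $z$ along the group orbit via $C_\theta\in\cso(L)$, splits the Gaussian integral into the set $\Omega_\theta$ where the minimizing shift changes between $z$ and $C_\theta z$ and its complement, and then shows through explicit inner-product estimates (the bound $|\langle z-\sigma_\ell(z),\sigma_\ell(t)\rangle|\le\mathcal{O}(\theta\|t\|)$ together with $\operatorname{vol}(\Omega_\theta)\to 0$) that the near-discriminant contribution is $o(\theta)$. You instead observe that the pointwise integrand $\phi(z,w)=\tfrac12\min_k\|z-\sigma_k(w)\|^2$ is a minimum of finitely many quadratics, hence locally Lipschitz in $z$ with the explicit dominator $\|z\|+1+\|w\|$, differentiable with gradient $z-\falign{z}{w}$ off the measure-zero set of $w$ where the argmin is non-unique (a finite union of hyperplanes, since $z\notin\mathcal{P}_L$), and then apply dominated convergence to the difference quotients. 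Your envelope-plus-Lipschitz packaging absorbs exactly the cancellation that the paper handles by hand: $\phi$ is continuous across each face $\Delta_z^{i,j}$ even though $\falign{z}{\bullet}$ jumps, so no Reynolds boundary term survives. Two further points in your favor: your argument yields the full ambient gradient of the natural extension of $\mathcal{L}$ to a neighborhood of $\mathcal{M}_{m_2}$, whereas the paper's proof only controls derivatives along the $\cso(L)$ orbit (which is what is actually needed for Proposition~\ref{prop:crit_paral}, but is strictly less); and your dominator argument is uniform in the direction of perturbation, so it gives Fr\'echet rather than merely directional differentiability. The one hypothesis you should state explicitly is that $z$ is non-periodic (as required by Definition~\ref{def:loss_func}), since that is what guarantees $\sigma_{-i}(z)\neq\sigma_{-j}(z)$ and hence that your exceptional set in $w$ really is a union of genuine hyperplanes rather than all of $\mathbb{R}^L$.
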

\begin{proof}
    The proof idea is to apply Lebesgue's dominated convergence theorem with the Gaussian measure $\mathrm{d}P(\epsilon) = \phi_\tau(\epsilon)\mathrm{d}\mu(\epsilon)$ to change the order of integration and differentiation. Given that we are allowed to switch integration and differentiation, the result follows from a simple computation:
    \[
        \nabla_z \losslim{z}{x} = \nabla_z\mathbb{E}\left[\tfrac{1}{2}\|z-\falign{z}{x+\epsilon}\|^2\right] = \mathbb{E}\left[z-\falign {z}{x+\epsilon}\right] = z - \falignlim{z}{x}.
    \]
    Now, to see why this operation is allowed, we fix $z,x$ and let $\nu\in\mathbb{R}^L$ be an arbitrary unit vector. Then, the sequence
    \[
    f_n(\epsilon):=n\left(\tfrac{1}{2}\|z+n^{-1}\nu-\falign{x+\epsilon}{z+n^{-1}\nu}\|^2 - \tfrac{1}{2}\|z-\falign{x+\epsilon}{z}\|^2\right)
    \]
    satisfies $\lim_{n\to \infty}f_n(\epsilon) = \nu\cdot \nabla_z \tfrac{1}{2}\|z-\falign{x+\epsilon}{z}\|^2$. Furthermore, $f_n$ are all integrable with respect to the Gaussian measure $P$. \cref{lem: lipschitz} with $u=z+\nu/n$, $v=z$ and $w=x+\epsilon$ gives
    \[
        |f_n(\epsilon)|\leq n\|n^{-1}\nu\|\left(\|z + \tfrac{1}{2}n^{-1}\nu\|+\|x+\epsilon\|\right)\leq \|\epsilon\|+\|z\|+\|x\|+\tfrac{1}{2}.
    \]
    Let $g(\epsilon) = \|\epsilon\|+\|z\|+\|x\|+\tfrac{1}{2}$. Then, $|f_n(\epsilon)|\leq g(\epsilon)$ and $g$ is integrable with respect to the Gaussian measure $P$ since the Gaussian measure has finite moments. Lebesgue's dominated convergence theorem now gives
    \[
        \nu\cdot\nabla_z \mathbb{E}\left[\|z-\falign{x+\epsilon}{z}\|^2\right] = \lim_{n\to\infty}\mathbb{E}[f_n(\epsilon)] = \mathbb{E}\left[\lim_{n\to\infty}f_n(\epsilon)\right] = \nu\cdot(z -\falignlim{z}{x}).
    \]
    The proof follows since $\nu$ is an arbitrary direction.
\end{proof}

From~\cref{prop:gradi-formula}, we see that the gradient operator and data limit operator commute when applied to the loss function $\lossdata{\bullet }{\mathcal{X}_N}$. It turns out that many truncated gradient-based optimizers also commute with the data limit operator when applied to the loss function, as the below theorem shows. The proof (see~\cref{apx: optimizer convergence}) is complicated, and the convergence is slow, because the terms of the loss function become correlated over multiple iterations. 

\begin{theorem}\label{thm: iterated convergence full}
    Let $\mathcal{X}_N=\{\xi_n\}_{n=1}^N$ be MRA samples with variance $\tau>0$ based on a signal $x\in \mathbb{R}^L$ as described in~\cref{eq:observed_signals_noshift}. Furthermore, let $a\colon \mathbb{R}^L\times \mathbb{R}^L\to\mathbb{R}^L$ satisfy $\|a(y_1,z_1)-a(y_2,z_2)\|\leq C_a(\|y_1-y_2\|+\|z_1-z_2\|)$ for all $z_1,z_2,y_1,y_2\in\mathbb{R}^L$ and some $C_a>0$. Let $y\in \mathbb{R}^L$ be a non-periodic signal. Let $f_N^k$ and $f^k$ denote $k$ repeated compositions of the functions $f_N(y):=a(y,\faligndata{y}{\mathcal{X}_N})$ and $f(y):=a(y,\falignlim{y}{x})$, respectively. Then, for any $\epsilon\in(0,1)$ and any $k$ and $N>M$ with $M$ sufficiently large, it holds that
    \begin{equation}
        \mathbb{P}(\|f_N^k(y)-f^k(y)\|\geq \epsilon)  \leq C_1\frac{(\log N)^{L/2}}{N}+ C_2\frac{\log N}{N\epsilon^2},
    \end{equation}
    where $M$, $C_1>0$ and $C_2>0$  depend on $C_a, L, k, y,\tau$ and $x$ but not $\epsilon$ or $N$. In particular, $\lim_{N\to\infty}\mathbb{P}(\|f_N^k(y)-f^k      (y)\|\geq \epsilon)=0$ for any finite $k$ and $\epsilon>0$.
\end{theorem}
\begin{remark}
The iterates generated by our MCA algorithm, which we introduce in section~\ref{sec:method}, uses the same construction as above, with the specific choice
\[
    a(y,z)[n] = (\mathrm{proj}_{\mathcal{M}_{m_2}}z)[n] = \mathcal{F}^{-1}\left\{m_2[k]\hat z[k]/|\hat z[k]|\right\}[n].
\]
This corresponds to block coordinate descent (note that this choice of $a$ only depends on the second argument). This choice of $a$, however, is not Lipschitz due to problems at the origin. For the theory to hold, we could regularize the denominator:
\[
    a(y, z)[n] = (\mathrm{proj}_{\mathcal{M}_{m_2}}z)[n] = \mathcal{F}^{-1}\left\{m_2[k]\hat z[k]/(\delta + |\hat z[k]|)\right\}[n],
\]
for some small $\delta>0$, which has Lipschitz constant bounded by $C_a=\|m_2\|_\infty/\delta$. Other methods can be described on the same form, gradient descent takes the form $a(y,z) = (1-\gamma)y +\gamma z$ where $\gamma>0$ is the step size.   
\end{remark}

One might expect that the same argument used in the proof of \cref{prop:gradi-formula} could be applied to compute the second-order derivative \(\nabla_{z}^2 \losslim{z}{x}\). However, the Lipschitz-like properties of the integrand in the loss function (\cref{lem: lipschitz}) do not carry over to its gradient. In fact, the samples $x+\epsilon$ close to the total discriminant \(\Delta_z\) dominate the computation of \(\nabla_{z}^2 \losslim{z}{x}\). Therefore, when computing \(\nabla_{z}^2 \losslim{z}{x} = \nabla_z(z - \falignlim{z}{x})\), we cannot simply switch order of integration and differentiation and neglect the non-differentiability at points on the discriminant $\Delta_z$. Instead, one would need to differentiate the alignment in a distribution-theoretic sense. As illustrated in \cref{fig:loss-L=5}, examining the level sets of the expected loss function reveals that its Hessian cannot be positive definite for all \(z\), much less be the identity matrix. 

In the following section, we discuss the existence and character of critical points of $\losslim{\bullet}{x}$.


\subsection{Critical Points}

Using the expression for $\nabla_{z} \losslim{z}{x}$ provided in \cref{prop:gradi-formula}, and considering the specific geometry of phase manifolds, we can now find the condition under which a signal is critical for $\losslim{\bullet}{x}$.

\begin{proposition}
\label{prop:crit_paral}
    Let $x$ be a non-periodic signal on a phase manifold $\mathcal{M}_{m_2}$, and $\tau>0$. Then $z \in \mathcal{M}_{m_2}$ is critical for $\losslim{\bullet}{x}$ if and only if $\falignlim{z}{x}$ is perpendicular to the tangent space $\mathcal{T}_z \mathcal{M}_{m_2}$. In particular, for $z \in \operatorname{Crit}(\losslim{\bullet}{x})$, every non-zero Fourier mode $\widehat{\falignlim{z}{x}}[k]$ with $k \neq 0$ can exhibit either an in-phase configuration ($0$° phase shift) or an out-of-phase configuration ($180$° phase shift) with respect to $\widehat{z}[k]$.
\end{proposition}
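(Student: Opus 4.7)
The plan is to combine the gradient formula $\nabla_z\losslim{z}{x} = z - \falignlim{z}{x}$ from Proposition~\ref{prop:gradi-formula} with the Fourier-domain description of the phase manifold as a product of amplitude circles. Since $\losslim{\bullet}{x}$ is smooth on $\mathcal{M}_{m_2}$ (Proposition~\ref{prop:loss-smooth}), criticality is equivalent to the ambient gradient being normal to $\mathcal{T}_z \mathcal{M}_{m_2}$. So the task reduces to showing that the ambient contribution of $z$ itself already lies in the normal bundle, leaving the condition $\falignlim{z}{x}\perp \mathcal{T}_z \mathcal{M}_{m_2}$.

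The first step is to characterize the tangent space. The manifold is cut out by the constraints $m_1(z)=0$ together with $|\widehat{z}[k]|^2 = \widehat{m}_2[k]$ for every $k$, so a vector $v\in \mathbb{R}^L$ lies in $\mathcal{T}_z \mathcal{M}_{m_2}$ iff $\widehat{v}[0]=0$ and $\operatorname{Re}\bigl(\widehat{v}[k]\,\overline{\widehat{z}[k]}\bigr)=0$ for all $k\neq 0$. Using Parseval,
\[
\langle z,v\rangle = \frac{1}{L}\sum_{k=0}^{L-1} \widehat{z}[k]\,\overline{\widehat{v}[k]} = \frac{1}{L}\sum_{k\neq 0} \operatorname{Re}\bigl(\widehat{z}[k]\,\overline{\widehat{v}[k]}\bigr) = 0,
\]
where the reduction to real parts uses the Hermitian symmetry $\widehat{v}[-k]=\overline{\widehat{v}[k]}$ inherited from $v$ being real. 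Hence $z\perp \mathcal{T}_z\mathcal{M}_{m_2}$, and the critical-point equation $(z-\falignlim{z}{x})\perp \mathcal{T}_z\mathcal{M}_{m_2}$ collapses to $\falignlim{z}{x}\perp \mathcal{T}_z\mathcal{M}_{m_2}$, giving the first assertion.

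For the ``in particular'' statement, I would exhibit the tangent space explicitly: for each $k\neq 0$ with $\widehat{z}[k]\neq 0$, the motion along the $k$-th amplitude circle corresponds (in the Fourier domain) to a tangent direction spanned by $i\widehat{z}[k]$ at mode $k$ and its conjugate at mode $-k$. Requiring $\falignlim{z}{x}$ to be normal to every such generator amounts, again by Parseval and Hermitian symmetry, to
\[
\operatorname{Re}\bigl(i\widehat{z}[k]\,\overline{\widehat{\falignlim{z}{x}}[k]}\bigr) = -\operatorname{Im}\bigl(\widehat{z}[k]\,\overline{\widehat{\falignlim{z}{x}}[k]}\bigr) = 0
\]
for every such $k$. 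This is precisely the statement that $\widehat{\falignlim{z}{x}}[k]$ is a real scalar multiple of $\widehat{z}[k]$ whenever both are non-zero, i.e., the two Fourier modes are either in phase or exactly out of phase.

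I do not expect genuine obstacles here; the proof is essentially a structural verification. The only delicate point is the bookkeeping of the real-signal constraint in the Fourier domain, to make sure that at each non-zero frequency the torus contributes exactly one real tangent direction (rather than two complex ones), so that the perpendicularity condition translates cleanly into a scalar ``in/out of phase'' statement on $\widehat{z}[k]$ and $\widehat{\falignlim{z}{x}}[k]$.
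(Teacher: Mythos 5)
Your proof is correct and follows essentially the same route as the paper: invoke the gradient formula $\nabla_z \losslim{z}{x} = z - \falignlim{z}{x}$ from Proposition~\ref{prop:gradi-formula}, observe that $z$ itself is normal to $\mathcal{T}_z\mathcal{M}_{m_2}$ because of the torus structure of the phase manifold, and reduce criticality to $\falignlim{z}{x}\perp\mathcal{T}_z\mathcal{M}_{m_2}$. The only difference is presentational: you verify $z\perp\mathcal{T}_z\mathcal{M}_{m_2}$ and the in-phase/out-of-phase dichotomy explicitly via the Fourier-domain constraint gradients and Parseval, whereas the paper simply cites Remark~\ref{rmk:alpha-torus} and leaves the ``in particular'' clause implicit.
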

\begin{proof}
    By \cref{prop:gradi-formula}, we have $\nabla_z \losslim{z}{x} = z - \falignlim{z}{x}$. Note that a signal $z \in \operatorname{Crit}(\losslim{\bullet}{x})$ is equivalent to $\nabla_z \losslim{z}{x} \perp \mathcal{T}_z \mathcal{M}_{m_2}$.
    According to \cref{rmk:alpha-torus}, $\mathcal{M}_{m_2}$ is a $d$-dimensional torus if $L$ is odd and the union of two antipodal $d$-dimensional tori if $L$ is even, hence $z$ is perpendicular to $\mathcal{T}_z \mathcal{M}_{m_2}$. So, $z \in \operatorname{Crit}(\losslim{\bullet}{x})$ if and only if $\falignlim{z}{x} \perp \mathcal{T}_z \mathcal{M}_{m_2}$.
\end{proof}

Thus far, we have shown that the expected loss function $\losslim{\bullet}{x}$ is smooth when $\tau>0$ and we computed its gradient. In the next proposition, we introduce a finite set of points on the phase manifold $\mathcal{M}_{m_2}$ such that they belong to $\operatorname{Crit}(\losslim{\bullet}{x})$.

\begin{proposition}
\label{prop:align-same-phase}
    Let $L$ be odd and $x \in \mathbb{R}^L$ be a non-periodic signal on a phase manifold $\mathcal{M}_{m_2}$. Then, a vector $z \in \mathcal{M}_{m_2}$ with $\widehat{z}[k] \in \{\widehat{x}[k],-\widehat{x}[k]\}$ is critical for $\losslim{\bullet}{x}$. In particular, up to circular shift, there are exactly  $2^{\#\supp(\widehat{m}_2)}$ signals on $\mathcal{M}_{m_2}$ that exhibit such a configuration. 
\end{proposition}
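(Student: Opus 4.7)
The plan is to combine the $\cso(L)$-equivariance of Lemma \ref{lem:invariant-loss} with an additional reflection symmetry in order to force every Fourier coefficient of $\falignlim{z}{x}$ to be real, after which criticality follows immediately from Proposition \ref{prop:crit_paral}.

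First, I would reduce to a canonical representative. Since $L$ is odd, $\cso(L)$ acts transitively on $\mathcal{M}_{m_2}$ (Remark \ref{rmk:transit}), so some $C_0 \in \cso(L)$ sends $x$ to $x_0 := C_0 x$ with $\widehat{x_0}[k] \ge 0$ for every $k$. By Lemma \ref{lem:invariant-loss}, the map $w \mapsto C_0 w$ is an isometry of $\mathcal{M}_{m_2}$ that identifies critical points of $\losslim{\bullet}{x}$ with those of $\losslim{\bullet}{x_0}$, and it carries the condition $\widehat{z}[k] \in \{\pm\widehat{x}[k]\}$ to $\widehat{C_0 z}[k] \in \{\pm\widehat{x_0}[k]\}$. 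Hence it suffices to establish the claim for $x_0$, and we may henceforth assume that both $x$ and $z$ have real Fourier transforms.

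Next, I would introduce the reflection $\flip$ defined by $(\flip y)[n] = y[-n]$. This operator is orthogonal, preserves the Gaussian measure and the phase manifold, and satisfies $\flip\sigma_k = \sigma_{-k}\flip$ together with $\widehat{\flip y}[k] = \overline{\widehat{y}[k]}$, so in particular $\flip x = x$ and $\flip z = z$. Following the calculation of Lemma \ref{lem:invariant-loss}, the chain
\begin{equation*}
\langle \flip z,\, \sigma_k(\flip t)\rangle = \langle \flip z,\, \flip \sigma_{-k}(t)\rangle = \langle z,\, \sigma_{-k}(t)\rangle
\end{equation*}
yields $\falign{\flip z}{\flip t} = \flip\,\falign{z}{t}$, and the change of variable $y \mapsto \flip y$ in \eqref{eq:averaged-aligned-integral} promotes this to
\begin{equation*}
\falignlim{\flip z}{\flip x} = \flip\,\falignlim{z}{x}.
\end{equation*}
Specializing to $\flip x = x$ and $\flip z = z$ gives $\flip\,\falignlim{z}{x} = \falignlim{z}{x}$, so each Fourier coefficient $\widehat{\falignlim{z}{x}}[k]$ is real. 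Since $\widehat{z}[k]$ is also real, the two are in-phase or out-of-phase at every mode, and Proposition \ref{prop:crit_paral} certifies that $z$ is critical for $\losslim{\bullet}{x}$.

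Finally, I would count the orbits. The $2^d$ sign patterns $\epsilon \in \{\pm 1\}^{\supp(\widehat{m}_2)}$ produce $2^d$ candidate signals $z_\epsilon$; two of them satisfy $z_\epsilon = \sigma_r(z_{\epsilon'})$ iff $\epsilon_k/\epsilon'_k = e^{2\pi i k r/L}$ for every $k \in \supp(\widehat{m}_2)$. For $L$ odd the only real value in $\{e^{2\pi i k r /L}\}$ is $+1$, so this reduces to $e^{2\pi i k r /L} = 1$ for every such $k$, which is precisely the condition $\sigma_r(x) = x$; non-periodicity of $x$ then forces $r = 0$ and $\epsilon = \epsilon'$, so the $2^d$ candidates lie in distinct circular-shift orbits. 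The main obstacle I anticipate is justifying the reflection-equivariance step rigorously, in particular verifying that the $\argmin$ in the definition of $\sigma(\flip z\,; \flip t)$ is unambiguous off a Gaussian-null set, which is what allows the change of variable in \eqref{eq:averaged-aligned-integral} to commute with $\flip$.
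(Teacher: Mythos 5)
Your argument is correct and follows essentially the same route as the paper: reduce to a zero-phase representative of $x$ via Lemma \ref{lem:invariant-loss}, exploit the reflection symmetry $\flip$ together with the $\flip$-invariance of the Gaussian measure to show $\falignlim{z}{x}$ is circularly symmetric, and conclude via Proposition \ref{prop:crit_paral}; the paper merely organizes the same symmetry term-by-term through the probabilities $p_k$ and the region integrals $Y_k$ rather than as a single equivariance identity plus change of variables. Your explicit orbit-counting argument for the "$2^{\#\supp(\widehat{m}_2)}$ up to circular shift" claim is a welcome addition, since the paper asserts it without proof, and your tie-breaking concern is harmless because the discriminant has Gaussian measure zero.
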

\begin{proof}
Without loss of generality, by \cref{lem:invariant-loss}, we can assume that the true signal $x$ has zero Fourier phases. Since $x$ and $z$ differ only by sign flips on the frequencies, we have that $z$ is circularly symmetric, i.e., its Fourier modes are all real, or equivalently, $z$ is invariant under the flipping operator $\flip$. Now, it suffices to show that any such $z\in\mathcal{M}_{m_2}$ is critical for the loss.

By \cref{prop:crit_paral}, we can deduce that $z$ is critical for $\losslim{\bullet}{x}$ if and only if $\falignlim{z}{x}$ is circular symmetric. Since both $x$ and $z$ are invariant under the flipping operator $\flip$, the equivariance property of the averaged aligning function in \cref{lem:invariant-loss} gives
\[
\flip \falignlim{z}{x}
\;=\;
\falignlim{\flip z}{\flip x}
\;=\;
\falignlim{z}{x},
\]
which finishes the proof.
\end{proof}

In our forthcoming conjecture, we discuss the critical points of $\losslim{\bullet}{x}$ and their type. Specifically, we propose that when $L$ is odd, there are no additional critical points for $\losslim{\bullet}{x}$ beside those described in \cref{prop:align-same-phase}.   
\begin{conjecture}
\label{conj}
Let $x \in \mathcal{M}_{m_2}$ be an odd-size signal and $\tau>0$. Then the loss function $\losslim{\bullet}{x}$
    has exactly ${2^{\#\supp(\widehat{m}_2)}}$ critical points up to circular shifts, which are those described in \cref{prop:align-same-phase}. In particular, $x$ and $-x$ correspond to the global minimum and maximum, respectively, of $\losslim{\bullet}{x}$, while the remaining critical points represent saddle points (see \cref{fig:loss-L=5}).
\end{conjecture}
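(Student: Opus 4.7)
The plan is to attack Conjecture~\ref{conj} in three stages: a Morse-theoretic completeness argument to rule out critical points beyond those in Proposition~\ref{prop:align-same-phase}, a variational argument singling out $x$ and $-x$ as the global minimum and maximum, and a Hessian-based classification of the remaining critical points as non-degenerate saddles. For the first stage I would exploit the fact that, by Remark~\ref{rmk:alpha-torus}, $\mathcal{M}_{m_2}$ is a $d$-torus with $d=\#\supp(\widehat{m}_2)$ and total Betti sum $2^d$. The Morse inequalities on $T^d$ state that any Morse function admits at least $2^d$ critical points, with at least $\binom{d}{k}$ of index $k$. Since Proposition~\ref{prop:align-same-phase} already produces exactly $2^d$ critical points, it suffices to prove that $\losslim{\bullet}{x}$ is Morse: non-degeneracy at each candidate saturates the Morse inequalities and forbids further critical points. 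The Hessian must be computed as $\nabla_z^2 \losslim{z}{x} = I - \nabla_z \falignlim{z}{x}$, with the second term recovered as a limit of surface integrals on the boundaries $\partial C_i(z)$ of the cones in Remark~\ref{rmk:partitions_delta}; the $\flip$-symmetry used in the proof of Proposition~\ref{prop:align-same-phase} should diagonalize these surface integrals in Fourier coordinates, both identifying the Morse index as $\#\{m : u_m = -1\}$ for the critical point labeled by $\mathbf u\in\{\pm 1\}^d$ and certifying non-degeneracy.

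For the second stage, write
\[
2\,\losslim{z}{x} \;=\; \|z\|^2 + \|x\|^2 + L\tau^2 - 2\,\mathbb{E}\!\left[\max_k\langle \sigma_k(z),x+\epsilon\rangle\right]
\]
and note that $\|z\|^2 = L\,m_2[0]$ is constant on $\mathcal{M}_{m_2}$, so minimizing $\losslim{\bullet}{x}$ is equivalent to maximizing $Q(z)=\mathbb{E}[\max_k\langle\sigma_k(z),x+\epsilon\rangle]$. The centered Gaussian process $k\mapsto\langle\sigma_k(z),\epsilon\rangle$ has covariance $\tau^2 L\,m_2[k-l]$, which is $z$-independent on $\mathcal{M}_{m_2}$, so the noise contribution to $Q$ is a $z$-independent constant. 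The deterministic means decompose in Fourier as $\langle\sigma_k(z),x\rangle = \tfrac{1}{L}\sum_m |\widehat{x}[m]|^2\cos(2\pi km/L + \theta_m(z) - \phi_m(x))$ and are bounded pointwise by $\|x\|^2$, with equality exactly on circular shifts of $x$. I would then upgrade this pointwise bound on the means to a strict inequality for $Q$ by combining convexity of $\mu\mapsto \mathbb{E}[\max_k(\mu_k+Z_k)]$ with the gradient identity from Stage~1 to rule out any tie at a non-shift competitor. The analogous characterization of $-x$ as the global maximum follows from the identity $\losslim{-z}{x}=\losslim{z}{-x}$ together with the $\mathbf{u}=-\mathbbm{1}_L$ symmetry of Lemma~\ref{lem:x_u}.

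With $x$ fixed as the unique index-$0$ critical point and $-x$ as the unique index-$d$ critical point, the Morse inequalities force each of the remaining $2^d - 2$ critical points to be a non-degenerate saddle, with index pattern matching the Stage~1 computation. The hardest step of the program — and, presumably, the reason the statement is posed only as a conjecture — is establishing non-degeneracy of the Hessian uniformly in $\tau>0$: as $\tau\to 0^+$ the aligning function collapses to a deterministic argmax and the surface integrals on $\partial C_i(z)$ become singular, so the Morse property cannot be verified directly from the integral representation behind Proposition~\ref{prop:gradi-formula}. To handle this I would introduce a soft-max regularization of $\falign{\bullet}{\bullet}$, prove the Morse property in the smooth regime, and transfer the critical-point count to the true loss via a homotopy in the regularization parameter controlled by the dominated convergence bounds already exploited in the proof of Proposition~\ref{prop:loss-smooth}.
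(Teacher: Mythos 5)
First, note that the statement you are proving is deliberately left open in the paper: it is stated as a conjecture, supported only by numerical evidence and by Theorem~\ref{prop:min special case}, which settles the single-frequency case where $\mathcal{M}_{m_2}$ is a circle. So there is no proof in the paper to compare against; what you have written is a research program, and it must be judged on whether its steps could in principle close the problem. The central step cannot. Your Stage~1 and Stage~3 both rest on the claim that establishing non-degeneracy of the $2^d$ known critical points ``saturates the Morse inequalities and forbids further critical points.'' The Morse inequalities $c_\gamma \ge b_\gamma(\mathcal{M})$ are one-sided lower bounds: a Morse function on $T^d$ may have arbitrarily many critical points beyond the Betti numbers (extra minima are simply compensated by extra saddles in the Euler-characteristic identity). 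Worse, in this problem the inequalities are provably \emph{not} saturated: critical points come in orbits of size $L$ under circular shift, and the paper's own $L=5$ example records $c_0=5>b_0=1$, $c_1=10>b_1=2$, $c_2=5>b_2=1$. So no Morse-theoretic counting argument can rule out additional critical points, and Stage~3's conclusion that the remaining points ``must'' be saddles does not follow either --- nothing in $c_0-c_1+c_2=0$ prevents some of them from being spurious local minima, which is exactly the scenario the conjecture is meant to exclude.

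Two further points. Your Stage~2 decomposition $2\losslim{z}{x}=\|z\|^2+\|x\|^2+L\tau^2-2\,\mathbb{E}[\max_k\langle\sigma_k(z),x+\epsilon\rangle]$ is correct and genuinely useful (it reduces the global-minimum claim to maximizing $F(\mu(z))$ for a fixed convex functional $F$ and the mean vector $\mu_k(z)=\langle\sigma_k(z),x\rangle$, since the centered process has $z$-independent law on the manifold). But the next sentence, that ``the noise contribution to $Q$ is a $z$-independent constant,'' is false as stated --- the expectation of a max does not split into a mean part plus a noise part --- and convexity of $F$ together with the pointwise bound $\mu_k(z)\le\|x\|^2$ does not identify the maximizer of $F\circ\mu$ over the torus; this is essentially the open problem restated. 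Likewise the symmetry $\losslim{-z}{x}=\losslim{z}{-x}$ converts the global-maximum claim into maximizing $\mathbb{E}[\min_k\langle\sigma_k(z),x+\epsilon\rangle]$, which is a different optimization and does not follow from the minimum claim. Finally, the Hessian formula $\nabla_z^2\losslim{z}{x}=I-\nabla_z\falignlim{z}{x}$ with the second term as a surface integral over $\partial C_i(z)$ is the right object to study (the paper warns that the discriminant contributions dominate here), but your index formula and non-degeneracy are asserted, not derived, and the $\tau\to 0^+$ regularization-homotopy you propose requires controlling the critical set along the entire homotopy --- which is again the original difficulty. The honest summary is that your program, even if every analytic step were completed, would establish only that the known points are non-degenerate with the expected indices; it would not prove the completeness claim that is the heart of the conjecture.
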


Although \cref{conj} appears natural, proving it universally is challenging. After conducting numerous experiments across various signals with different values of $\tau$ and a sufficient number of samples, we consistently observed that the true signal and its circular shifts are the only local minima of the expected loss function. 

In connection with the conjecture, the following theorem shows that if $\mathcal{M}_{m_2}$ forms a circle (e.g., when $L$ is odd and the true signal $x$ is a sinusoid), then $z=x$ is indeed a local minimum of $\losslim{z}{x}$.
\begin{figure}[ht]
    \centering
    \includegraphics[width=0.5\textwidth]{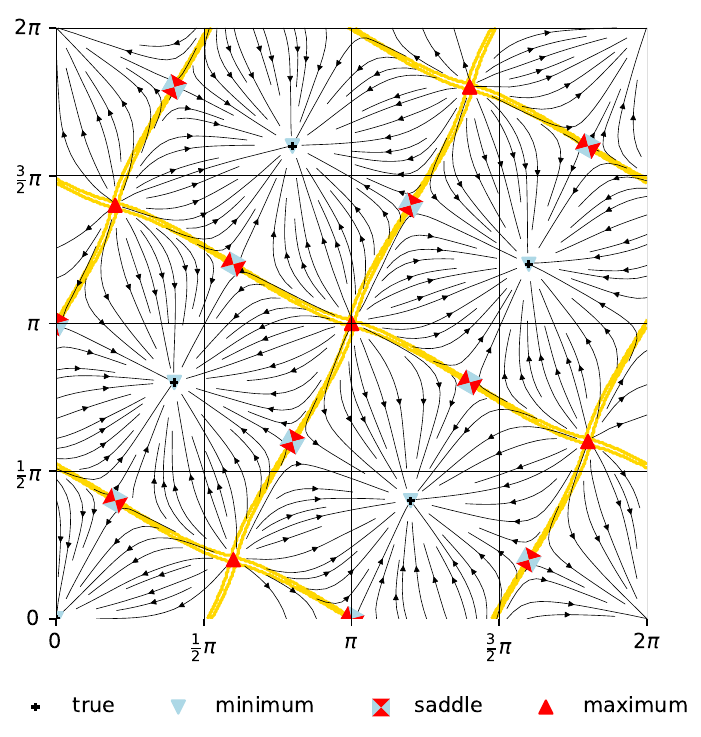}
    \caption{The intersection of the total discriminant of $z = (4/5, -1/5, -1/5, -1/5, -1/5)^\top$ with its corresponding phase manifold is depicted by the gold curve. The gradient $\nabla_z\losslim{z}{x}$ is drawn black, and critical points are marked as minimum, saddle, and maximum, respectively. The signal is marked ``true''.}
    \label{fig:discrim2}
\end{figure}
The significance of this theorem lies in its implication that the MRA problem can be solved for each Fourier mode independently, leading to a finite number of potential solutions for reconstructing the true signal. 
We found this method less stable compared to the simultaneous reconstruction of all modes at once. Additionally, the frequency-wise reconstructions is complicated by having mutually distinct shifts for each frequency. This issue can be solved by finding the shift configuration with minimal loss value, but that makes the reconstruction very costly as the configuration space is extremely large.

\begin{theorem}\label{prop:min special case}
    Let $L$ be odd and $x[n] = c\cos(2\pi k n / L - \phi_0)$ for some integer $k \neq 0$, $\phi \in \mathbb{R}$ and $c > 0$. Then $z=x$ is a local minimum for $\losslim{z}{x}$.
\end{theorem}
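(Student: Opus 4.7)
By the $\cso(L)$-invariance of $\mathcal{L}$ (Lemma~\ref{lem:invariant-loss}), I may assume $\phi=0$, and for concreteness $\gcd(k,L)=1$ so that $x\in\mathbb{R}^L\setminus\mathcal{P}_L$ and $\mathcal{M}_{m_2}$ is a single $1$-torus by Remark~\ref{rmk:alpha-torus}. Parameterize this circle by $z_\alpha[n]:=c\cos(\omega n-\alpha)$ with $\omega=2\pi k/L$, so $z_0=x$. It suffices to show that $h(\alpha):=\losslim{z_\alpha}{x}$ satisfies $h''(0)>0$.

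\textbf{Reduction to a scalar convolution.} Since $\widehat{z_\alpha}$ is supported only on modes $\pm k$, Parseval's identity collapses the $\alpha$-dependence of $h$ to those two modes. Writing $W:=cL/2+\widehat{\epsilon}[k]=\rho e^{i\beta}$ in polar form, the alignment step picks $\omega r^*$ nearest to $\alpha-\beta$ modulo $2\pi$, and a direct Fourier-domain computation yields
\[
h(\alpha)\;=\;\mathrm{const}\;-\;c\,\mathbb{E}\bigl[\rho\,M(\alpha-\beta)\bigr],\qquad M(\psi):=\max_{r}\cos(\psi-\omega r),
\]
where the constant absorbs the $\alpha$-independent contribution of the other Fourier modes and $\|z_\alpha\|^2$. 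Viewing this as $F=M*R$ on the circle, with $R(\beta):=\int\rho\,f_{\rho,\beta}(\rho,\beta)\,d\rho$, both $M$ and $R$ are real, even, and smooth ($R$ inherits evenness from the reality of $cL/2$, since the joint density factors as $C(\rho)\exp(c\rho\cos\beta/\tau^2)$ with $C(\rho)>0$). Hence $F$ is even and $h'(0)=0$, recovering the critical-point property of Proposition~\ref{prop:align-same-phase}.

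\textbf{Fourier expansion and second derivative.} Because $M$ has period $2\pi/L$, its Fourier coefficients on $[0,2\pi)$ vanish except at multiples of $L$; a direct integration gives
\[
\hat{M}_{jL}\;=\;\frac{(-1)^{j+1}\sin(\pi/L)}{\pi L(j^2-L^{-2})}\qquad (j\neq 0),
\]
which alternates in sign. For $R$, the Bessel identity $\tfrac{1}{2\pi}\int_0^{2\pi}e^{x\cos\beta}e^{-im\beta}\,d\beta=I_m(x)$ yields
\[
\hat{R}_m\;=\;\int_0^\infty\rho\,C(\rho)\,I_m(c\rho/\tau^2)\,d\rho\;>\;0,
\]
and the classical monotonicity $I_m(x)>I_{m+1}(x)$ for $x>0$ makes $\hat{R}_m$ strictly decreasing in $|m|$. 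Parseval then gives
\[
h''(0)\;=\;4cL\sin(\pi/L)\sum_{j\geq 1}\frac{(-1)^{j+1}\,j^2}{j^2-L^{-2}}\,\hat{R}_{jL}.
\]

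\textbf{Positivity via Leibniz.} The argument closes by verifying that this alternating series is strictly positive. Its magnitudes $a_j:=\frac{j^2}{j^2-L^{-2}}\,\hat{R}_{jL}$ are strictly decreasing in $j$: the rational prefactor equals $1+\frac{L^{-2}}{j^2-L^{-2}}$ which decreases in $j$, while $\hat{R}_{jL}$ decreases by the Bessel monotonicity. The Leibniz criterion then gives a strictly positive sum, hence $h''(0)>0$ and $z=x$ is a strict local minimum. The delicate part of the argument is this final sign analysis: the specific Gaussianity of the noise enters essentially through the Bessel representation of $\hat{R}_m$, which is what guarantees that the estimate holds uniformly across all $\tau>0$ rather than only in a perturbative regime.
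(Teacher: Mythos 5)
Your proof is correct, and it takes a genuinely different route from the paper's. The paper argues by direct comparison of loss values: it pairs each noise realization $\epsilon$ with its flip $\flip\epsilon$, restricts to samples in a cone $\Omega$ away from the total discriminant, and proves the pointwise inequality $\|x-\falign{x}{\xi}\|^2+\|x-\falign{x}{\flip \xi}\|^2<\|Cx-\falign{Cx}{\xi}\|^2+\|Cx-\falign{Cx}{\flip \xi}\|^2$, which reduces to the sign of $\langle Cx-x,\,x+\tfrac{\epsilon+\flip\epsilon}{2}\rangle$; the samples near the discriminant are then discarded by the argument of Proposition~\ref{prop:gradi-formula}. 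You instead observe that on a one-dimensional phase manifold the entire $\alpha$-dependence of the loss collapses to a circular convolution $F=M\ast R$ of the snapped cosine $M(\psi)=\max_r\cos(\psi-\omega r)$ with the marginal $R$ of the noisy phase, and you compute $h''(0)$ exactly as an alternating series whose positivity follows from the Leibniz criterion once the Bessel monotonicity $I_m>I_{m+1}>0$ forces the magnitudes $a_j$ to decrease. I verified your coefficient $\hat M_{jL}$ and the sign structure; the overall positive prefactor of $h''(0)$ depends on normalization conventions but is irrelevant to the conclusion. What your approach buys: it is exact and quantitative, yielding $h''(0)>0$ and hence a strict non-degenerate minimum (which feeds directly into the paper's Morse-theoretic discussion and could in principle classify the other critical points on the circle), and it entirely avoids the discriminant-neighborhood bookkeeping, where the paper's retained gain and discarded error are both $\mathcal{O}(\theta^2)$ and must be weighed against each other. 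What the paper's argument buys: it needs no Fourier expansion of $M$ and reuses the $\flip$-symmetrization device that recurs in Proposition~\ref{prop:align-same-phase} and elsewhere. Two small points to tighten in your write-up: justify the term-by-term second differentiation of the Fourier series (immediate from the rapid decay of $\hat R_{jL}$, or by invoking the smoothness established in Proposition~\ref{prop:loss-smooth}); and note that the hypothesis $\gcd(k,L)=1$ is not merely ``for concreteness'' but is implicit in the theorem, since otherwise $x\in\mathcal{P}_L$ and $\losslim{x}{x}$ is not defined, so your reduction loses no generality.
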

\begin{proof}
    By \cref{lem:invariant-loss}, assume that $x$ is circularly symmetric, that is, set \( \phi_0 = 0 \).
 Let $C \in \cso(L)$ 
    such that $||C x-x||= \mathcal{O}(\theta)$ for some $\theta$ that is close to zero. Note that $C$ acts on the $k$th Fourier mode via $U(1)$, which is simply a rotation of order $\mathcal{O}(\theta)$.
    Now, define
    \[\Omega_x=\left \{z \in \mathbb{R}^L \;\middle|\; z[n] = r \cos(\frac{2\pi kn}{L}-\phi) \text{ for } r>0 \text{ and } -\frac{\pi}{L}<\phi<\frac{\pi}{L}\right \}.\]
    This positive cone consists of all sinusoids of frequency $k$ that are closer to $x$ than any other $\sigma_i(x)$. Similarly, we have $\Omega_{C x} = C \Omega_x$, which possesses the same property for $C x$. Let $\Omega = \Omega_x \cap \Omega_{C x} \cap \Omega_{C^\top x}$, then we have that $\xi = x + \epsilon \in \Omega$ is closest to $x$ and $Cx$ among all their shifts if and only if $\flip \xi = \flip(x + \epsilon) = x + \flip \epsilon$   is also closest to $x$ and $Cx$ among all their shifts. Now, we claim 
\begin{align*}
    & ||x - \falign{x}{\xi}||^2 + ||x - \falign{x}{\flip \xi}||^2 
    < \ ||Cx - \falign{Cx}{\xi}||^2 + ||Cx - \falign{Cx}{\flip \xi}||^2,
\end{align*} 
which by the earlier explanation is equivalent to
\begin{equation}
    \label{eq:C_theta_x}
     ||x - \xi||^2 + ||x - \flip \xi||^2 = 2||\epsilon||^2 
    <  ||Cx - \xi||^2 + ||Cx - \flip \xi||^2.
\end{equation}
    The right-hand side of \cref{eq:C_theta_x} can be further simplified as follows: 
     \begin{align*}
         2\left( ||Cx||^2 + ||\xi||^2-\langle C x,\xi  \rangle -\langle C x,\flip \xi  \rangle \right) &= 4 \left( ||x||^2 + \langle x,\epsilon\rangle-\langle C x,x + \tfrac{1}{2}(\epsilon+\flip \epsilon) \rangle \right) +2||\epsilon||^2\\
         &=4 \left \langle x-C x,x+\tfrac{1}{2}(\epsilon+\flip \epsilon)\right \rangle+2||\epsilon||^2.
     \end{align*}
     Therefore it is enough to show that $\langle C x - x, x + \frac{\epsilon + \flip \epsilon}{2} \rangle < 0$. This follows from the fact that the $k$th Fourier phase of $x + \tfrac{\epsilon + \flip \epsilon}{2}$ is zero since $\xi$ and $\flip \xi$ are closest to $x$, which has zero phase at frequency $k$, and the phase manifold is a circle. In contrast, the $k$th Fourier coefficient $e^{ik\theta} - 1$ of $C x - x$ has a negative real part, so the inner product must be negative.
     
Note that \( \epsilon \) and \( \flip{\epsilon} \) have the same probability density, and the function \( \losslim{z}{x} \) can be interpreted as the expected value of the expression \( \tfrac{1}{2}||z-\falign{z}{x+\epsilon}||^2 + \tfrac{1}{2}||z-\falign{z}{x+\flip{\epsilon}}||^2 \), where \( x+\epsilon \) is closer to \( z \) than other shifts of \( z \).
Furthermore, when \( \theta \) is close to zero, as in the proof of \cref{prop:gradi-formula}, the computation of both \( \losslim{x}{x} \) and \( \losslim{Cx}{x} \) can be restricted to considering only the samples \( \xi = x+\epsilon \) within \( \Omega \). Combining this with \cref{eq:C_theta_x}, we confirm that \( \losslim{x}{x} < \losslim{Cx}{x} \). Hence, \( z = x \) is a local minimum for \( \losslim{z}{x} \).
\end{proof}

\subsection{Low-SNR Regime -- Aligning with Noise}
\label{sec:low_snr}
 In this section, we analyze the averaged aligning function when $\tau \to \infty$, that is when the SNR tends to zero. In the low-SNR regime, the behavior of $\falignlim{z}{x}$ is akin to that of $\falignlimgen{1}{z}{\mathbf{0}_L}$. This relationship can be understood by applying \cref{lem:properties_L_sigma}, yielding
\[ \falignlim{z}{x} = \tau \cdot \falignlimgen{1}{\tfrac{z}{\tau}}{\tfrac{x}{\tau}} = \tau \cdot \falignlimgen{1}{z}{\tfrac{x}{\tau}} \approx \tau \cdot \falignlimgen{1}{z}{\mathbf{0}_L}, \]
as $\tau \gg 0$. 

We now show that, in this regime, aligning a template with white noise yields a signal that preserves the Fourier phases of the template. The phenomenon is studied using statistical tools in \cite{balanov2024einstein}. Our proof, in contrast, is geometric and considerably simpler.
\begin{theorem}
   Let \(z \in \mathbb{R}^L\) be a signal with \(\widehat{z}[k] \neq 0\) for all \(k > 0\) and let \(\tau > 0\). Then \(\falignlim{z}{\mathbf{0}_L}\) has the same Fourier phases as \(z\) and has zero mean.
\end{theorem}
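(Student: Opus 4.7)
The proof splits into a zero-mean claim and a Fourier-phase claim. For the zero-mean claim, I would invoke Lemma~\ref{lem:properties_L_sigma}(3) with $x=\mathbf{0}_L$: alignment commutes with the mean functional and the i.i.d.\ Gaussian noise has vanishing empirical mean in the limit, so $m_1(\falignlim{z}{\mathbf{0}_L})=m_1(\mathbf{0}_L)=0$.

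For the Fourier-phase claim, I would first reduce to a circularly symmetric template using the equivariance of Lemma~\ref{lem:invariant-loss}. By Remark~\ref{rmk:transit}, there exists $C_z$ (in $\operatorname{CSO}(L)$ for odd $L$, in $\operatorname{CO}(L)$ for even $L$) acting on the $k$th Fourier mode as multiplication by $e^{i\arg\widehat{z}[k]}$, so that $z=C_zz_0$ with $z_0$ circularly symmetric and $\widehat{z_0}[k]>0$ on the support of $\widehat{z}$. Since $C_z\mathbf{0}_L=\mathbf{0}_L$, Lemma~\ref{lem:invariant-loss} gives $\falignlim{z}{\mathbf{0}_L}=C_z\,\falignlim{z_0}{\mathbf{0}_L}$, so it suffices to show $\widehat{\falignlim{z_0}{\mathbf{0}_L}}[k]>0$ for every $k$ in the support of $\widehat{z}$.

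Next, the flip symmetry $\flip z_0=z_0$ together with $\flip\epsilon\stackrel{d}{=}\epsilon$ and $\falign{z_0}{\flip\epsilon}=\flip\falign{z_0}{\epsilon}$ (the argument mirrors the proof of Proposition~\ref{prop:align-same-phase} using $\flip\sigma_k=\sigma_{-k}\flip$) forces $\flip\,\falignlim{z_0}{\mathbf{0}_L}=\falignlim{z_0}{\mathbf{0}_L}$, so every Fourier coefficient of the output is real. Only the sign of each coefficient remains to be determined.

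This last step is where I expect the main difficulty. My proposed route is a symmetrization argument: by rotational invariance of the noise, the substitution $\epsilon\mapsto\sigma_j(\epsilon)$ together with the identity $k^{\ast}(\sigma_j(\epsilon))=k^{\ast}(\epsilon)-j$ yields
\[
  \widehat{\falignlim{z_0}{\mathbf{0}_L}}[k] \;=\; L\cdot\mathbb{E}\!\left[\widehat{\epsilon}[k]\,\mathbbm{1}\{k^{\ast}(\epsilon)=0\}\right],
\]
where $k^{\ast}(\epsilon)=\argmax_j\langle z_0,\sigma_j(\epsilon)\rangle$. Writing $\widehat{\epsilon}[k]=A_ke^{i\Theta_k}$ and isolating the joint contribution $2\widehat{z_0}[k]A_k\cos(\Theta_k+2\pi kj/L)$ of the conjugate modes $k$ and $L-k$ to $\langle z_0,\sigma_j(\epsilon)\rangle$, one checks that the event $\{k^{\ast}=0\}$, viewed as a subset of $\Theta_k\in[0,2\pi)$ with all other Fourier data held fixed, is symmetric about $\Theta_k=0$ (by the flip argument, killing any $\sin\Theta_k$ integral) and strictly easier to satisfy near $\Theta_k=0$ than near $\Theta_k=\pi$ (since the mode-$k$ contribution to the gap $\langle z_0,\epsilon\rangle-\langle z_0,\sigma_j(\epsilon)\rangle$ equals $2\widehat{z_0}[k]A_k(1-\cos(2\pi kj/L))\geq 0$ at $\Theta_k=0$ and flips sign at $\Theta_k=\pi$). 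Integrating then delivers $\mathbb{E}[A_k\cos\Theta_k\,\mathbbm{1}\{k^{\ast}=0\}]>0$, closing the argument. An alternative, possibly slicker, approach combines the smoothness of $w\mapsto\falignlim{w}{\mathbf{0}_L}$ on $\operatorname{Orb}_{\operatorname{GLSC}(L)}(z_0)$ from Lemma~\ref{lem:pert_orthogonalcomp} with the explicit positivity for the unit impulse in Example~\ref{ex:align_x=0}, deforming $z_0$ continuously within its connected component of the orbit while verifying that no Fourier coefficient crosses zero along the path.
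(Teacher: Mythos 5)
Your skeleton matches the paper's proof step for step up to the final sign determination: zero mean via Lemma~\ref{lem:properties_L_sigma}(3), reduction to a zero-phase template via the equivariance in Lemma~\ref{lem:invariant-loss}, and the flip symmetry $\flip\epsilon \overset{d}{=} \epsilon$ forcing every Fourier coefficient of $\falignlim{z_0}{\mathbf{0}_L}$ to be real. Your ``alternative, possibly slicker'' route for the signs is in fact exactly the paper's argument: smoothness of $w\mapsto \falignlimgen{1}{w}{\mathbf{0}_L}$ on the $\operatorname{GLSC}(L)$-orbit (Lemma~\ref{lem:pert_orthogonalcomp}) combined with the explicit positivity for the impulse in Example~\ref{ex:align_x=0}, followed by a deformation within the connected cone of zero-phase, fully supported signals. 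Your remark that one must verify no coefficient crosses zero along the path is a point the paper leaves implicit, and is worth making.

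Your primary route is genuinely different and, if completed, more self-contained (it would not rest on the numerical evaluation of $a_L$), but as written it has two gaps. First, the claimed symmetry of the conditional event $\{k^*(\epsilon)=0\}$ about $\Theta_k=0$ with all other Fourier data held fixed is false: the flip argument negates \emph{all} phases simultaneously, so it gives symmetry of the joint law, not of each one-dimensional slice. This is repairable because the symmetry is not actually needed: since $\cos$ is even, it suffices to show $\mathbbm{1}\{k^*=0\}(\Theta_k)\ge \mathbbm{1}\{k^*=0\}(\pi-\Theta_k)$ for $\Theta_k\in[-\pi/2,\pi/2]$, which follows from the pointwise identity
\[
\bigl[\cos\Theta-\cos(\Theta+\alpha_j)\bigr]-\bigl[\cos(\pi-\Theta)-\cos(\pi-\Theta+\alpha_j)\bigr]=4\sin^2(\alpha_j/2)\cos\Theta\ \ge 0,
\]
applied to the mode-$k$ part of each gap $\langle z_0,\epsilon\rangle-\langle z_0,\sigma_j(\epsilon)\rangle$ (here $\alpha_j=2\pi kj/L$ and $\widehat{z_0}[k]>0$). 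Second, this comparison only yields $\mathbb{E}[A_k\cos\Theta_k\,\mathbbm{1}\{k^*=0\}]\ge 0$, whereas the theorem needs strict positivity (a vanishing coefficient has no phase); you still must exhibit a positive-measure set of realizations on which the two indicator values differ, e.g.\ configurations with the other modes small and $\Theta_k$ near $0$ where some gap with $\alpha_j\not\equiv 0$ changes sign under $\Theta_k\mapsto\pi-\Theta_k$. With those two repairs your direct argument closes the proof by a route the paper does not take.
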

\begin{proof}
    According to \cref{lem:properties_L_sigma}, we have $m_1(\falignlim{z}{\mathbf{0}_L}) = m_1(\mathbf{0}_L) = 0$. Next, we aim to show that $\falignlim{z}{\mathbf{0}_L}$ has the same Fourier phases as $z$. 
 By virtue of \cref{lem:invariant-loss}, we may assume that $z$ exhibits zero Fourier phase, and also without loss of generality, we can take $\widehat{z}[0]=0$. Moreover, upon alignment with $z$, the noisy samples $\xi = \mathbf{0}_L+ \epsilon = \epsilon$ can be regarded as vectors $\sigma(z; \epsilon)$ within a positive polyhedral cone specified by:
\[
\operatorname{Cone}^{+}(z) = \{y \in \mathbb{R}^L \mid \langle y - \sigma_k(y), z \rangle \geq 0 \text{ for every } k\},
\]
that is, the cone of signals closer to $z$ compared to any shifted version of that signal.
Since $z$ has zero Fourier phase we deduce that $\sigma(z; \epsilon) \in \operatorname{Cone}^{+}(z)$ if and only if $\flip \sigma(z; \epsilon) \in \operatorname{Cone}^{+}(z)$. Note that the probability of observing $\epsilon$ and $\flip \epsilon$ is the same, and since $\sigma(z; \epsilon) +  \sigma(z; \flip\epsilon)$ is circularly symmetric, we conclude that $\falignlim{z}{\mathbf{0}_L}$ is also circularly symmetric, i.e., its Fourier phases are either $0^\circ$ or $180^\circ$.

We now claim that the latter is not possible; thus $\falignlim{z}{\mathbf{0}_L}$ has the same Fourier phases as $z$. To see this, let $y \in \operatorname{Cone}^{+}(z)$ and define a modified signal $w$ by setting its Fourier coefficients as
\[
\widehat{w}[k] = \operatorname{sgn}(\operatorname{Re}(\widehat{y}[k])) \cdot \widehat{y}[k], \quad \text{for all } k.
\] 
By construction, $w$ is also in $\operatorname{Cone}^{+}(z)$ and moreover, due to symmetry in noise distribution, $y$ and $w$ are equally likely to appear. Finally, averaging such signals would yield strictly nonnegative phases and thus finishes the proof. 
\end{proof}

\section{Reconstruction algorithm}\label{sec:method}
From an implementation viewpoint, the proposed MRA method is one of constrained non-linear optimization. In this section, it is helpful to separate out the unknown shifts $\pmb{r}$, which are otherwise considered latent variables of the loss function. Namely, given some data set $\mathcal{X}_N$, a template $z\in\mathbb{R}^L$ and unknown shifts $\pmb{r}=(r_n)_{n=1}^N\in (\mathbb{Z}_L)^N$, we define the joint loss function $\lossdata{z,\pmb r}{\mathcal{X}_N}$ as follows:
\begin{equation}
     \lossdata{z,\pmb{r}}{\mathcal{X}_N}=\frac{1}{N}\sum_{n=1}^N \|z-\sigma_{r_n}\xi_n\|^2.\label{eq: joint loss}
\end{equation}
Now, constraining the MLE formulation in \cref{eq: mle} to the phase manifold $\mathcal{M}_{m_2^*}$ where we take $m_2^*$ to be the estimate $m_2^*(\mathcal{X}_N)$ given by \cref{eq:m2 estimate}, results in an optimization problem over the signals $z$ and the shifts $\pmb{r}$:
\begin{equation}\label{eq: joint optimization}
    \min\left\{\lossdata{z,\pmb{r}}{\mathcal{X}_N}\;\middle|\; z\in \mathcal{M}_{m_2^*}, \; \pmb{r}\in (\mathbb{Z}_L)^N\right\}\\
\end{equation}
Choosing $r_n(z) = \arg\min_{r}\|z-\sigma_{r}(\xi_n)\|$ shows that \cref{eq: joint optimization} is equivalent to minimizing $\lossdata{z}{\mathcal{X}_N}$ with respect to $z$, constrained to $\mathcal{M}_{m_2^*}$. We will now present a block coordinate descent algorithm for minimizing~\cref{eq: joint loss} that alternates between solving two proxy problems -- finding the shifts and finding the signal. We use this approach to construct a sequence $\{(z_i, \pmb{r}_i)\}_{i=0}^I$ of estimates:
\begin{equation}
    \pmb{r}_{i+1} = \argmin_{\pmb{r}\in (\mathbb{Z}_L)^N}\lossdata{z_i, \pmb{r}}{\mathcal{X}_N},\qquad z_{i+1} = \argmin_{z\in\mathcal{M}_{m_2^*}}\lossdata{z,\pmb{r}_{i+1}}{\mathcal{X}_N}\label{eq: block coordinate descent}
\end{equation}
It follows by construction that the loss values of the sequence converge.
\begin{proposition}(Convergence of block coordinate descent)\label{prop: coordinate convergence}
    Define $\{(z_i, \pmb{r}_i)\}_{i=1}^\infty$ as in \cref{eq: block coordinate descent} and let $\ell_i^N := \lossdata{z_i,\pmb r_i}{\mathcal{X}_N}$. Then, $\ell_i^N$ converges to some $\ell_\infty^N$ as $i\to\infty$.
\end{proposition}
\begin{proof}
    By the construction of $(z_i,\pmb r_i)$, it holds that
    \[
        \ell_{i+1}^N = \lossdata{z_{i+1}, \pmb{r}_{i+1}}{\mathcal{X}_N}\leq \lossdata{z_{i},\pmb{r}_{i+1}}{\mathcal{X}_N}\leq \lossdata{z_{i},\pmb{r}_{i}}{\mathcal{X}_N} = \ell_i^N.
    \]
    Furthermore, $\ell_i^N \geq 0$ for all $i$. Hence, the sequence $(\ell_i^N)_{i=0}^N$ is decreasing and bounded from below and therefore, it has a limit.
\end{proof}

\subsection{Efficient Computations}
We now turn to the practical aspects of the block coordinate descent algorithm presented above. First, we show that the subproblems have tractable solutions. Consider the subproblem of optimizing~\cref{eq: joint loss} subject to fixed shifts $\pmb{r}\in(\mathbb{Z}_L)^N$. The subproblem can be solved by a simple computation in the Fourier domain:
\begin{align*}
    \lossdata{z,\pmb r}{\mathcal{X}_N} &= \frac{1}{2N}\sum_{n=1}^N \sum_{k=1}^L |\hat z[k] - \widehat{(\sigma_{r_n}\xi_n)}[k]|^2 = \sum_{k=1}^L\frac{1}{2N}\sum_{n=1}^N  |\hat z[k] - \widehat{(\sigma_{r_n}\xi_n)}[k] |^2\\
    &= \sum_{k=1}^L \tfrac{1}{2}m_2
    [k]^2 + \tfrac{1}{2N}\sum_{n=1}^N|\widehat{\xi_n}[k]|^2 - \Re\left({\tfrac{1}{N}\sum_{n=1}^N\widehat{(\sigma_{r_n}\xi_n)}[k]}\overline{\hat z[k]}\right)=\\
 &= - \sum_{k=1}^L \Re\left({\tfrac{1}{N}\sum_{n=1}^N\widehat{(\sigma_{r_n}\xi_n)}[k]}\overline{\hat z[k]}\right) + \mathrm{const.}
\end{align*}
We can view the above as $L$ separate optimization problems, one for each Fourier coefficient. The loss functions are given by inner products, and each coefficient $\hat z_k$ is constrained to the circle of radius $m_k$. By Cauchy--Schwarz, the optimal solution is when $\hat z_k$ is aligned to the $k$-th Fourier mode of $\tfrac{1}{N}\sum_{n=1}^N\sigma_{r_n}\xi_n$ on the circle, resulting in a minimizer
\begin{equation}
    \argmin_{z\in\mathcal{M}_{m_2^*}}\lossdata{z,\pmb{r}}{\mathcal{X}_N} = \mathrm{proj}_{\mathcal{M}_{m_2}^*}\left(\frac{1}{N}\sum_{n=1}^N{\sigma_{r_n}\xi_n}\right), 
    \label{eq: signal opt}
\end{equation}
where we use the projection operator, defined by its Fourier transform:
\[
    \widehat{\mathrm{proj}}_{\mathcal{M}_{m_2^*}\left(z\right)}[k] = \begin{cases}
        m_2^*[k]\frac{\hat z [k]}{|\hat z [k]|}, & \quad \hat z[k]\neq 0\\
        m_2^*[k], &\quad \text{otherwise}.
    \end{cases}
\]
Next, we minimize~\cref{eq: joint loss} with a fixed signal $z\in \mathcal{M}_{m_2^*}$. With $\pmb r^*=(r_n^*)_{n=1}^N$ defined by:
\begin{equation}
    \pmb{r}^* = \argmin_{\pmb r\in (\mathbb{Z}_L)^N} \lossdata{z,\pmb{r}}{\mathcal{X}_N}, \quad \text{then}\quad r_n^*= \arg\min_{r\in\mathbb{Z}_L}\|z - \sigma_r \xi_n\|.\label{eq: shift opt}
\end{equation}
The aligning function can be efficiently computed in two steps. First, we find the maximal argument $r_n$ \cref{eq:find_shift} in the cross-correlation between $z$ and $\xi_n$:
\[
    r_n^* = \argmax_{r\in \mathbb{Z}_L} \|\sigma_r\xi_n - z\|^2 = \argmax_{r\in \mathbb{Z}_L} (\flip\xi_n \ast z)[-r] = \argmax_{r\in \{1,\dots,L\}} \mathcal{F}^{-1}[\hat \xi_n\odot \overline{\hat z}][r],
\]
which is conveniently expressed as the inverse Fourier transform of the Hadamard (element-wise) product $(\hat x \odot \overline{\hat z})[\ell]=\hat x[\ell]\overline{\hat z[\ell]}$. The computational complexity of a single step is $\mathcal{O}(N L\log L)$: first $\mathcal{O}(N L\log L)$ for the cross-correlations, $\mathcal{O}(NL)$ for finding the maximal arguments, and $\mathcal{O}(NL)$ for averaging the signals. Each of these operations are trivially parallelizable over the data. \cref{alg:fixed_point} shows a detailed description including computational specifics. 

It follows from a composition of~\cref{eq: signal opt} and \cref{eq: shift opt} that $z_{i+1} = \mathrm{proj}_{\mathcal{M}_{m_2^*}}(\faligndata{z_i}{\mathcal{X}_N})$. This formulation hides the optimization over shifts, which better reflects the philosophy of our approach. Namely, our method works because the shifts are estimated correctly on average, whether individual shifts are estimated correctly is irrelevant. \cref{prop: coordinate convergence} can be trivially adapted to this new formulation:
\begin{corollary}
    Let $(z_i)_{i=0}^N$ be a sequence defined by $z_{i+1}=\mathrm{proj}_{\mathcal{M}_{m_2^*}}(\faligndata{z_i}{\mathcal{X}_N})$, that is aligning the signal to $\mathcal{X}_N$ and projecting back to the manifold $\mathcal{M}_{m_2}$. Then, $\ell_i^N := \lossdata{z_i}{\mathcal{X}_N}$ converges as $i\to \infty$.
\end{corollary}


\begin{algorithm}[t]
\caption{Moment-constrained alignment (MCA)}
\label{alg:fixed_point}
\begin{algorithmic}[1]
    \Require Input value: $\delta, \tau, \mathcal{X}_N=\{\xi_1,\ldots, \xi_N\}$ \Comment{}{$\delta$ is tolerance and $\xi_n \in \mathbb{R}^L$}
    \Ensure Output value: \texttt{Reconstructed signal}
    \For{$n=1,\dots,N$}
    \State $\widehat \xi_n \gets \texttt{FFT}(\xi_n)$\Comment{Discrete Fourier transform}
    \EndFor
    \For{$k=1,\dots,L$}
    \State $\widehat{m}_2[k] \gets\max( 
          \frac{1}{N} \sum_{n=1}^N |\widehat \xi_n[k]|^2 - \tau^2, 0)$
          \Comment{Second-order moment}
    \EndFor
    \State $\widehat{z}_0 \gets\widehat{\xi}_0$
    \State $m \gets 0$
    \While{$|\lossdata{z_m}{\mathcal{X}_N}-\lossdata{z_{m-1}}{\mathcal{X}_N}|> \delta $}
        \Comment{Stop if we hit a fixed point}
        \State $m\gets m+1$
        \For{$n=1,\dots, N$}\Comment{Compute optimal shifts}
            \State $\mathrm{corr}_n \gets \texttt{invFFT}((\widehat \xi_n[k]\overline{\widehat{z}_{m-1}[k]})_{k=1}^L)$\Comment{Inverse discrete Fourier transform}
            \State $r_n \gets \argmax_{\ell=1,\dots,L}\mathrm{corr}_n[\ell]$ 
        \EndFor
        \For{$k=1,\dots,L$}
        \State $\widehat{z}_{m}^*[k] = \frac{1}{N}\sum_{n=1}^N \widehat{\sigma_{r_n}(\xi_n)}[k]$\Comment{Compute aligning function $\faligndata{z_{m-1}}{\mathcal{X}_N}$}
        \State $\widehat{z}_m[k] \gets \widehat{m}_2[k]\widehat{z}_{m}^*[k]/|\widehat{z}_{m}^*[k]|$  \Comment{Projection $\mathrm{proj}_{\mathcal{M}_{m_2}}(\overline{\sigma})$ to the manifold}  
        \EndFor
    \EndWhile\label{euclidendwhile}
    \State \textbf{return} $z_m$\Comment{The reconstructed signal}
\end{algorithmic}
\end{algorithm}


\section{Experiments}
\label{sec:experiments}


\begin{figure}[ht]
    \captionsetup{width=0.45\textwidth}
    \centering
    \begin{minipage}{.5\textwidth}
        \centering
        \begin{tikzpicture}
            \node[anchor=north west] at (0,0) {\includegraphics[width=\linewidth]{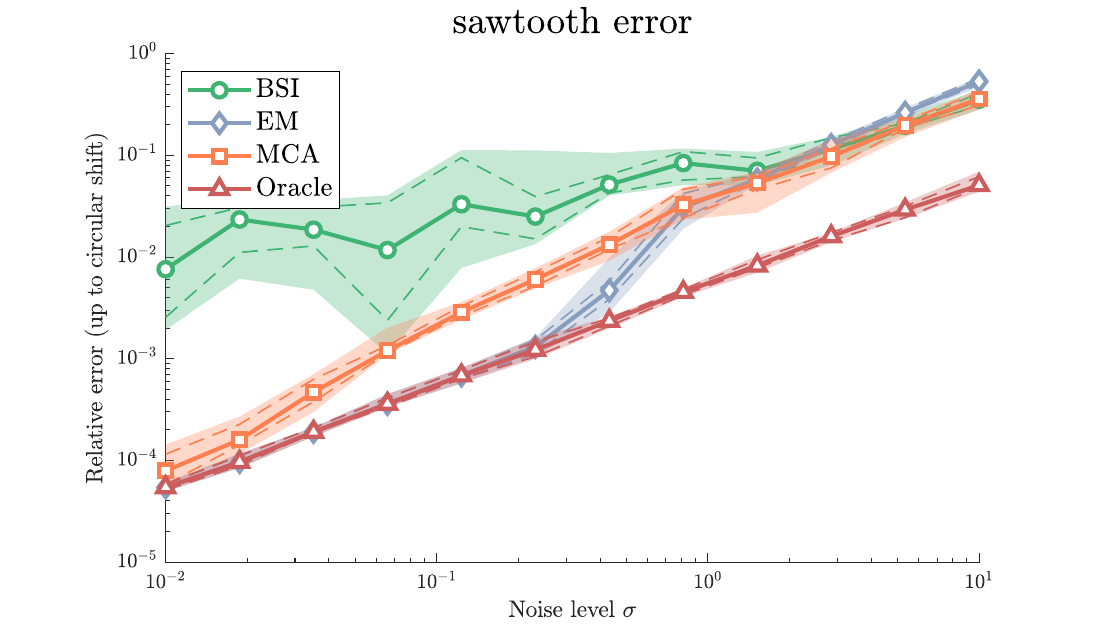}};
            \node[anchor=north west] at (4.5cm,-2.5cm) {\includegraphics[width=2.5cm]{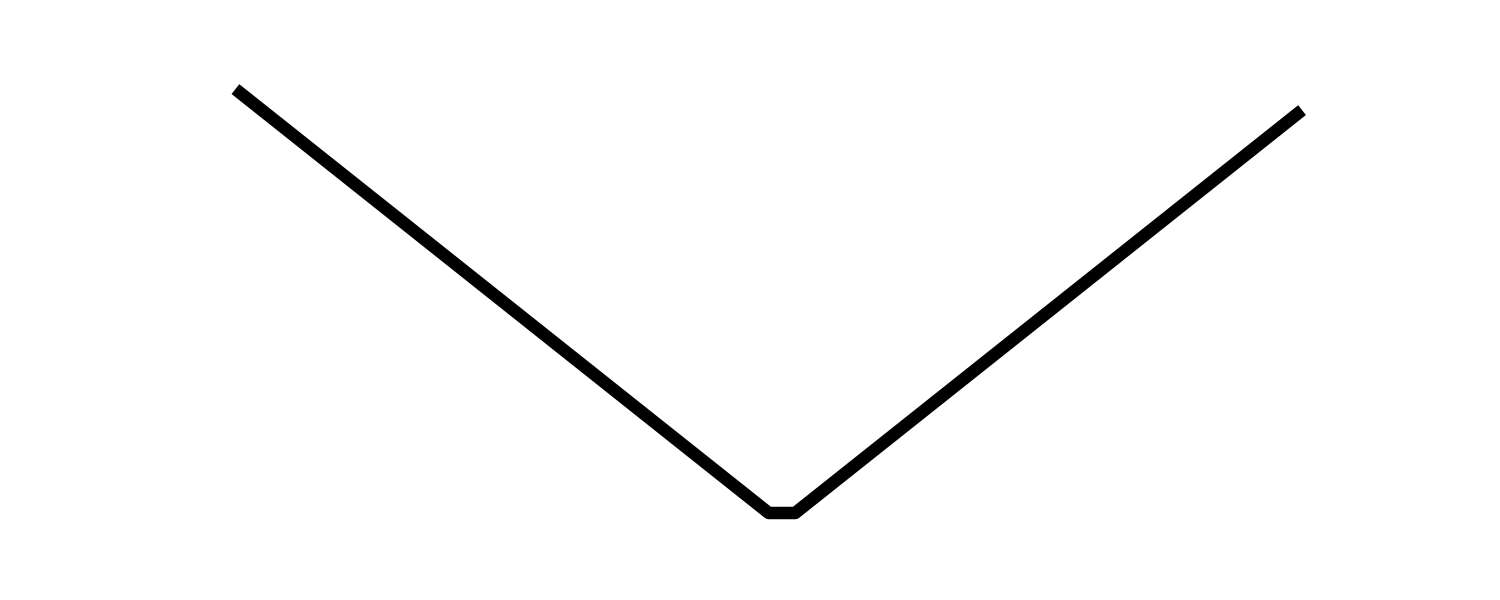}};
        \end{tikzpicture}
        
    \end{minipage}%
    \begin{minipage}{.5\textwidth}
        \centering
        \includegraphics[width=\linewidth]{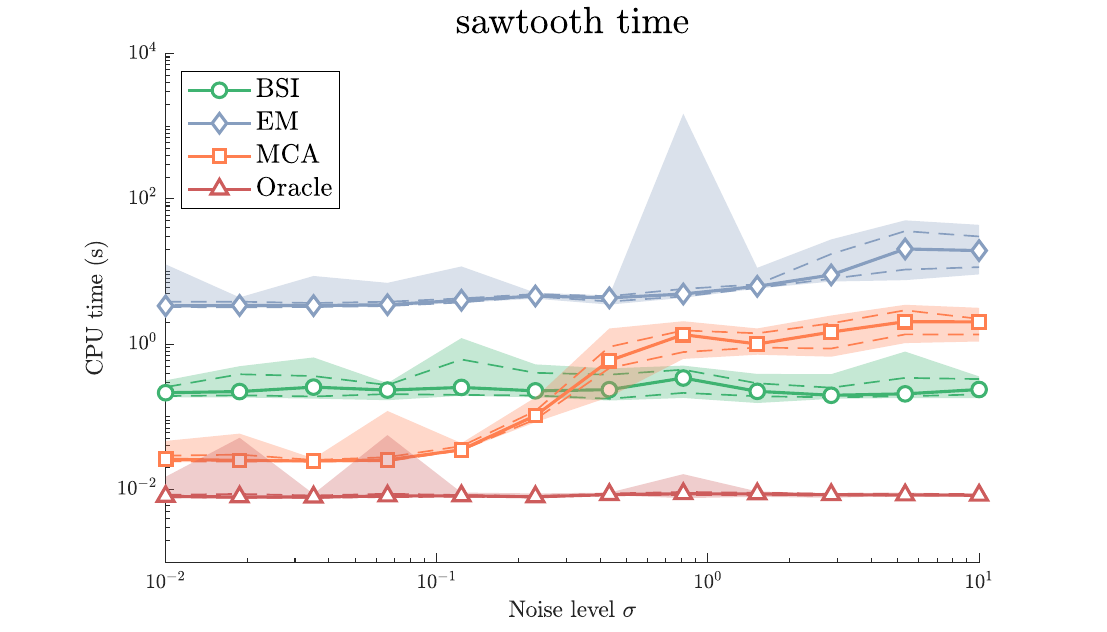}
    \end{minipage}
    \begin{minipage}{.5\textwidth}
        \centering
        \begin{tikzpicture}
            \node[anchor=north west] at (0,0) {\includegraphics[width=\linewidth]{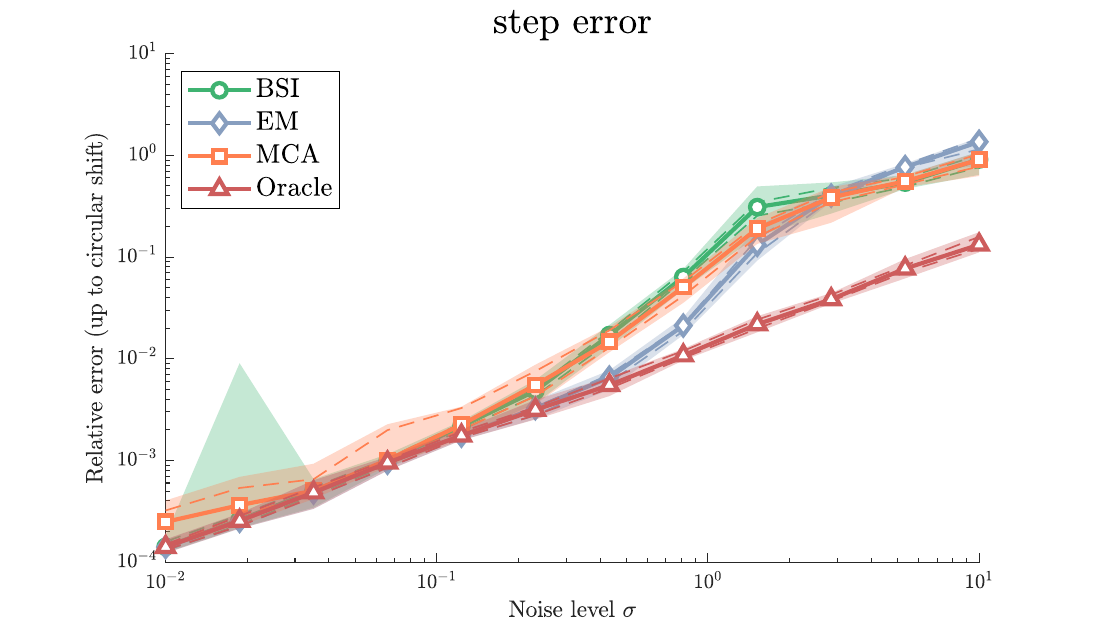}};
            \node[anchor=north west] at (4.5cm,-2.5cm) {\includegraphics[width=2.5cm]{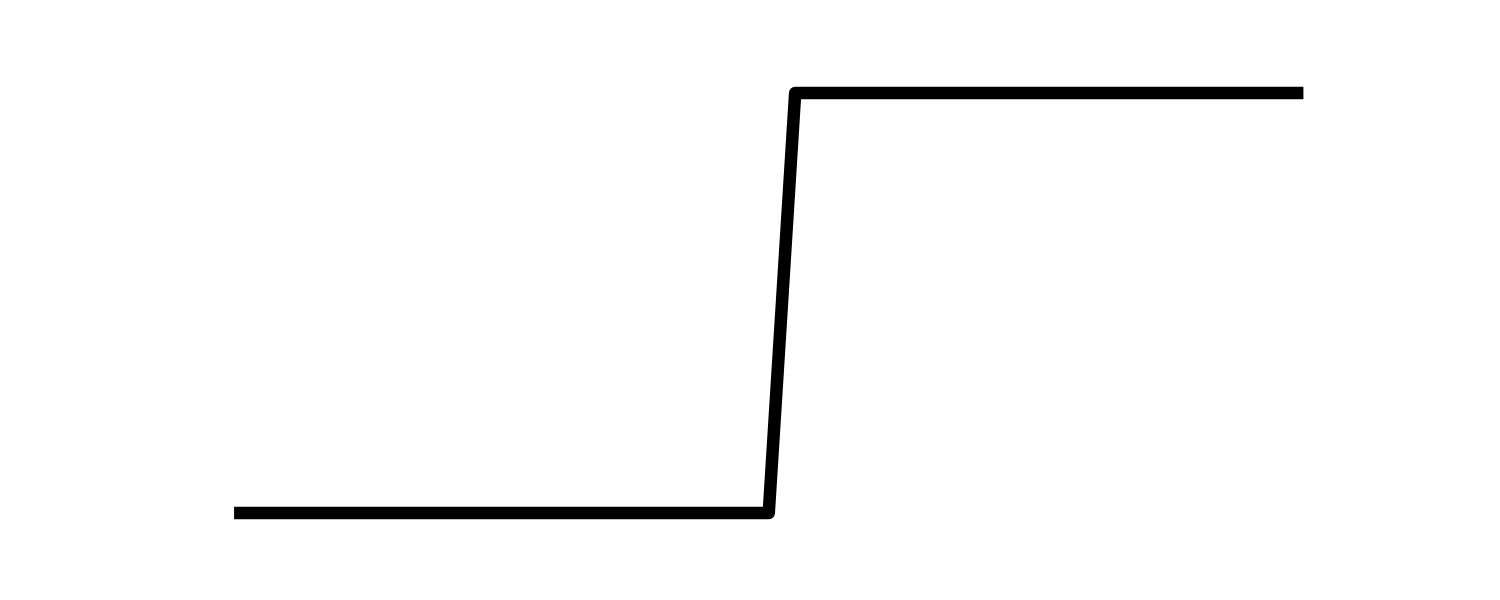}};
        \end{tikzpicture}
    \end{minipage}%
    \begin{minipage}{.5\textwidth}
        \centering
        \includegraphics[width=\linewidth]{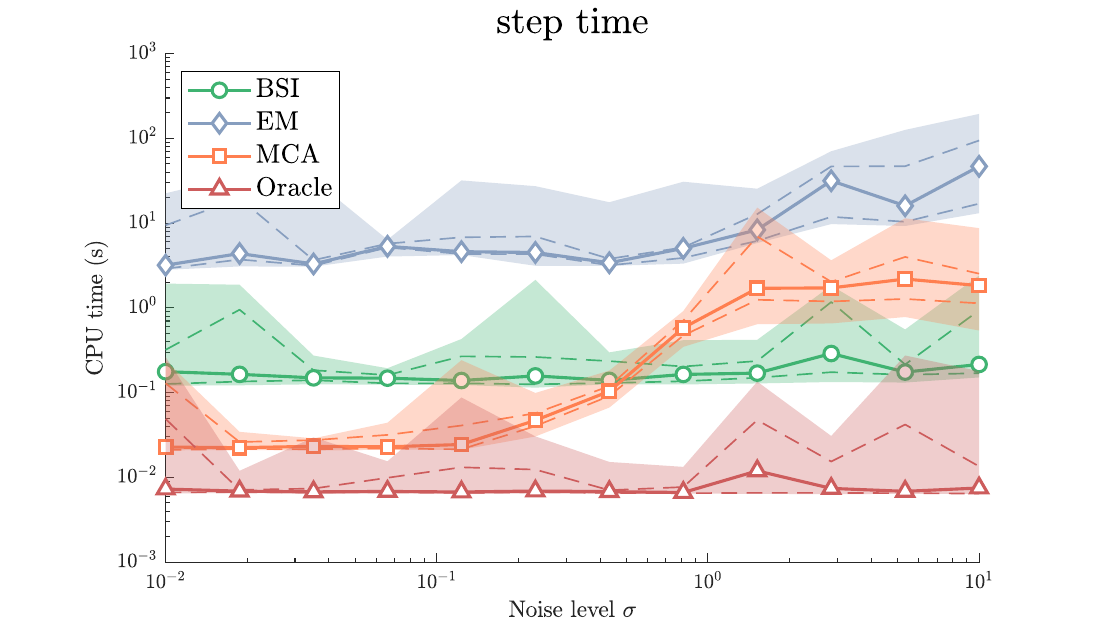}
    \end{minipage}
    \begin{minipage}{.5\textwidth}
        \centering
        \begin{tikzpicture}
            \node[anchor=north west] at (0,0) {\includegraphics[width=\linewidth]{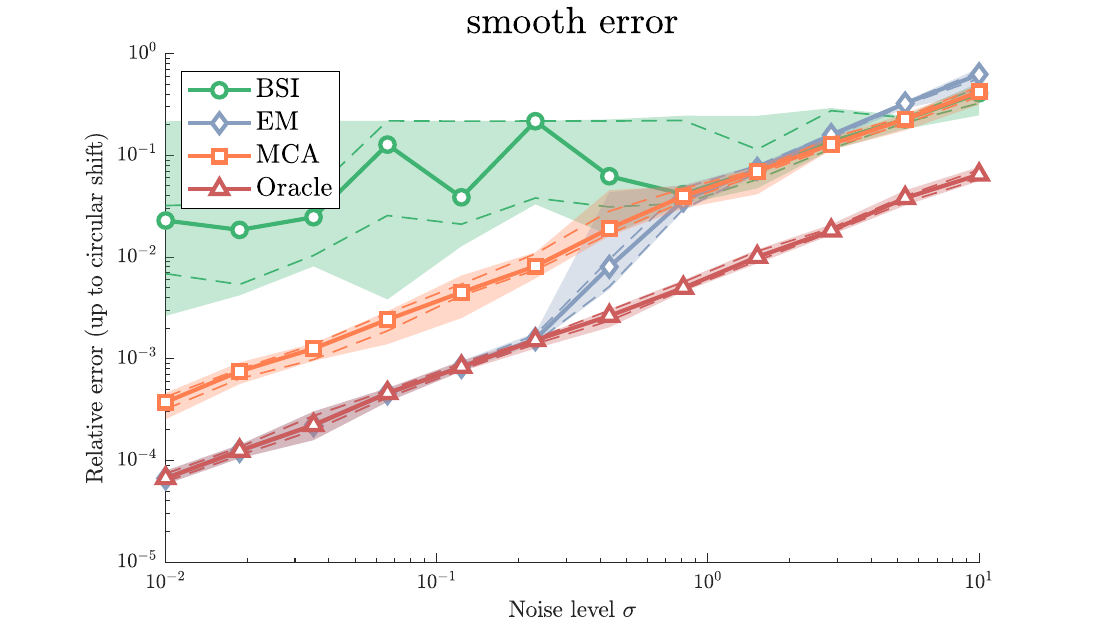}};
            \node[anchor=north west] at (4.5cm,-2.5cm) {\includegraphics[width=2.5cm]{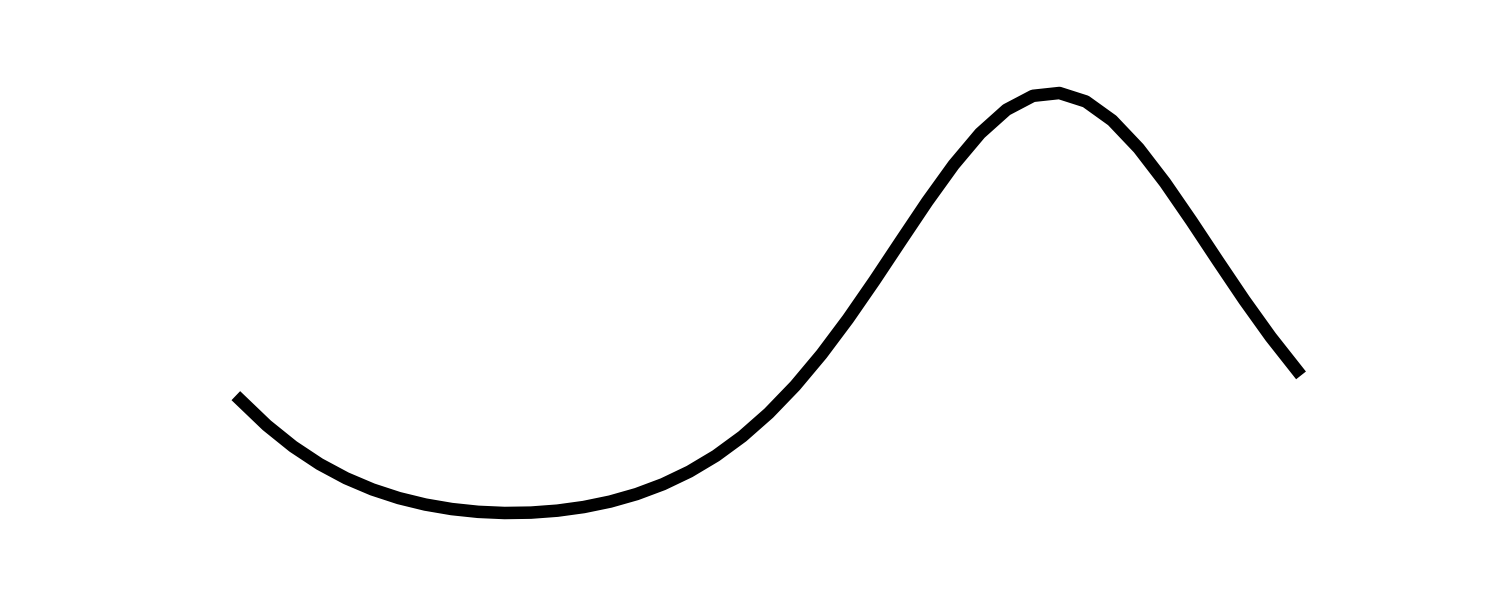}};
        \end{tikzpicture}
        \caption{Normalized root-mean-square error for sawtooth, square and smooth signal, averaged over 40 runs with $N=10^4$ samples and varying $\sigma$. The 100\% range is shaded and the 60\% range of outcomes is shown as dashed lines.}
        \label{fig:error N=1e4}
    \end{minipage}%
    \begin{minipage}{.5\textwidth}
        \centering
        \includegraphics[width=\linewidth]{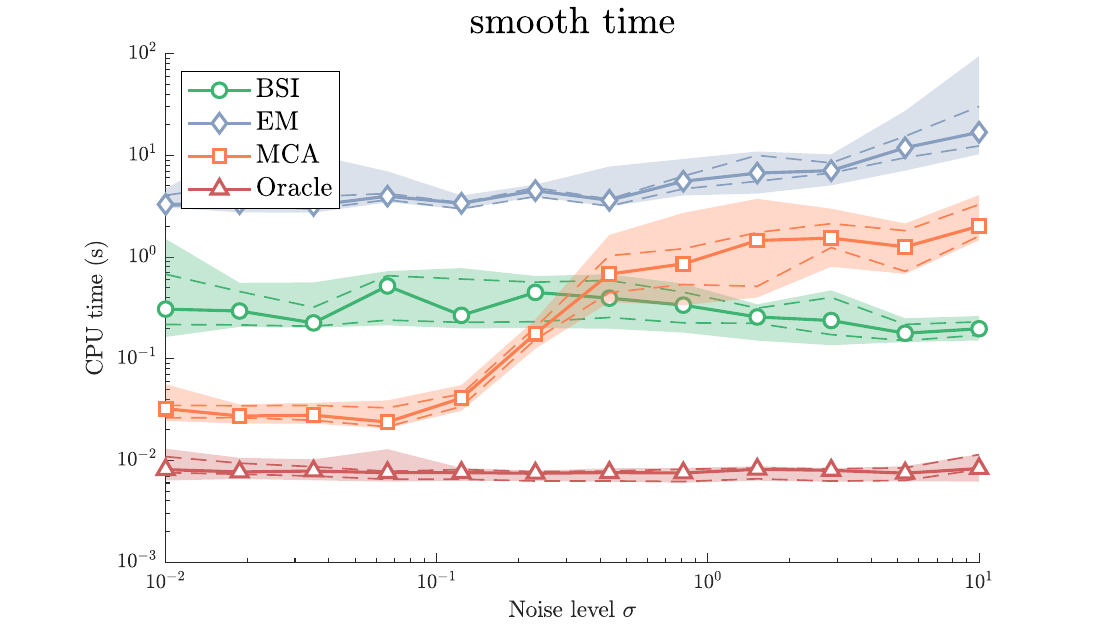}
        \caption{Computation time for sawtooth, square and smooth signal, averaged over 40 runs with $N=10^4$ samples and varying $\tau$. The 100\% range is shaded and the 60\% range of outcomes is shown as dashed lines.}
        \label{fig:time N=1e4}
    \end{minipage}
    \vspace{-2em}
\end{figure}

In this section, we present implementation details of our method and evaluate its numerical performance against existing results. This section closely follows the setup from~\cite{bendory2017bispectrum}. In the following experiments, we take $x$ as a vector of length $L=41$ or $21$ depending on the experiment. We look at three different signals: the square wave or ``step'' function $x_{step}$, the sawtooth wave $x_{saw}$, and a smooth signal $x_{smooth}$. For all $k=1,\dots, L$, we set
\begin{align*}
    x_{step}[k] &= \begin{cases}
        1, & k\geq \lfloor L/2 \rfloor\\
        0, & \mathrm{otherwise}
    \end{cases}\\
    x_{saw}[k] &= |k - L/2|\\
    x_{smooth}[k] &= \exp(\sin(2\pi k/L))
\end{align*}
The $N$ samples are generated according to~\cref{eq:observ-sig}. We measure the quality of a reconstruction $z$ using the normalized root-mean-square error of $z$ aligned to $x$:
\[
    \mathrm{NRMSE}(z,x)= \frac{\|z-\falign{z}{x}\|}{\|x\|}=\min_{r\in \{1,\dots,L\}}\frac{\|z - \sigma_{r}(x)\|}{\|x\|}.
\]

To compare our proposed \cref{alg:fixed_point}, we also implemented the EM algorithm, the bispectrum inversion using phase synchronization from Bendory et al.~\cite{bendory2017bispectrum}, and a known-shifts oracle that averages over the samples, perfectly aligned according to the true shifts. The phase synchronization algorithm was originally implemented in Bendory et al.~\cite{bendory2017bispectrum} with a Riemannian trust region method -- we use the Matlab code provided in that paper to evaluate the bispectrum method. For phase synchronization in the bispectrum inversion method, we solve the interior optimization problem with power iterations, as detailed in~\cite{bendory2017bispectrum}. We run EM with a warm start of 3000 iterations on a batch of 1000 samples as in~\cite{bendory2017bispectrum}, and then iterate with the full data until convergence.  

All methods were implemented in \texttt{jax}, a Python package for scientific computing. It uses XLA (accelerated linear algebra) which allows for efficient computations, and is capable of under-the-hood multi-threading through batched operations. None of the code was explicitly parallelized, but \texttt{jax} automatically batches over the samples and executes all at once~\cite{jax2018github}. Although \texttt{jax} supports automatic differentiation, we compute all gradients explicitly to reduce computational costs. The code is available at \href{https://github.com/emastr/moment-constrained-alignment}{https://github.com/emastr/moment-constrained-alignment}. The comparisons in~\cref{fig:error N=1e4} were made using the original Matlab implementations of bispectrum inversion and EM from~\cite{bendory2017bispectrum} as well as our own Matlab implementation of MCA. This figure uses Riemannian optimization for the phase synchronization step in bispectrum inversion, instead of the power iterations.

The experiments were all run on a CPU with 10 cores, 2 out of which run at 4.7 MHz max clock speed, and 8 that run at 3.6 MHz.  We set the tolerance of all methods to $10^{-6}$, regardless of noise level. 

\cref{fig:error N=1e4,fig:time N=1e4} depict NRMSE and computation time for $N=10^4$ samples and three different signals. We averaged the results over 40 runs and indicate the range of outcomes as a shaded interval. For the step signal, MCA outperforms bispectrum inversion in terms of NRMSE, particularly within the range $\tau\in (0.3, 2)$. In the same interval, EM outperforms both methods. For higher noise, $\tau>2$, bispectrum and constrained alignment match up, and slightly outperform EM. For the sawtooth and the smooth signals, the bispectrum method fails to converge at low noise levels $\tau<0.5$, whereas our method converges smoothly towards the true signal as $\tau\to 0$.

\begin{figure}[t]
    \centering
    \includegraphics[width=\textwidth]{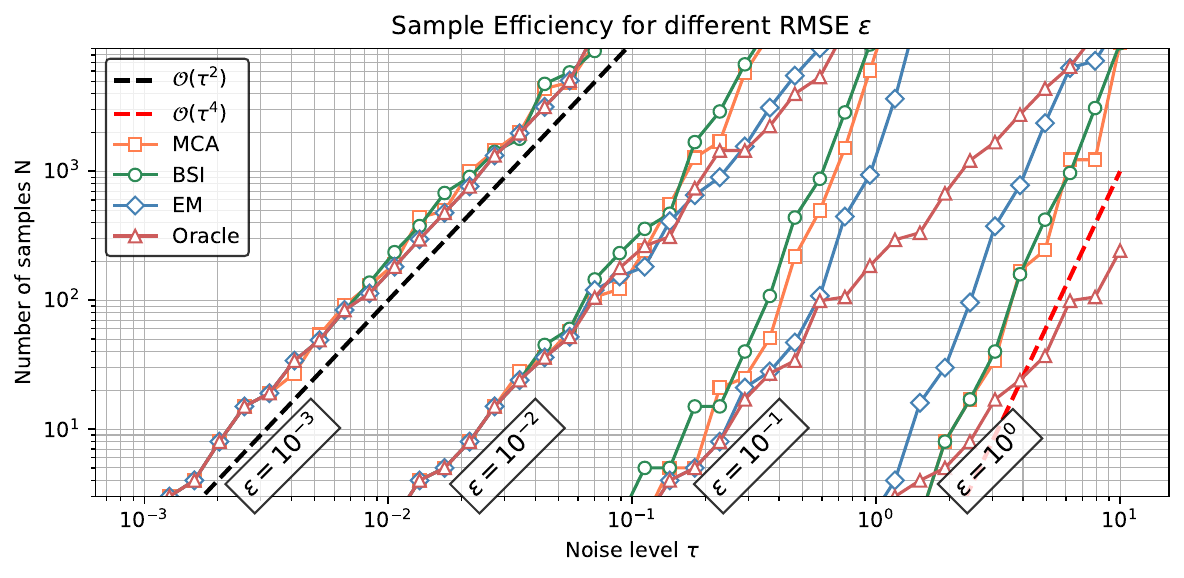}
    \caption{Sample efficiency plots for the different methods. We show the number of samples required to reach a given error $\epsilon$ as a function of the noise level $\tau$.}
    \label{fig:sample eff}
    \vspace{-2em}
\end{figure}
In terms of computational speed, MCA tends to be faster than bispectrum inversion by about an order of magnitude for low noise (less than $0.1$), largely due to the high overhead incurred from computing the bispectrum (and EM is even slower). However, this cost is independent of $\tau$, which explains why bispectrum is faster than MCA in the high-noise regime (greater than $1.0$). The latter converges in two iterations at low noise, but the number of iterations required increases over the interval $(0.1, 1.0)$ and plateaus at 200 iterations. One way to control this cost could be to use a noise-dependent tolerance, similar to Morozov discrepancy \cite{morozov}.

\cref{fig:sample eff} shows the number of samples required to reach different error levels $\epsilon$, as a function of noise $\tau$. For low noise levels, we observe a scaling of $N=\mathcal{O}(\tau^2)$. This is consistent with the theory at low SNR. For high noise (greater than $1.0$), we observe worse scaling, $N=\mathcal{O}(\tau^4)$, whereas the theory developed by Perry et al.~\cite{singer2019samplecomplex} predicts $\mathcal{O}(\tau^6)$ scaling. \cref{fig:sample eff} shows that the efficiency curves are increasing in slope for higher SNR. Sixth-order scaling emerges at higher sample numbers and lower SNR than what was possible on our machine.

We lastly comment on the ability of MCA to avoid critical points. In \cref{prop:align-same-phase}, we identified $2^{\lfloor L/2\rfloor}$ critical points that could inhibit convergence of MCA to the true signal. \Cref{fig:crit loss vs err} shows the distances between the MCA reconstruction and each possible critical point, plotted against the loss value of that critical point. For $10^3$ samples, MCA is not significantly closer to the true signal than other critical points. As the sample size increases to $10^4$, it starts to separate out, and at $10^6$ samples, the true signal is closer to the reconstruction than the remaining critical points by an order of magnitude. Furthermore, the loss value of the true signal is clearly the smallest out of the critical points.

\begin{figure}
    \centering
    \includegraphics[width=\linewidth]{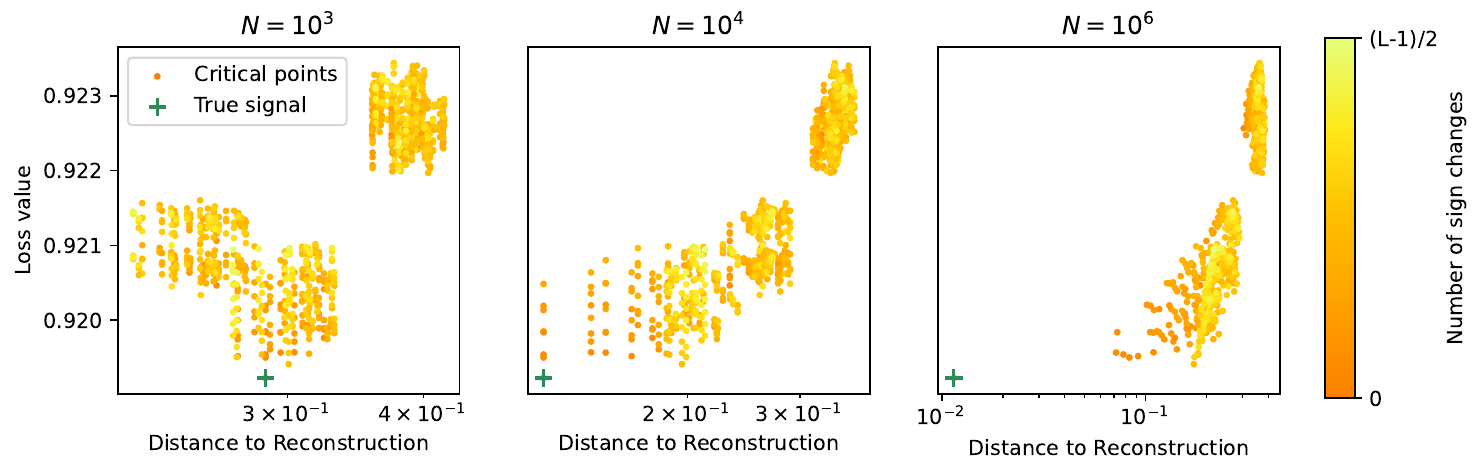}
    \caption{Scatter plots of data $\{(d_n,\ell_n)\}_{n=1}^{1024}$ showing the loss values $\ell_n = \losslim{z_n}{x}$, plotted against the distance $d_n = \|x_N-z_n\|$ from $z_n$ to an MCA reconstruction $x_N$ of the signal using $N=10^3,10^4$ and $10^6$ data samples. The points $\{z_n\}_{n=1}^{1024}$ are exactly the set of points such that $\hat z_n[k]\in \{\hat x[k],-\hat x[k]\}$ for $k=1,\dots,\lfloor L/2\rfloor$, i.e. the critical points identified in~\cref{prop:align-same-phase}. The points are colored according to the size of the set $\{k\colon \hat z_n[k]=-\hat x[k]\}$, or the number of $180$ degree phase flips from the true signal. With a signal length $L=21$ of mean 0, there are exactly $2^{10}=1024$ such combinations. The noise level $\tau$ is $1.6$, and $x$ is the square wave signal.}
    \label{fig:crit loss vs err}
\end{figure}

\section{Conclusion and Outlook}
In this work, we introduced a novel approach to multireference alignment based on minimizing a loss function restricted to the manifold of signals with a fixed power spectrum. We proved that the loss converges uniformly to a smooth function in the infinite-data limit and related its gradient to template alignment. We characterized a subset of the critical points of the averaged loss function in the infinite-data limit, showing that the true signal is one of them. Furthermore, we showed that for a signal with one non-zero Fourier mode, the true signal is also a local minimizer. 

We then introduced a provably convergent reconstruction algorithm, MCA. We showed that the MCA iterations on the finite data loss converge to the corresponding iterations taken on the infinite data loss in the limit, and compared its performance to EM and bispectrum inversion. Our method achieves higher accuracy than bispectrum inversion overall. Additionally, it is faster than EM across all noise levels and also faster than bispectrum inversion in the low-noise regime. Moreover, in the high-noise regime, our method outperforms EM in terms of accuracy.

There are two main directions to take for future research on MCA. First, it remains to show consistency for generic signals. That is, that  the loss function of MCA has the underlying signal as its global minimum. We have presented several representations of the loss, but it is still unclear which of these representations could help in a consistency proof.

The second direction of research is to adapt MCA to cryo-EM. Much like MRA, cryo-EM has a set of second-order invariant statistics (see Kam's method, explained in~\cite{kam_method_2019}). Restricting the search space to the set of signals that match these statistics defines an analog to the phase manifold. Similarly, the aligning function has an analog in cryo-EM as well~\cite{barnett2017rapidsolutioncryoemreconstruction}. Finding a way to correctly combine these already existing concepts the key obstacle to implementing MCA for cryo-EM.

\bibliographystyle{siamplain}
\bibliography{literature}

\clearpage
\appendix
\section{Proofs from \cref{sec:phase-recovery}}\label{apx: lipschitz}
\begin{proof}[Proof of~\cref{lem: lipschitz}]
    Define the interpolating function $w(\alpha)=(1-\alpha)u + \alpha v$. Then, let
    \[
        \varphi(\alpha) = \tfrac{1}{2}\|w(\alpha)-\falign{u}{w}\|^2 - \tfrac{1}{2}\|w(\alpha)-\falign{v}{w}\|^2.
    \]
    Then, $\varphi(\alpha)$ is continuous on $\alpha\in (0,1)$, and
    \begin{align*}
        \varphi(0) &= \tfrac{1}{2}\|u-\falign{u}{w}\|^2 - \tfrac{1}{2}\|u-\falign{v}{w}\|^2\leq 0,\\
        \varphi(1) &= \tfrac{1}{2}\|v-\falign{u}{w}\|^2 - \tfrac{1}{2}\|v-\falign{v}{w}\|^2\geq 0.
    \end{align*}
    By the intermediate value theorem, there is a $\alpha^*\in [0,1]$ and a corresponding $w^* = w(\alpha^*)$ such that $\varphi(\alpha^*)=0$, meaning $\|w^*-\falign{u}{w}\|=\|w^*-\falign{v}{w}\|$. Therefore, we have the following:
    \begin{align*}
        &\left| \tfrac{1}{2}\|u-\falign{u}{w}\|^2-\tfrac{1}{2}\|v-\falign{v}{w}\|^2 \right|=\\
        &=\left|\tfrac{1}{2}\|u-\falign{u}{w}\|^2-\tfrac{1}{2}\|w^*-\falign{u}{w}\|^2+\tfrac{1}{2}\|w^*-\falign{v}{w}\|^2-\tfrac{1}{2}\|v-\falign{v}{w}\|^2\right|\\
        &= \left|\tfrac{1}{2}\|u\|^2-\tfrac{1}{2}\|v\|^2-\langle u-w^*,\falign{u}{w}\rangle-\langle w^*-v,\falign{v}{w}\rangle\right|\\
        &= \left|\tfrac{1}{2}\langle u-v,u+v\rangle -\langle u-w^*,\falign{u}{w}\rangle-\langle w^*-v,\falign{v}{w}\rangle\right|\\
        &\leq \tfrac{1}{2}\|u-v\|\|u+v\|+(\|u-w^*\|+\|v-w^*\|)\|w\|\\
        &= \|u-v\|(\tfrac{1}{2}\|u+v\|+\|w\|),
    \end{align*}
    where we used Cauchy--Schwarz and $\|\falign{a}{b}\|=\|b\|$ to bound the inner products, and used that $\|u-w^*\|+\|w^*-v\|=\|u-v\|$ because $w^*$ is on the line between $u$ and $v$. 
\end{proof}

\begin{proof}[Proof of~\cref{prop: lipschitz}]
    By \cref{lem: lipschitz} with $w = x + \epsilon=\xi$ where $\|x\|=\|u\|=\|v\| = \|m_2\|$ combined with the triangle inequality:
    \[
        \|u-\falign{u}{\xi}\|^2- \|v-\falign{v}{\xi}\|^2 \leq \|u-v\|(4\|m_2\| + 2\|\epsilon\|).
    \]
    Summing over the terms of the loss and applying the above termwise finally gives:
    \[
        |\lossdata{u}{\mathcal{X}_N}-\lossdata{v}{\mathcal{X}_N}|\leq \frac{1}{N}\sum_{n=1}^N\|u-v\|(2\|m_2\| + \|\epsilon_n\|) = \left(2\|m_2\| + \frac{1}{N}\sum_{n=1}^N\|\epsilon_n\|\right)\|u-v\|.
    \]
    We identify the Lipschitz constant $C_N$ in the last expression. Lastly, since $C_N$ is a constant $2\|m_2\|$ plus a sample mean of independent, $\chi(L)$-distributed random variables, $C_N$ is bounded away from its expectation by the Chebyshev inequality:
    \[
        \mathbb{P}(|C_N - 2\|m_2\|-\sqrt{2}\tau\tfrac{\Gamma((L+1)/2)}{\Gamma(L/2)}|\geq \epsilon)\leq \frac{\mathrm{Var}(C_N)}{\epsilon^2} = \frac{\mathrm{Var}(\|\epsilon_1\|)}{N\epsilon^2} \leq \frac{\tau^2 L}{N\epsilon^2}.
    \]
\end{proof}

\subsection{Proof of~\cref{prop: lipschitz}}

\begin{proof}
    We take help by the following result. Let $(\mathbb{P},\Omega,\mathcal{F})$ be the probability space from which the data set $\mathcal{X}_N$ is drawn. Let $C_N$ denote the Lipschitz constant from \cref{prop: lipschitz}, and define $C_\infty:=\lim_{N\to\infty}C_N$. Let $\{z_k\}_{k=1}^M$ denote a discretization of $\mathcal{M}$ resolution $h$, meaning $\min_{k}\|z_k-z\|\leq h$ for all $z\in \mathcal{M}$. Such a discretization can be obtained with $M=(\|m_2\|_\infty/h)^L$ points. Furthermore, define the sets
    \[
         A_{k,N} := \left\{\omega\in \Omega\colon |\lossdata{z_k}{\mathcal{X}_N}-\losslim{z_k}{x}|\geq \epsilon/2\right\},\quad\text{for}\quad k=1,2,\dots.
    \]
    Also define $B_N = \{\omega\in\Omega\colon |C_N-C_\infty|\geq C_\infty\}$. By \cref{prop: lipschitz}, $\mathbb{P}(B_N)\leq \tfrac{\tau^2 L}{N C_\infty^2}$. Now, consider the set intersection $\tilde A_N=B_N^c\cap \bigcap_{k=1}^M A_{k,N}^c$. Suppose $\mathcal{X}_N$ is drawn from $\tilde A_N$, and consider any $z\in\mathcal{M}$. Then, there is a $k$ such that $\|z-z_k\|\leq h$, with $h =\epsilon/(2\cdot 3C_\infty)$, and since $C_N\leq 2C_\infty$,
    \begin{multline*}
        |\lossdata{z}{\mathcal{X}_N}-\losslim{z}{x}|\leq \\
        \leq |\lossdata{z}{\mathcal{X}_N}-\lossdata{z_k}{\mathcal{X}_N}|+|\lossdata{z_k}{\mathcal{X}_N}-\losslim{z_k}{x}|+|\losslim{z_k}{x}-\losslim{z}{x}|\\
        \leq 3C_\infty\|z-z_k\|+|\lossdata{z_k}{\mathcal{X}_N}-\losslim{z_k}{x}| \leq \epsilon/2 + \epsilon/2 = \epsilon,
    \end{multline*}
    where we used that $A_{k,N}^c\subset \tilde A_N$ in the last step. Since the above is true for any $z\in\mathcal{M}$, 
    \[
         \mathbb{P}\left\{\sup_{z\in\mathcal{M}}|\lossdata{z}{\mathcal{X}_N} - \losslim{z}{x}|\leq \epsilon\right\} \geq \mathbb{P}(\tilde A) \geq 1-\mathbb{P}(B_N)-\sum_{k=1}^M\mathbb{P}(A_{k,N}).
    \]
    Flipping the inequality therefore gives
    \begin{align*}
        \mathbb{P}\left\{\sup_{z\in\mathcal{M}}|\lossdata{z}{\mathcal{X}_N} - \losslim{z}{x}|\geq\epsilon\right\} &\leq \mathbb{P}(B_N)+\sum_{k=1}^M\mathbb{P}(A_{k,N})\\&\leq \frac{\tau^2L}{NC_\infty^2}+\frac{(4\|m_2\|_\infty^2+2\tau^2L)\|m_2\|_\infty^L}{\epsilon^{L+2} N}.
    \end{align*}
    Specifically, choosing $\epsilon/C_\infty<1$ and $\epsilon/\|m_2\|_\infty<1$ gives the desired result.
    
\end{proof}


\section{Proof of \cref{thm: iterated convergence full}}
\label{apx: optimizer convergence}
We begin with some definitions.
\begin{setting}\label{setting: alignment}
Let $(\Omega, \mathcal{F}, \mathbb{P})$ be a probability space. In this section we fix $\tau>0$ and $x\in\mathbb{R}^L$. Let $\mathcal{X}_N=\{\xi_n\}_{n=1}^N$ be a set of measurements, where each $\xi_n=x+\epsilon_n$ and $\epsilon_n\colon\Omega\to  \mathbb{R}^L$ are iid. $\mathcal{N}(0,\tau)$ distributed random variables on $\Omega$. To shorten notation, let
\[
    f_N(y) = a(y, \faligndata{y}{\mathcal{X}_N}),\qquad f(y) = a(y, \falignlim{y}{x})
\]
for any $y\in\mathbb{R}^L$, where $a\colon\mathbb{R}^L\times \mathbb{R}^L \to \mathbb{R}^L$ is a Lipschitz-continuous map in the sense that there exists a $C_a>0$ so that for any $y_1,y_2,z_1,z_2$ in $\mathbb{R}^L$,
\begin{equation}\label{eq: lipschitz two args}
    \|a(y_1, z_1)-a(y_2,z_2)\|\leq C_a\left(\|y_1-y_2\|+\|z_1-z_2\|\right).
\end{equation}
\end{setting}
We denote by $g^k$ the functional composition $g\circ g\circ \dots\circ g$ of $g$ $k$ times.  In this section we investigate under which circumstances the following holds:
\begin{equation}\label{eq: probable convergence}
    \lim_{N\to \infty}\mathbb{P}(\|f^k_N(y)-f^k(y)\|\geq \epsilon) = 0,
\end{equation}
for finite $k$. We will aim to find an upper bound for the rate at which the above converges. Specifically, such a bound would serve to motivate the study of first-order optimizers on the expected loss $f$. The end goal of this section is to prove~\cref{thm: iterated convergence full}, which we restate here in a more compact format using~\cref{setting: alignment}:
\begin{theorem}\label{thm: iterated convergence}
    Let $f_N,f$ be as in~\cref{setting: alignment} and $y\in \mathbb{R}^L$ a nonperiodic signal. Then, for any $\epsilon\in(0,1)$, any $k$ and $N>M$:
    \begin{equation}
        \mathbb{P}(\|f_N^k(y)-f^k(y)\|\geq \epsilon)  \leq C_1\frac{(\log N)^{L/2}}{N} + C_2\frac{\log N}{N\epsilon^2},
    \end{equation}
     where $M, C_1>0$ and $C_2>0$ depend on $C_a, L, k, y$ and $f$ but not $\epsilon$ or $N$. In particular, $\lim_{N\to\infty}\mathbb{P}(\|f_N^k(y)-f^k      (y)\|\geq \epsilon)=0$ for any $k$ and $\epsilon>0$.
\end{theorem}
We begin by introducing a useful ordering principle. Namely, for any two random variables $x,y\colon\Omega \to B$ of some space $B$ with an ordering $\leq$, it holds that if $x\leq y$, then for any $a\in B$, $\{x\geq a\}\subseteq\{y\geq a\}$, and as a consequence,
\begin{equation}\label{eq: upper bound of rv}
    \text{if } x\leq y,\quad\text{then}\quad\mathbb{P}(x\geq a)\leq \mathbb{P}(y\geq a)\quad \forall a\in B.
\end{equation}
A consequence of \cref{eq: upper bound of rv} is that probabilities that a sum of variables exceeds some threshold, is itself bounded above by the sum of the probabilities that its terms are. Namely, by the triangle inequality $\|x_1 + x_2 +\dots x_M\|\geq\delta$ implies that at least one of $\|x_m\|\geq \delta/M$ must hold for $x_m\in\mathbb{R}^L$ and $m=1,\dots,M$. In particular, this means that
\[
    \{\|x_1+\dots+x_M\|\geq \delta\}\subseteq\{\|x_1\|\geq \tfrac{\delta}{M}\}\cup\cdots\cup\{\|x_M\|\geq\tfrac{\delta}{M}\}
\]
for random variables $x_1,\dots,x_M\in\mathbb{R}^L$. By subadditivity of the probability measure:
\begin{equation}\label{eq: triangle inequality}
    \mathbb{P}(\|x_1+\dots +x_M\|\geq \delta)\leq \mathbb{P}(\|x_1\|\geq \tfrac{\delta}{M})+\dots+\mathbb{P}(\|x_M\|\geq \tfrac{\delta}{M}).
\end{equation}
We proceed to prove the following estimate:
\begin{lemma}\label{lemma: random bound}
    Let $(\Omega,\mathcal{F},\mathbb{P})$ be a probability space. Let $g$ and $f\colon \Omega\times B\to B$ be random functions and $x\colon\Omega\to B$ a random variable on a Banach space $B$ with norm $\|\cdot\|$. Let $a\colon B\times B\to B$ be a Lipschitz function as in~\cref{eq: lipschitz two args} with constant $C_a>0$. Then, for any $\epsilon>0$, $0<\delta\leq\epsilon/(3C_a)$ and $y\in B$ the following inequality holds:
    \begin{multline*}
        \mathbb{P}(\|a(x,g(x))-a(y,f(y))\|\geq \epsilon) \\  \leq \mathbb{P}(\|x- y\|\geq \delta) + \mathbb{P}(\|g(y)-f(y)\|\geq \tfrac{\epsilon}{3C_a})+\mathbb{P}\left(\sup_{\|z-y\|\leq \delta} \!\!\|g(z)-g(y)\|\geq \tfrac{\epsilon}{3C_a}\right).
    \end{multline*}
\end{lemma}
\begin{proof}
The proof will follow from a suitable partition of the outcome $\|a(x,g(x))-a(y,f(y))\|\geq \epsilon$. \Cref{eq: lipschitz two args} gives
\begin{align*}
    \|a(x,g(x))-a(y,f(y))\|&\leq C_a(\|x-y\|+\|g(x)-f(y)\|)\\
    &\leq C_a(\|x-y\|+\|g(x)-g(y)\|+\|g(y)-f(y)\|).
\end{align*}
We use the above with~\cref{eq: triangle inequality}, which gives
\begin{multline}\label{eq: inclusion eq}
    \{\|a(x,g(x))-a(y,f(y))\|\geq \epsilon\}\\
    \subseteq \{\|x-y\|\geq \tfrac{\epsilon}{3C_a}\}\cup\{\|g(x)-g(y)\|\geq \tfrac{\epsilon}{3C_a}\}\cup \{\|g(y)-f(y)\|\geq \tfrac{\epsilon}{3C_a}\}.
\end{multline}
Next, for any two sets $A,B$ it holds that $A= (A\cap B)\cup (A\cap B^c)\subseteq (A\cap B)\cup B^c$, which we apply to the sets $A=\{\|g(x)-g(y)\|\geq \epsilon(3C_a)^{-1}\}$ and $B=\{\|x-y\|\leq \delta\}$ for any $\delta>0$:
\begin{align*}
    \{\|g(x)-g(y)\|\geq \tfrac{\epsilon}{3C_a}\}&\subseteq\left(\{\|g(x)-g(y)\|\geq \tfrac{\epsilon}{3C_a}\}\cap\{\|x-y\|\leq\delta\}\right)\cup\{\|x-y\|\geq \delta\}\\
    &\subseteq\left\{\sup_{\|z-y\|\leq \delta}\|g(z)-g(y)\|\geq\tfrac{\epsilon}{3C_a}\right\}\cup\{\|x-y\|\geq\delta\}.
\end{align*} 
In the last step we used that $\{h(y)\geq \epsilon\}\cap\{y\in A\}\subseteq\{\sup_{z\in A} h(z)\geq \epsilon\}$ for any random variable $y$, function $h$ and random set $A$.
Inserted into~\cref{eq: inclusion eq} combined with subadditivity of the probability measure:
\begin{align*}
    \mathbb{P}&(\|a(x,g(x))-f(y,g(y))\|\geq \epsilon)\\&\leq \mathbb{P}(\|x-y\|\geq \delta)+\mathbb{P}(\|g(y)-f(y)\|\geq \tfrac{\epsilon}{3C_a})+\mathbb{P}\left(\sup_{\|z-y\|\leq\delta}\|g(z)-g(y)\|\geq \tfrac{\epsilon}{3C_a}\right)
\end{align*}
where we used $\{\|x-y\|\geq \delta\}\cup\{\|x-y\|\geq\tfrac{\epsilon}{3C_a}\}=\{\|x-y\|\geq\delta\}$ since $\delta\leq\tfrac{\epsilon}{3C_a}$.
\end{proof}
We will now use \cref{lemma: random bound} to obtain a bound on the convergence rate of~\cref{eq: probable convergence}. First, we define some notation:
\begin{align}
    p_{N,k}(\epsilon) &:= \mathbb{P}(\|f_N^k(y)-f^k(y)\|\geq \epsilon),\label{eq:pN}\\
    q_N(y,\epsilon)&:=\mathbb{P}(\|f_N(y)-f(y)\|\geq \epsilon),\label{eq:qN}\\
    r_N(y,\epsilon,\delta) &:= \mathbb{P}\left(\sup_{\|z-y\|\leq \delta}\|f_N(z)-f_N(y)\|\geq\epsilon\right)\label{eq:rN}.
\end{align}
Then~\cref{lemma: random bound} with $g,f,x,y$ and $B$ set to $f_N,f,f_N^{k-1}(y),f^{k-1}(y)$ and $\mathbb{R}^L$ from~\cref{setting: alignment}, respectively, gives a recursive bound on $p_{N,k}(\epsilon)$:
\begin{equation}\label{eq: inequality recursion}
    p_{N,k}(\epsilon) \leq p_{N,k-1}(\delta) + q_N(f^{k-1}(y), \tfrac{\epsilon}{3C_a}) + r_{N}(f^{k-1}(y), \tfrac{\epsilon}{3C_a}, \delta),\qquad \text{for any} \quad \delta\leq \tfrac{\epsilon}{3C_a}.
\end{equation}
The first term $p_{N,k-1}(\delta)$ is the probability of $f^{k-1}(y)$ deviating from $f_N(y)^{k-1}$ by more than $\delta$, and it allows us to argue inductively, since $p_{N,0}(\epsilon)=\mathbb{P}(\|y-y\|\geq \epsilon)=0$ for all $\epsilon>0$. The second term $q_N(y,\epsilon)$, is trivially bounded by Chebyshev's inequality:
\begin{equation}\label{eq: chebyshev qn bound}
    q_N(y,\epsilon)=\mathbb{P}(\|f_N(y)-f(y)\|\geq\epsilon)\leq \frac{\mathrm{Var}(\falign{y}{\xi_1})}{\epsilon^2N}.
\end{equation}
We now seek to bound the terms $r_N(y,\epsilon,\delta)$ for arbitrary $y$ and $\epsilon$, and some unknown $\delta$. Throughout these estimates we will make assumptions on the size of $\epsilon$, which will all be independent on $N$ (possibly dependent on $k$). We will also estimate $\delta$ depending on $N$, which will require a gradually contracting error $\delta$ for smaller $k$. 

\subsection{Estimation of $r_N(y,\epsilon,\delta)$} Estimating the term $r_N(y,\epsilon,\delta)$ is complicated. By definition, $f_N(z) = \tfrac{1}{N}\sum_{n=1}^N\falign{z}{\xi_n}$ with $\xi_n = x+\epsilon_n$. Hence, $f_N$ is a sum of piecewise constant functions. Roughly, taking the supremum of $\|f_N(y)-f_N(z)\|$ with $z$ in a local neighborhood around $y$ amounts to finding $z$ has the most number of discontinuities in the terms of $f_N$  between itself and $y$. Therefore, bounding $q_N$ will ultimately depend on the probability of finding a discontinuity in this neighborhood. We start by proving a helpful lemma:
\begin{lemma}\label{lemma: discontinuous sum}
    Let $(\Omega,\mathcal{F},\mathbb{P})$ be a probability space. Suppose $g_n\colon \Omega\times \mathbb{R}^L\to \mathbb{R}^L$ are iid. piecewise constant functions, uniformly bounded in $L^2$-norm by some constant $C>0$. Let $\Delta_n\colon\Omega\to \mathrm{Set}(\mathbb{R}^L)$ denote the random set of points at which $g_n$ is discontinuous. Let $\overline{g}_N(y):=\frac{1}{N}\sum_{n=1}^Ng_n(y)$. Then,
    \[
        \mathbb{P}\left(\sup_{z\in B_\delta(y)}\|\overline{g}_N(y)-\overline{g}_N(z)\|\geq \epsilon\right) \leq \frac{4C^2\mathbb{P}(\Delta_1\cap B_\delta(y)\neq \emptyset)}{\epsilon^2 N}
    \]
    where $B_\delta(y)=\{z\colon \|z-y\|\leq \delta\}$ is the ball of radius $\delta$ centered at $y$.
\end{lemma}
\begin{proof}
We make use of~\cref{eq: upper bound of rv} with the following chain of inequalities:
\[
    \sup_{z\in B_\delta(y)}\left\|\sum_{n=1}^Ng_n(z)-g_n(y)\right\|\leq \sup_{z\in B_\delta(y)}\sum_{n=1}^N\|g_n(z)-g_n(y)\|\leq \sum_{n=1}^N\sup_{z\in B_\delta(y)}\|g_n(z)-g_n(y)\|
\]
to write
\begin{equation}\label{eq: termwise bound and sup}
  \mathbb{P}\left(\sup_{z\in B_\delta(y)}\left\|\sum_{n=1}^Ng_n(z)-g_n(y)\right\|\geq N\epsilon\right) \leq \mathbb{P}\left(\sum_{n=1}^N\sup_{z\in B_\delta(y)}\|g_n(z)-g_n(y)\|\geq N\epsilon\right).
\end{equation}
The terms can be estimated individually:
\[
    \sup_{z\in B_{\delta}(y)}\|g_n(z)-g_n(y)\|\leq \begin{cases}
        0, &\text{if}\;\; \Delta_n\cap B_{\delta}(y) = \emptyset\\
        2C, &\text{if}\;\; \Delta_n\cap B_\delta(y)\neq\emptyset.
    \end{cases}
\]
The termwise estimate, when inserted back into~\cref{eq: termwise bound and sup}, gives 
\[
    \mathbb{P}\left(\sup_{z\in B_\delta(y)}\left\|\sum_{n=1}^Ng_n(z)-g_n(y)\right\|\geq N\epsilon\right) \leq \mathbb{P}\left(2C\sum_{n=1}^N \chi(\Delta_n\cap B_{\delta}(y))\geq N\epsilon\right),
\]    
where the set function $\chi(A)$ is $0$ if $A=\emptyset$ and $1$ otherwise. The Chebyshev inequality then results in
\[
  \mathbb{P}\left(\sum_{n=1}^N \chi(\Delta_n\cap B_\delta(y))\geq \frac{N\epsilon}{2C}\right)\leq  \frac{4C^2\mathbb{E}[\chi(\Delta_n\cap B_\delta(y)\neq\emptyset)^2]}{\epsilon^2N}.
\]
The proof is concluded since $\mathbb{E}[\chi^2]=p$ for any Bernoulli variable $\chi\sim\mathrm{Be}(p)$.
\end{proof}

The idea is to apply \cref{lemma: discontinuous sum} by setting $g_n$ and $\Delta_n$ to $\falign{\bullet}{\xi_n}$ and $\Delta_{\xi_n}=\cup_{i=1}^L\cup_{j\neq i}\{z\colon (\sigma_i\xi_n,z)=(\sigma_j\xi_n,z)\}$ (the discriminant for the random data $\xi_n$), respectively. The result does not immediately hold, since the terms $\falign{z}{\xi_n}$ are not uniformly bounded. We will avoid this issue by conditioning on a uniform bound on the noise. Such a bound is a rare event, and we can use a Chernoff bound to control the probability:
\begin{lemma}\label{lemma: chernoff bound for chi2}
    Let $(\Omega,\mathcal{F},\mathbb{P})$ be a probability space and $x\colon\Omega \to \mathbb{R}^L$ random vector whose elements are normal in distribution with mean zero and variance one. Then, for any $C>0$ the following holds:
    \[
        \mathbb{P}\left(\|x\|\geq C\right)\leq \sqrt{e/L}^LC^L\exp\left(-C^2/2\right).
    \]
\end{lemma}
\begin{proof}
    The proof is a standard Chernoff bound that uses the Markov inequality. For all $t>0$, we have
    \[
        \mathbb{P}\left(\|x\|\geq C\right) =\mathbb{P}\left(\exp\left(t\sum_{\ell=1}^Lx[\ell]^2\right)\geq \exp(tC^2)\right)\leq\mathbb{E}[\exp(tx[1]^2)]^Le^{-tC^2}
=M(t)^L e^{-tC^2},
\]
where $M(t)=1/\sqrt{1-2t}$ is the moment-generating function of a $\chi^2(1)$ random variable. Optimization over $0\leq t<\tfrac{1}{2}$ by setting the derivative of $M(t)^Le^{-tC^2}$ to zero results in $\epsilon^2(1-2t)-L=0$ which has the solution $t=\tfrac{1}{2}-\tfrac{L}{2C^2}$, which inserted into the upper bound gives
\[
    \mathbb{P}(\|x\|\geq C) \leq \tfrac{C^L}{\sqrt{L}^L}\exp\left(-C^2/2+L/2\right) = (\sqrt{e/L})^L C^L\exp\left(-C^2/2\right).
\]
\end{proof}
We will also need a bound for a truncated normal distribution. In the following, we will denote by $\mathcal{N}(\mu,\Sigma\mid C)$ a truncated multivariate normal distribution, by which we mean that $x\sim \mathcal{N}(\mu,\Sigma\mid C)$ is defined by $x=y\mid \|y\|\leq C$ with $y\sim\mathcal{N}(\mu,\Sigma)$.
\begin{lemma}\label{result: gaussian bound}
     Let $x\sim\mathcal{N}(0,\Sigma\mid C)$ be a truncated multivariate normal with mean $\mu\in\mathbb{R}^L$, covariance matrix $\Sigma\in\mathbb{R}^{L\times L}$ and bound $C>0$. Then, for any $v\in\mathbb{R}^L$ and numbers $a,b\in\mathbb{R}$ such that $a\leq b$, the following holds:
     \[
        \mathbb{P}(v^Tx\in (a,b))\leq \frac{\Phi(b/\|v\|_\Sigma)-\Phi(a/\|v\|_\Sigma)}{\mathbb{P}(\|y\|\leq C)},
     \]
     where $\Phi$ is the cumulative distribution function for a standard Gaussian, $y\sim\mathcal{N}(\mu,\Sigma)$ is the non-truncated version of $x$, and $\|v\|_\Sigma = v^T\Sigma v$.
\end{lemma}
\begin{proof}
    For any events $A,B \subset \Omega$, we have $\mathbb{P}(A\mid B)=\mathbb{P}(A\cap B)/\mathbb{P}(B)\leq\mathbb{P}(A)/\mathbb{P}(B)$. Combining this inequality with $x=y\mid \|y\|\leq C$ (equality in distribution), we obtain
    \[
        \mathbb{P}(v^Tx\in (a,b)) = \mathbb{P}(v^Ty\in (a,b)\mid \|y\|\leq C)\leq \frac{\mathbb{P}(v^Ty\in(a,b))}{\mathbb{P}(\|y\|\leq C)}.
    \]
    Lastly, $v^Ty$ is a 1D Gaussian with mean zero and variance $\mathbb{E}[v^Tyy^Tv]=v^T\Sigma v=\|v\|_\Sigma$. This finishes the proof.
\end{proof}

Next, we use this last result to estimate the probability of $\mathbb{P}(\Delta_n\cap B_\delta(y)\neq\emptyset)$, that is discontinuities in the ball $B_\delta(y)$. In the following lemma, we prove that specifically for $\Delta_{\xi_1}$, there is an upper bound:
\begin{lemma}\label{lemma: discriminant bound}
    Let $(\Omega, \mathcal{F}, \mathbb{P})$ be a probability space and let $\xi_1=x+\epsilon_1$ where $x\in \mathbb{R}^L$ and $\epsilon_1\sim\mathcal{N}(0,\tau^2 I\mid C)$, with $I$ being the $L\times L$ identity matrix. Let $\Delta_{\xi_1}^{i,j}=\{z\colon(z,\sigma_i\xi_1)=(z,\sigma_j\xi_1)\}$ for $i,j\in\{1,2,\dots,L\}$ and $j\neq i$. Let $\Delta_{\xi_1}=\cup_{i\neq j}\Delta_{\xi_1}^{i,j}$. Then, for any non-periodic signal $y\in\mathbb{R}^L$ and sufficiently small $\delta>0$, it holds that
    \begin{equation}\label{eq: bound for discriminant intersect}
        \mathbb{P}(\Delta_{\xi_1}\cap B_\delta(y)\neq \emptyset)\leq \frac{4L^2\delta(C+\|x\|)}{\tau\sqrt{2\pi}\min_{i\neq j}\|\sigma_iy-\sigma_jy\|\mathbb{P}(\|r_1\|\leq C)},
    \end{equation}
    where $r_1\sim\mathcal{N}(0,\tau^2 I)$ is the non-truncated version of $\epsilon_1$. 
\end{lemma}
\begin{proof}
    Since $\Delta_{\xi_1}=\cup_{i\neq j}\Delta_{\xi_1}^{i,j}$, it holds that
\[
    \mathbb{P}(\Delta_{\xi_1}\cap B_\delta(y)\neq \emptyset)\leq \sum_{i\neq j}\mathbb{P}(\Delta_{\xi_1}^{i,j}\cap B_\delta(y)\neq \emptyset) \leq L^2 \max_{i\neq j}\mathbb{P}(\Delta_{\xi_1}^{i,j}\cap B_\delta(y)\neq\emptyset).
\]
We let $\nu^{i,j} = \sigma_i\xi_1-\sigma_j\xi_1$ denote the normal vector (not normalized) of the hyperplane $\Delta_{\xi_1}^{i,j}$, and let $\phi^{i,j} = \sigma_{-i}y-\sigma_{-j}y$. Then, $\Delta_{\xi_1}^{i,j}\cap B_\delta(y)\neq \emptyset$ exactly when the distance from $y$ to the hyperplane $\Delta_{\xi_1}^{i,j}$ is less than $\delta$. Since $\Delta_{\xi_1}^{i,j}$ intersects the origin, the distance from $y$ to $\Delta_{\xi_1}^{i,j}$ is simply the absolute value of the inner product between $y$ and the normal vector $\nu^{i,j}$. 
\begin{align*}
    \mathbb{P}(\Delta_{\xi_1}^{i,j}\cap B_\delta(y)\neq \emptyset)&=\mathbb{P}(-\delta\|\nu^{i,j}\|\leq (y,\nu^{i,j})\leq \delta\|\nu^{i,j}\|)\\
    &=\mathbb{P}\left(-\delta\|\nu^{i,j}\|-(\phi^{i,j},x)\leq (\phi^{i,j},\epsilon_1)\leq \delta\|\nu^{i,j}\|-(\phi^{i,j},x)\right)
\end{align*}
Where we used $(\sigma_ia,b)=(a,\sigma_{-i}b)$ for any vectors $a,b$ and cyclic shift $\sigma_i$. Since $\epsilon_1$ is a truncated Gaussian, $\|\xi_1\|\leq \|x\|+C$ and hence, $\|\nu^{i,j}\|\leq 2(\|x\|+C)$. By \cref{result: gaussian bound},
\begin{align*}
    \mathbb{P}&(\Delta_{\xi_1}^{i,j}\cap B_\delta(y)\neq \emptyset) \\
    &= \mathbb{P}(-2\delta(C+\|x\|)-(\phi^{i,j},x)\leq (\phi^{i,j},\epsilon_1)\leq 2\delta(C+\|x\|)-(\phi^{i,j},x))\\
    &\leq\frac{\Phi\left(\frac{2\delta(C+\|x\|)-(\phi^{i,j},x)}{\tau\|\phi^{i,j}\|}\right)-\Phi\left(\frac{-2\delta(C+\|x\|)-(\phi^{i,j},x)}{\tau\|\phi^{i,j}\|}\right)}{\mathbb{P}(\|r_1\|\leq C)}
\end{align*}
We now use the mean value theorem together with the integral form $\Phi(b)-\Phi(a)=\int_a^b\Phi'(x)\mathrm{d}x$ to write $\Phi(b)-\Phi(a)= (b-a)\Phi'(c)$ for some $c\in (a,b)$, and then bound the Gaussian density function $\Phi'(c)$ by its global maximum value $1/\sqrt{2\pi}$ and maximize over $i\neq j\in\{1,\dots,L\}$, to obtain
\[
    \mathbb{P}(\Delta_{\xi_1}\cap B_\delta(y)\neq \emptyset)
    \leq \frac{4L^2\delta(C+\|x\|)}{\sqrt{2\pi}\tau\min_{i\neq j}\|\phi^{i,j}\|\mathbb{P}(\|r_1\|\leq C)}.
\]
\end{proof}
We are finally ready to estimate $r_N(y,\epsilon,\delta)$. The following proposition summarizes our results.
\begin{proposition}\label{prop: rN bound}
    Let $(\Omega,\mathcal{F},\mathbb{P})$, $\mathcal{X}_N=(\xi_n)_{n=1}^N$ with $\xi_n=x+\epsilon_n$ and $f_N, f$ be as in \cref{setting: alignment} with $r_N$ as in~\cref{eq:rN}. Then, the following bound holds for any $\epsilon>0,\delta>0$ and non-periodic $y\in\mathbb{R}^L$:
    \begin{equation}
        r_N(y,\epsilon,\delta)\leq  \frac{\sqrt{eL^{-L}\log N}^L}{N}+\frac{64L^2\tau}{\sqrt{2\pi}\min_{i,j}\|\sigma_{-j}y-\sigma_{-i}y\|}\frac{\delta\log N}{N\epsilon^2}.
    \end{equation}
\end{proposition}
\begin{proof}
     For any events $A,B$, it holds that
     \begin{equation}\label{eq: set inequality}
     \mathbb{P}(A)\leq \mathbb{P}(B^c \cup (A\cap B))\leq \mathbb{P}(B^c)+\mathbb{P}(A\mid B)\mathbb{P}(B)
     \end{equation}
     Let $C>0$ be a positive number, which will later be chosen appropriately large. We will also use that $\mathbb{P}(\max_n\|\epsilon_n\|\geq C)=\mathbb{P}(\cup_n\{\|\epsilon_n\|\geq C\})\leq N\mathbb{P}(\|\epsilon_1\|\geq C)$ for any such $C$. Let $B=\{\max_n\|\epsilon_n\|\leq C\}$ and $A=\{\sup_{z\in B_\delta(y)}\|f_N(z)-f_N(y)\|\geq \epsilon\}$. Then~\cref{eq: set inequality} gives:
    \begin{align*}
        r_N(y&,\epsilon,\delta) = \mathbb{P}\left(\sup_{z\in B_\delta(y)}\|f_N(z)-f_N(y)\|\geq \epsilon\right)\\
        &\leq \mathbb{P}(\max_n\|\epsilon_n\|\geq C)+\mathbb{P}\left(\sup_{z\in B_\delta(y)}\|f_N(z)-f_N(y)\|\geq \epsilon\;\middle|\;\max_n\|\epsilon_n\|\leq C\right)\mathbb{P}(\max_n\|\epsilon_n\|\leq C)\\
        & \leq N\mathbb{P}(\|\epsilon_1\|\geq C)+\frac{2C\mathbb{P}(\Delta_{\xi_1}\cap B_\delta(y)\neq\emptyset\mid\|\epsilon_1\|\leq C)}{\epsilon^2N}\mathbb{P}(\|\epsilon_1\|\leq C)^{N},
    \end{align*}
    where we used~\cref{lemma: discontinuous sum} on the second term in the last step, since $f_N$ is a sum of piecewise constant, uniformly bounded random functions $\falign{\bullet}{\xi_n}$ conditioned on $\|\epsilon_n\|\leq C$. Next, we apply~\cref{lemma: discriminant bound}, which results in
    \begin{align*}
    q_N&(y,\epsilon,\delta)\\
    & \leq N\mathbb{P}(\|\epsilon_1\|\geq C)+\frac{8CL^2\delta(C+\|x\|)}{\tau\epsilon^2N\sqrt{2\pi}\min_{i,j}\|\sigma_{-j}y-\sigma_{-i}y\|\mathbb{P}(\|\epsilon_1\|\leq C)}\mathbb{P}(\|\epsilon_1\|\leq C)^{N},\\
    & \leq N\mathbb{P}(\|\epsilon_1\|\geq C)+\frac{8L^2(1+\|x\|/C)}{\tau\sqrt{2\pi}\min_{i,j}\|\sigma_{-j}y-\sigma_{-i}y\|}\frac{C^2\delta}{\epsilon^2N},
    \end{align*}
    where we used $\mathbb{P}(\|\epsilon_1\|\leq C)\leq 1$ to eliminate $\mathbb{P}(\|\epsilon_1\|\leq C)^{N-1}$ from the second term of the inequality. We now use~\cref{lemma: chernoff bound for chi2}, which shows
    \[
        r_N(y,\epsilon,\delta)\leq N\sqrt{e/L\tau^2}^LC^L\exp(-C^2/2\tau^2)+\frac{8L^2(1+\|x\|/C)}{\tau\sqrt{2\pi}\min_{i\neq j}\|\sigma_{-j}y-\sigma_{-i}y\|}\frac{C^2\delta}{N\epsilon^2}.
    \]
    Let $C=2\tau\sqrt{\log N}$ so that $C^2/(2\tau^2)=2\log N$. Then, for $N$ sufficiently large, say $N>M$ with $M$ independent of $\epsilon, \delta$, it holds that $\|x\|/C\leq 1$. Hence, 
    \begin{align*}
        r_N(y,\epsilon,\delta)&\leq  \frac{\sqrt{4eL^{-1}\log N}^L}{N}+\frac{64L^2\tau}{\sqrt{2\pi}\min_{i\neq j}\|\sigma_{-j}y-\sigma_{-i}y\|}\frac{\delta\log N}{N\epsilon^2}.
    \end{align*}
\end{proof}

\subsection{Combining the bounds} Combining~\cref{lemma: random bound}, \cref{prop: rN bound} and \cref{eq: chebyshev qn bound} gives the proof of~\cref{thm: iterated convergence}.
\begin{proof}[Proof of \cref{thm: iterated convergence}]
    We first use~\cref{prop: rN bound} and~\cref{eq: chebyshev qn bound} to bound equation~\cref{eq: inequality recursion}, for the special case when $\delta=\rho\epsilon$ with $\rho=\min\{0.99,\tfrac{1}{3C_a}\}<1$, we obtain:
    \begin{align*}
         &p_{N,k}(\epsilon) \leq p_{N,k-1}(\rho\epsilon) + q_N(f^{k-1}(y),\tfrac{\epsilon}{3C_a})+r_N(f^{k-1}(y),\tfrac{\epsilon}{3 C_a}, \rho\epsilon)\\
         &\leq p_{N,k-1}(\rho\epsilon) + \frac{\mathrm{Var}(\falign{f^{k-1}(y)}{\xi_1})}{(3C_a)^{-2}\epsilon^2 N} + \frac{\sqrt{4e\log N}^L}{L^{L/2}N}+\frac{576 \rho C_a L^2\tau\sqrt{2\pi}^{-1}(N\epsilon)^{-1}\log N}{\min_{i\neq j}\|\sigma_{-j}f^{k-1}(y)-\sigma_{-i}f^{k-1}(y)\|}\\
         &\leq p_{N,k-1}(\rho\epsilon) +\frac{A_1(\log N)^{L/2}}{N} + \frac{A_2(\log N)}{\epsilon N} + \frac{A_3}{\epsilon^2 N},
    \end{align*}
    Where we defined the constants $A_1 = \sqrt{4eL^{-1}}^L$, $A_2=\max_{\ell\leq k}192C_a L^2\tau\sqrt{2\pi}^{-1}(\min_{i\neq j}\|\sigma_{-i}f^{\ell-1}(y)-\sigma_{-j}f^{\ell-1}(y)\|)^{-1}$, and $A_3=\max_{\ell\leq k}9C_a^2\mathrm{Var}(\falign{f^{\ell-1}(y)}{\xi_1})$. We expand the recursion:
    \begin{align*}
        p_{N,k}(\epsilon) &\leq \frac{A_1(\log N)^{L/2}}{N}\sum_{\ell=0}^{k-1} 1  + \frac{A_2(\log N)}{\epsilon N}\sum_{\ell=0}^{k-1}\rho^{-\ell} + \frac{A_3}{\epsilon^2 N}\sum_{\ell=0}^{k-1}\rho^{-2\ell}\\
        &=k\frac{A_1(\log N)^{L/2}}{N} + \left(\frac{\rho^{-k}-1}{\rho^{-1}-1}\right)\frac{A_2(\log N)}{\epsilon N} + \left(\frac{\rho^{-2k}-1}{\rho^{-2}-1}\right)\frac{A_3}{\epsilon^2 N}\\
        &\leq k A_1\frac{(\log N)^{L/2}}{N} + \left(\left(\frac{\rho^{-k}-1}{\rho^{-1}-1}\right)A_2+\left(\frac{\rho^{-2k}-1}{\rho^{-2}-1}\right)\frac{A_3}{\log N}\right)\frac{\log N}{\epsilon^2N}        
    \end{align*}
    where we used $p_{N,0}(\epsilon)=0$ and $\epsilon\leq1$. Lastly, for  $N$ large enough, the $A_2$-contribution dominates the last term and hence, letting $C_1=kA_1$ and $C_2=2(\rho^{-k}-1)/(\rho^{-1}-1)A_2$ gives the desired bound.
 \end{proof}

\end{document}